\newcommand*\samethanks[1][\value{footnote}]{\footnotemark[#1]}
\begin{document}

\title{Faster Knapsack Algorithms via Bounded Monotone Min-Plus-Convolution\thanks{This work is part of the project TIPEA that has received funding from the European Research Council (ERC) under the European Unions Horizon 2020 research and innovation programme (grant agreement No.~850979).}}

\author{Karl Bringmann\thanks{Saarland University and Max Planck Institute for Informatics, Saarland Informatics Campus, Saarbr\"ucken, Germany.} \and Alejandro Cassis\samethanks}

\maketitle

\begin{abstract}
        We present new exact and approximation algorithms for 0-1-Knapsack and Unbounded Knapsack:
        \begin{itemize}
            \item \emph{Exact Algorithm for 0-1-Knapsack:} 0-1-Knapsack has known algorithms running in time $\widetilde{O}(n + \min\{n \cdot \mathrm{OPT}, n \cdot W, \mathrm{OPT}^2, W^2\})$ [Bellman '57], where $n$ is the number of items, $W$ is the weight budget, and $\mathrm{OPT}$ is the optimal profit. We present an algorithm running in time $\widetilde{O}(n + (W + \mathrm{OPT})^{1.5})$. This improves the running time in case $n,W,\mathrm{OPT}$ are roughly equal. 
            
            \item \emph{Exact Algorithm for Unbounded Knapsack:} Unbounded Knapsack has known algorithms running in time $\widetilde{O}(n + \min\{n \cdot p_{\max}, n \cdot w_{\max}, p_{\max}^2, w_{\max}^2\})$ [Axiotis, Tzamos '19, Jansen, Rohwedder '19, Chan, He '22], where $n$ is the number of items, $w_{\max}$ is the largest weight of any item, and $p_{\max}$ is the largest profit of any item. We present an algorithm running in time $\widetilde{O}(n + (p_{\max} + w_{\max})^{1.5})$, giving a similar improvement as for 0-1-Knapsack.
        
            \item \emph{Approximating Unbounded Knapsack with Resource Augmentation:} Unbounded Knapsack has a known FPTAS with running time $\widetilde{O}(\min\{n/\varepsilon, n + 1/\varepsilon^2\})$ [Jansen, Kraft '18]. We study \emph{weak} approximation algorithms, which approximate the optimal profit but are allowed to overshoot the weight constraint (i.e. resource augmentation). We present the first approximation scheme for Unbounded Knapsack in this setting, achieving running time $\widetilde{O}(n + 1/\varepsilon^{1.5})$. Along the way, we also give a simpler FPTAS with lower order improvement in the standard setting.
        \end{itemize}
        
        For all of these problem settings the previously known results had matching conditional lower bounds. We avoid these lower bounds in the approximation setting by allowing resource augmentation, and in the exact setting by analyzing the time complexity in terms of weight and profit parameters (instead of only weight or only profit parameters).
        
        Our algorithms can be seen as reductions to Min-Plus-Convolution on monotone sequences with bounded entries. These structured instances of Min-Plus-Convolution can be solved in time $O(n^{1.5})$ [Chi, Duan, Xie, Zhang '22] (in contrast to the conjectured $n^{2-o(1)}$ lower bound for the general case). We complement our results by showing reductions in the opposite direction, that is, we show that achieving our results with the constant 1.5 replaced by any constant $<2$ implies subquadratic algorithms for Min-Plus-Convolution on monotone sequences with bounded entries.
\end{abstract}

\newpage


\section{Introduction}

In this paper we present new exact and approximation algorithms for Knapsack problems.

\paragraph*{Exact Pseudopolynomial Algorithms for \UnboundedKnapsack{}}
In the \styleprob{Unbounded} \linebreak \styleprob{Knapsack} problem we are given a set of $n$ items, 
where each item has a weight $w_i$ and a profit $p_i$, along with a knapsack capacity $W$. 
The goal is to find a \emph{multiset} of items which maximizes the total profit and has total weight at most $W$.
The textbook dynamic programming algorithm for \UnboundedKnapsack{} due to Bellman~\cite{Bellman57} runs in time $O(n W)$ 
or in time $O(n \cdot\OPT)$, where $\OPT$ is the value of the optimal solution. 
Recent literature on \UnboundedKnapsack{} studies alternative parameters: Axiotis and Tzamos~\cite{AxiotisT19} and 
Jansen and Rohwedder~\cite{JansenR19} independently presented algorithms running in time $\widetilde{O}(w_{\max}^2)$ 
and $\widetilde{O}(p_{\max}^2)$,\footnote{We use the notation $\widetilde{O}(T) = \cup_{c > 0} O(T \log^c T)$ to 
supress polylogarithmic factors} where $w_{\max}$ is the largest weight of any item and $p_{\max}$ the largest profit 
of any item --- in general $w_{\max}$ could be much smaller than $W$ and $p_{\max}$ much smaller than $\OPT$, so these algorithms improve upon 
Bellman's algorithm for some parameter settings. Chan and He~\cite{ChanH22} presented further 
improvements\footnote{Note that we can assume that $n \leq w_{\max}$ without loss of generality since if there are multiple items with 
the same weight, we can keep only the one with the largest profit. Similarly, $n \leq p_{\max}$.} 
achieving time $\widetilde{O}(n \, w_{\max})$ and $\widetilde{O}(n \, p_{\max})$
Note that when $w_{\max} \approx p_{\max} \approx n$ all mentioned algorithms require at least quadratic time 
$\Omega(n^2)$. Can we overcome this quadratic barrier? In this paper we answer this positively by considering the 
combined parameter $w_{\max} + p_{\max}$. 

\begin{theorem}\label{thm:alg-unboundedknapsack}
    \UnboundedKnapsack{} can be solved in expected time $\widetilde{O}(n + (p_{\max} + w_{\max})^{1.5})$.
\end{theorem}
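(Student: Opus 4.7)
The plan is to reduce \UnboundedKnapsack{} to a small number of instances of min-plus convolution on monotone sequences with bounded entries, each solved in $\widetilde{O}(n^{1.5})$ by the algorithm of Chi, Duan, Xie, and Zhang. Write $m := p_{\max} + w_{\max}$. A standard preprocessing step (keep the highest-profit item per weight, and symmetrically the lightest item per profit) achieves $n \leq m$ in $\widetilde{O}(n)$ time; after this, all items have weights and profits in $[1, m]$.

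The key structural insight is a normalization against the densest item. Let $i^*$ be any item maximizing $p_i/w_i$, with weight $w^*$ and profit $p^*$, and write $P[w]$ for the optimal profit at weight budget $w$. For each residue class $r \in \{0, 1, \ldots, w^*-1\}$ modulo $w^*$, define $u_r[k] := P[kw^* + r] - k p^*$. Then $u_r$ is monotone non-decreasing in $k$ (since $P[w + w^*] \geq P[w] + p^*$ by appending a copy of $i^*$) and takes values in $[0, p^*] \subseteq [0, m]$ (since $P[w] \leq (p^*/w^*)\, w$, as $i^*$ has the best density). Thus the entire profit profile is captured, up to the trivial linear contribution of copies of $i^*$, by $w^*$ monotone sequences with entry-range $O(m)$.

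Given this representation, I would compute the $u_r$ via a doubling self-convolution: start from a sequence of length $O(m)$ encoding the individual items (with entries in $[0, m]$), then perform $O(\log m)$ rounds of min-plus convolution to account for all item multiplicities, invoking Chi--Duan--Xie--Zhang as a black box for $\widetilde{O}(m^{1.5})$ per round. Finally read off $P[W] = \max_r \bigl( u_r[k_r] + k_r p^* \bigr)$ with $k_r = \lfloor (W - r)/w^* \rfloor$. As an aside, when $W$ is so large that the $u_r$ have saturated to their limits $u_r[\infty]$, those limits admit a direct shortest-path computation on a ``residue graph'' (vertices $=$ residues mod $w^*$, edges $=$ items with non-negative cost $w_i p^* - w^* p_i$) in $\widetilde{O}(n + m)$, bypassing the doubling entirely in that regime.

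The main obstacle is keeping the doubling inside the bounded-monotone regime. A naive self-convolution of $P$-profiles has entries growing to $\Omega(m^2)$, and the obvious integer normalization $P[w] - \lfloor w/w^* \rfloor p^*$ fails to be monotone (it dips by up to $p^*$ at every multiple of $w^*$). Working throughout with the residue-indexed sequences $u_r$ keeps entries bounded by $p^* \leq m$, but convolving two such representations naively splits into $(w^*)^2$ per-pair sub-convolutions; making this step run in $\widetilde{O}(m^{1.5})$ rather than $\widetilde{O}(m^{2.5})$ is the crux of the proof, likely via a clever interleaving of the residues into a single length-$O(m)$ monotone sequence so that one application of Chi--Duan--Xie--Zhang per doubling suffices.
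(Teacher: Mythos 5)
Your proposal takes a genuinely different route from the paper, but it contains an acknowledged gap at the crucial step, and I don't see how to close it with the tools you've set up.

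Your structural observations are correct: with $i^*$ the densest item, $u_r[k] := P[kw^*+r] - kp^*$ is monotone non-decreasing in $k$ (append a copy of $i^*$) and bounded in $[0,p^*]$ (density bound). The large-$W$ regime via a residue graph is also a reasonable direction. However, the plan breaks down exactly where you flag the "crux." First, the sequences $u_r$ are indexed by $k \in [0, W/w^*]$; after a Chan--He--style preprocessing to bound $W$, one has $W = \Theta(w_{\max}^3)$, so $W/w^*$ can be $\Omega(m^2)$. The interleaved sequence therefore has length $\Theta(W) = \Theta(m^3)$, not $O(m)$, and a single Chi--Duan--Xie--Zhang call on it (even granting monotonicity, which as you note the interleaving destroys) would cost far more than $m^{1.5}$. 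You'd need a windowing/pruning argument to cut the relevant index range down to $O(m)$, and that is precisely the missing ingredient. Second, the structural bound $P[w+w^*] \ge P[w] + p^*$ uses unbounded multiplicities; the intermediate profiles $\calP_i$ with the item-count cap $\|x\|_1 \le 2^i$ that arise in a doubling scheme need not satisfy it (adding $i^*$ may violate the cap), so even the monotonicity and boundedness of the normalized intermediates is not established.

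The paper's own proof fills precisely these holes with different machinery: it never uses a residue decomposition. Instead it binary-searches over a guess $\alpha$ for $\OPT$, proves a Steinitz-lemma-based Splitting Lemma showing that any $\le 2^i$-item solution splits into two halves whose weights and profits are each within $O(\Delta)$ of half, and uses that to restrict each level $i$ to a weight window $[W2^{i-k} \pm O(\Delta)]$ with values clipped to a range of width $O(\Delta)$ around $\alpha 2^{i-k}$ (with $\Delta = p_{\max}+w_{\max}$). This makes every convolution a length-$O(\Delta)$, value-range-$O(\Delta)$ BMMaxConv instance, avoiding your length blowup entirely, and the clipping sidesteps the issue of tracking absolute normalized values through the DP. Your normalization idea is elegant and avoids binary search, but without a concrete mechanism to keep the convolutions in the bounded-monotone regime of the right size, the proof is incomplete.
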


This result is particularly interesting in light of recent fine-grained lower bounds for \UnboundedKnapsack{}. 
Indeed, for each previous result that we have mentioned above, a matching conditional lower bound is 
known~\cite{CyganMWW19, KunnemannPS17}. For example, \UnboundedKnapsack{} cannot be solved in time $O((n+W)^{2-\delta})$ 
for any constant $\delta > 0$ under a plausible hypothesis. Inspecting these conditional lower bounds, we observe 
that they construct hard instances where only the profit parameters or only the weight parameters are under control; 
one of the two must be very large to obtain a hardness reduction. We thus avoid these conditional lower bounds by 
considering the combined profit and weight parameter $w_{\max} + p_{\max}$.

\paragraph*{Exact Pseudopolynomial Algorithms for \Knapsack{}}

The \Knapsack{} problem is the variant of \UnboundedKnapsack{} where every input item can appear at most once in any solution. 
Bellman's algorithm also solves \Knapsack{} in time $O(n W)$ or $O(n\cdot \OPT)$. However, the landscape is more diverse when 
considering other parameters. In particular, it is open whether \Knapsack{} can be solved in time $\widetilde{O}(n + w_{\max}^2)$ 
or $\widetilde{O}(n + p_{\max}^2)$.\footnote{This gap is analogous to the case of \SubsetSum{} where we are given a set 
of numbers $X$ and a target number $t$. For the unbounded case, where the goal is to find whether a multiset of items in $X$ sums to $t$,
Jansen and Rohwedder~\cite{JansenR19} gave an algorithm in time $\widetilde{O}(n + u)$ where $u$ is the largest number in the input.
For the more standard ``0-1'' case where we ask for a subset of $X$ summing to $t$, the best known running times are 
$\widetilde{O}(n + t)$~\cite{Bringmann17,JinW19}, $O(nu)$~\cite{Pisinger99}, $\widetilde{O}(n + u^2/n)$ by combining~\cite{GalilM91} 
and~\cite{Bringmann17,JinW19}, and $\widetilde{O}(n + u^{3/2})$ by combining \cite{GalilM91} and \cite{Bringmann17,JinW19} and \cite{Pisinger99};
see also~\cite{BringmannW21, PRW21} for generalizations to $X$ being a multiset and related results.} 
\Cref{tab:running-times} shows a non-exhaustive list of 
pseudopolynomial-time algorithms for \Knapsack{} using different combinations of the parameters $n,w_{\max}, p_{\max}, W, \OPT$. 
Note that when all these parameters are bounded by $O(n)$, all existing algorithms require at least quadratic time~$\Omega(n^2)$. 
In this paper we show that by considering the combined weight and profit parameter $W + \OPT$ we can overcome this quadratic barrier.

\begin{theorem}\label{thm:alg-knapsack}
    There is a randomized algorithm for \Knapsack{} that runs in time $\widetilde{O}(n + (W + \OPT)^{1.5})$ and 
    succeeds with high probability.
\end{theorem}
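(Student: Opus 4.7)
The plan is to reduce \Knapsack{} to a sequence of bounded monotone $(\min,+)$-convolutions, each solvable in time $\widetilde{O}(M^{1.5})$ by Chi, Duan, Xie, and Zhang, where $M := W + \OPT$. For a set $S$ of items, define the profit vector $P_S[w] := \max\{\sum_{i\in T} p_i : T \subseteq S,\ \sum_{i\in T} w_i \le w\}$ for $w = 0, 1, \ldots, W$. It is non-decreasing with entries in $[0, \OPT]$, so after padding its length to $M$ it is a bounded monotone sequence. For any partition $S = A \uplus B$, $P_S = P_A \boxplus P_B$, where $\boxplus$ denotes $(\max,+)$-convolution under the ``$\le w$'' convention; after negation this is a $(\min,+)$-convolution of two bounded monotone sequences of length $O(M)$ and entries $O(M)$. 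The \Knapsack{} answer is $P_{\text{all}}[W]$, so the task reduces to computing $P_{\text{all}}$ by a short sequence of such convolutions.

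\textbf{Algorithm and running-time analysis.} First preprocess: for each distinct weight $w \in [1,W]$, keep only the $\lfloor W/w \rfloor$ items with largest profit (no feasible solution uses more). This takes $O(n)$ time and reduces the number of items to $n' = O(\sum_w W/w) = \widetilde{O}(W) \le \widetilde{O}(M)$. Group the surviving items by dyadic weight class $G_j := \{i : w_i \in [2^j, 2^{j+1})\}$, each with $n_j \le W/2^j$ items of weight within a factor $2$ of each other. Within each class, sort items by profit and run a balanced binary tree of Chi-Duan-Xie-Zhang convolutions to obtain $P_{G_j}$; then combine the $O(\log W)$ vectors $P_{G_j}$ by $O(\log W)$ further Chi-Duan-Xie-Zhang calls. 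Inside $G_j$ at tree level $k$, each of the $n_j/2^k$ subtrees contains $2^k$ items of comparable weight and (after profit-sorting) comparable profit, so its profit vector has length and entries bounded roughly geometrically in $k$. The per-level convolution costs then form a geometric series summing to $\widetilde{O}(M^{1.5})$; with the $O(\log W)$ top-level merges the total is $\widetilde{O}(n + (W+\OPT)^{1.5})$.

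\textbf{Main obstacle.} The trickiest point is that a subtree's maximum entry equals its total profit, which need not scale in a controlled way with its total weight, so a naive balanced tree can cause per-node convolution costs to blow up and destroy the geometric summation. Pairing items carefully within each weight class --- by profit-based sorting combined with random partitioning and concentration bounds --- is required so that every intermediate convolution genuinely lies in the bounded-monotone regime needed by Chi-Duan-Xie-Zhang. This randomized load-balancing step is also the natural source of the ``succeeds with high probability'' guarantee in \Cref{thm:alg-knapsack}.
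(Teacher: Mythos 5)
Your proposal correctly identifies the target --- reduce to bounded monotone $(\max,+)$-convolution --- and correctly identifies the crux as the ``main obstacle,'' but it does not resolve it; the paper's actual route is the Cygan et al.\ reduction (itself a $(\max,+)$-convolution version of Bringmann's \SubsetSum{} algorithm), and your sketch omits both of its essential ingredients.

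Concretely, your balanced-tree plan fails at the step ``each subtree contains $2^k$ items of comparable weight and (after profit-sorting) comparable profit.'' Sorting by profit and taking contiguous blocks of size $2^k$ does \emph{not} bound the ratio of profits inside a block --- a single block can span the whole profit range. More fundamentally, even if profits were comparable within a subtree, a subtree of $2^k$ items of weight $\approx 2^j$ and profit $\approx p$ has a profit vector of length $\approx 2^{j+k}$ and max entry $\approx 2^k p$, which is only a bounded-monotone instance when $p \approx 2^j$; nothing enforces this. Your proposed fix (``random partitioning and concentration bounds'') is a gesture at what is actually needed but is not carried out, and your plan also does not say how to compute the leaf profit vectors $P_{G_j}$ at reasonable cost in the first place --- a direct computation of the profit vector of a block of $2^k$ items is itself a \Knapsack{} subproblem.

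The paper (\Cref{lem:reduction-0-1knapsack-to-maxconv}) instead retraces~\cite{CyganMWW19}: it buckets items into groups $G_{(a,b)}$ by \emph{both} a dyadic profit class and a dyadic weight class, so that inside one group every item has $p \approx 2^a$ and $w \approx 2^b$; this is what makes the aspect ratio of every intermediate profit array controlled. Then it uses the two-level randomization from Bringmann's algorithm: randomly split each group into $z := \lceil\min\{W/2^{b-1},\OPT/2^{a-1}\}\rceil$ subgroups so that any fixed solution contributes only $\kappa = O(\log z)$ items to each subgroup w.h.p., and within each subgroup use color coding (random split into $\kappa^2$ buckets) so that the restricted profit array $\calP_{\cdot,0}$ of each bucket is a trivial one-item step function, which can then be merged by $\kappa^2$ short BMMaxConv calls. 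The subgroup arrays are then merged in a binary tree. Every convolution along the way is between monotone arrays whose index range ($\approx$ weight) and value range ($\approx$ profit) grow in lockstep because of the double dyadic bucketing; that is the invariant your single-parameter grouping cannot maintain. The geometric summation then gives $\widetilde{O}(T(W+\OPT))$ per group and $\widetilde{O}(n + T(W+\OPT))$ overall, and the ``high probability'' comes from union-bounding the layer-splitting and color-coding failure events.
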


Similar to the unbounded case, matching conditional lower bounds ruling out time $O((n+W)^{2-\delta})$ and 
$O((n+\OPT)^{2-\delta})$ for any $\delta > 0$ are known~\cite{CyganMWW19, KunnemannPS17}. These lower bounds 
construct hard instances where only one of $W,\OPT$ is under control, the other needs to be very large. 
We thus avoid these lower bounds by considering the combined weight and profit parameter $W + \OPT$.

\bgroup
\def\arraystretch{1.2}
\begin{table}[t]
    \caption{Non-exhaustive list of pseudopolynomial-time algorithms for \Knapsack{}.} \label{tab:running-times}
    \centering
    \begin{tabular}{|ll|}
        \hline
        \thead{Running Time} & \thead{Reference} \\
        \hline
        \makecell{$O(n \cdot \min\{W, \OPT\})$} & \cite{Bellman57} \\
        \makecell{$O(n \cdot p_{\max} \cdot w_{\max})$} & \cite{Pisinger99} \\
        \makecell{$\widetilde{O}(n + w_{\max} \cdot W)$} & \cite{KellererP04,BateniHSS18,AxiotisT19} \\
        \makecell{$\widetilde{O}(n + p_{\max} \cdot W)$} & \cite{BateniHSS18} \\
        \makecell{$\widetilde{O}(n \cdot \min\{w_{\max}^2, p_{\max}^2\})$} & \cite{AxiotisT19} \\
        \makecell{$\widetilde{O}((n + W) \cdot \min\{w_{\max}, p_{\max}\})$} & \cite{BateniHSS18} \\
        \makecell{$O(n + \min\{w_{\max}^3, p_{\max}^3\})$} & \cite{PRW21} \\
        \makecell{$\widetilde{O}(n + (W + \OPT)^{1.5})$} & \textbf{\Cref{thm:alg-knapsack}} \\
        \hline
    \end{tabular}
\end{table}
\egroup

\paragraph*{Approximation Schemes for \UnboundedKnapsack{}}

Since \UnboundedKnapsack{} is well known to be NP-hard, it is natural to study approximation algorithms. 
In particular, a \emph{fully polynomial-time approximation scheme} (FPTAS) given $0 < \varepsilon < 1$ computes a 
solution $x$ with total weight $w(x) \leq W$ and total profit $p(x) \geq (1-\varepsilon)\OPT$ in time $\poly(n, 1/\varepsilon)$.
The first FPTAS for \UnboundedKnapsack{} was designed by Ibarra and Kim in 1975~\cite{IbarraK75} and runs 
in time $\widetilde{O}(n + (1/\varepsilon)^4)$. In 1979 Lawler~\cite{Lawler79} improved the running time to 
$\widetilde{O}(n + (1/\varepsilon)^3)$. This was the best known until Jansen and Kraft in 2018~\cite{JansenK18} 
presented an FPTAS running in time $\widetilde{O}(n + (1/\varepsilon)^2)$. This algorithm has a matching conditional 
lower ruling out time $O((n + 1/\varepsilon)^{2-\delta})$ for any $\delta > 0$~\cite{CyganMWW19, KunnemannPS17,MuchaW019}.

We present a new FPTAS for \UnboundedKnapsack{} which is (as we believe) simpler than Jansen and Kraft's, and 
has a lower order improvement in the running time:

\begin{theorem}\label{thm:fptas}
    \UnboundedKnapsack{} has an FPTAS with running time $\widetilde{O}\bigg(n + \frac{(1/\varepsilon)^2}{2^{\Omega(\sqrt{\log(1/\varepsilon)})}}\bigg)$.
\end{theorem}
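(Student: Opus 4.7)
The plan is to obtain the FPTAS by reducing to an exact \UnboundedKnapsack{} instance with small parameters and then invoking a subquadratic algorithm for \emph{general} (not bounded monotone) $(\min,+)$-convolution inside the exact subroutine.

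First I would perform the standard FPTAS scaling. In $\widetilde{O}(n)$ time compute a constant-factor estimate of $\OPT$ (e.g.\ using the best single item that fits) and rescale profits so they become integers in $\{0,1,\dots,O(1/\varepsilon)\}$. Items that round to profit $0$ contribute at most $\varepsilon\OPT$ in total and can be dispatched greedily via the best profit-to-weight ratio. The ``unbounded'' deduplication rule — keep only the lightest item of each rounded profit value — shrinks the item set to size $O(1/\varepsilon)$, so we are left with an exact \UnboundedKnapsack{} instance with $n = p_{\max} = O(1/\varepsilon)$ whose optimum is within a $(1-\varepsilon)$-factor of the original $\OPT$.

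Next I would solve this reduced instance with the $\widetilde{O}(p_{\max}^2)$ algorithm of Axiotis--Tzamos~\cite{AxiotisT19} / Jansen--Rohwedder~\cite{JansenR19}, which (via a proximity / doubling argument on the profit-indexed DP $f[p]=\min\{w(x):p(x)=p\}$) reduces to polylogarithmically many calls of $(\min,+)$-convolution on sequences of length $O(p_{\max})=O(1/\varepsilon)$. Into this black box I would plug a subquadratic algorithm for arbitrary integer $(\min,+)$-convolution, e.g.\ the Bremner--Chan--Williams algorithm running in time $O(m^2 / 2^{\Omega(\sqrt{\log m})})$, in place of the textbook $O(m^2)$ routine. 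This directly yields the claimed running time $\widetilde{O}\bigl(n + (1/\varepsilon)^2 / 2^{\Omega(\sqrt{\log(1/\varepsilon)})}\bigr)$. Crucially, we do \emph{not} attempt to use the $O(m^{1.5})$ algorithm of Chi--Duan--Xie--Zhang here, since the intermediate convolution arguments do not inherit the bounded monotone structure it requires — this is precisely why the standard setting saves only a subpolynomial factor, whereas the resource-augmented variant will later reach $\widetilde{O}(1/\varepsilon^{1.5})$.

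The main obstacle is verifying that the reduction actually produces an exact instance with both $n$ and $p_{\max}$ bounded by $O(1/\varepsilon)$, so that the internal $(\min,+)$-convolutions really have length $O(1/\varepsilon)$, and that the $(1-\varepsilon)$-approximation cleanly survives the scaling, greedy handling of rounded-to-zero items, and deduplication. Once the reduction is in place, the subpolynomial speedup is essentially cost-free from using $(\min,+)$-convolution as a black box, which is also why the resulting FPTAS is arguably simpler than Jansen--Kraft~\cite{JansenK18}.
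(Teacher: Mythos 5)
Your high-level plan — reduce to an exact \UnboundedKnapsack{} instance with polynomially bounded parameters, then run Axiotis--Tzamos / Jansen--Rohwedder with a subquadratic \emph{general} $(\min,+)$-convolution subroutine — is in the right spirit, and your observation that only the subpolynomial speedup (not the $n^{1.5}$ one) is available in the strong setting is correct. But the first step has a genuine gap: the reduction does not produce an instance with $p_{\max} = O(1/\varepsilon)$ while preserving a $(1-\varepsilon)$-approximation.

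To bound the scaled maximum profit by $O(1/\varepsilon)$ you need a scaling factor $K \geq \varepsilon\,p_{\max}$; to keep each item's flooring loss an $\varepsilon$-fraction of its profit you need $K \leq \varepsilon\,p_{\min}$. After removing cheap items (profit $\leq \varepsilon\OPT$), one is left with $p_{\max}/p_{\min}$ as large as $\Theta(1/\varepsilon)$, so no single $K$ satisfies both. Taking $K = \Theta(\varepsilon\OPT)$ gives $p_{\max}' = O(1/\varepsilon)$, but each of the up-to-$1/\varepsilon$ items in the optimal solution may lose up to $K$, for a total loss of $\Theta(\OPT)$ — a constant fraction. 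Taking $K = \Theta(\varepsilon^2\OPT)$ instead keeps the loss at $O(\varepsilon\OPT)$, but then $p_{\max}' = \Theta(1/\varepsilon^2)$, and the profit-indexed doubling DP costs $\widetilde O\bigl((1/\varepsilon)^4/2^{\Omega(\sqrt{\log(1/\varepsilon)})}\bigr)$, far above target. Your claim that items rounding to profit $0$ "contribute at most $\varepsilon\OPT$ in total" is also false in general — the whole optimum can consist of such items. (Minor: "the best single item that fits" is not a constant-factor estimate of $\OPT$ for \emph{Unbounded} Knapsack; one must use the greedy packing of the best-ratio item, as in the paper's $2$-approximation.)

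The paper avoids this tension by \emph{not} collapsing to one exact instance. After the preprocessing of \Cref{lem:preprocessing-profits-and-weights}, it computes the profit array by repeated squaring, $\calP_i = \calP_{i-1}\oplus\calP_{i-1}$ (\Cref{lem:splitting-lemma}), and replaces each exact $(\max,+)$-convolution by Chan's multi-scale approximate convolution (\Cref{lem:max-conv-approx}). The rounding scale inside that subroutine is adapted per profit range $p^*$ (rounding to multiples of $\varepsilon p^*$), which is precisely what resolves the $p_{\max}/p_{\min}=\Theta(1/\varepsilon)$ mismatch while keeping every internal convolution on sequences of length $O(1/\varepsilon)$. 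This yields \Cref{thm:reduction-fptas-to-minconv}: an FPTAS in time $\widetilde O(n + T(1/\varepsilon))$ given a $T(n)$-time \MaxConv{} algorithm, from which the claimed bound follows by Bremner et al.\ and Williams. To salvage your "reduce to exact and solve" framing you would need a Lawler-style bucketed decomposition with per-scale DPs that are combined afterwards — which is effectively re-deriving the multi-scale convolution — and you currently supply no such argument.
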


Bringmann and Nakos~\cite{BringmannN21} recently gave an FPTAS for the related \SubsetSum{} problem which achieves the same running time.

\paragraph*{Weak Approximation for Unbounded Knapsack}

Motivated by the matching upper and conditional lower bounds for FPTASs for \UnboundedKnapsack{}, we
study the relaxed notion of \emph{weak approximation} as coined in~\cite{MuchaW019}: we relax the weight constraint
and seek a solution $x$ with total weight $w(x) \leq (1+\varepsilon) W$ and total profit
$p(x) \geq (1-\varepsilon)\OPT$. Note that $\OPT$ is still the optimal value of any solution with weight at most $W$.
This can be interpreted as bicriteria approximation (approximating both the weight and profit constraint) 
or as resource augmentation (the optimal algorithm is allowed weight $W$ while our algorithm is allowed a slightly 
larger weight of $(1+\varepsilon)W$). All of these are well-established relaxations of the standard 
(=strong\footnote{From now on, by ``strong'' approximation we mean the standard (non-weak) notion of approximation.}) 
notion of approximation. Such weaker notions of approximation are typically studied
when a PTAS for the strong notion of approximation is not known. More generally, studying these weaker notions is justified
whenever there are certain limits for strong approximations, to see whether these limits can be overcome by relaxing the notion of approximation.
In particular, we want to understand whether this relaxation can overcome the conditional lower bound 
ruling out time $O((n + 1/\varepsilon)^{2-\delta})$. For the related \SubsetSum{} problem this question has been
resolved positively: Bringmann and Nakos~\cite{BringmannN21} conditionally ruled out strong approximation algorithms in time 
$O((n + 1/\varepsilon)^{2-\delta})$ for any $\delta > 0$, but Mucha, W\k{e}grzycki and W\l{}odarczyk~\cite{MuchaW019} 
designed a weak FPTAS in time $\widetilde{O}(n + (1/\varepsilon)^{5/3})$.

In this paper, we give a positive answer for \UnboundedKnapsack{}:

\begin{theorem}\label{thm:weak-fptas}
    \UnboundedKnapsack{} has a weak approximation scheme running in expected time 
    $
        \widetilde{O}(n + (\tfrac{1}{\varepsilon})^{1.5})
    $.
\end{theorem}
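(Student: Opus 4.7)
The strategy is to reduce the weak-approximation task to an exact \UnboundedKnapsack{} instance in which $p_{\max}$, $w_{\max}$ and the budget $W$ are all $O(1/\varepsilon)$, and then invoke Theorem~\ref{thm:alg-unboundedknapsack}. Since the exact algorithm then runs in time $\widetilde{O}((1/\varepsilon)^{1.5})$ on the rounded instance, plus $O(n)$ for preprocessing, this matches the target $\widetilde{O}(n+(1/\varepsilon)^{1.5})$.

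\paragraph*{Rounding and the weight guarantee.}
First I would compute a constant-factor estimate $\widehat{\OPT}\in[\OPT/2,\OPT]$ in $O(n)$ time via the standard density-based greedy, and discard items with $w_i>W$ (infeasible) as well as items with $p_i<\varepsilon\widehat{\OPT}/n$ (negligible, as their combined contribution is only $O(\varepsilon\OPT)$). Setting $\delta_w:=\varepsilon W/K$ and $\delta_p:=\varepsilon\widehat{\OPT}/K$ for a suitable constant $K$, I define rounded items $\tilde w_i:=\lceil w_i/\delta_w\rceil$ (rounded \emph{up}) and $\tilde p_i:=\lfloor p_i/\delta_p\rfloor$ (rounded \emph{down}), and set $\tilde W:=\lceil W/\delta_w\rceil$. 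By construction $\tilde p_{\max}$, $\tilde w_{\max}$ and $\tilde W$ are all $O(1/\varepsilon)$. Any multiset $x$ feasible in the rounded instance satisfies $\sum_i x_i w_i\le\delta_w\sum_i x_i\tilde w_i\le\delta_w\tilde W\le(1+\varepsilon)W$, which uses exactly the resource-augmentation slack allowed by weak approximation.

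\paragraph*{Profit guarantee and main obstacle.}
The delicate step is to argue that an optimum $\tilde x^*$ of the rounded instance has original profit at least $(1-\varepsilon)\OPT$. Since rounding profits down loses at most $\delta_p$ per multiset member, the aggregate loss is $\|\tilde x^*\|_1\cdot\delta_p$, so I need a near-optimal bundled multiset whose support has size $O(1/\varepsilon)$. The obstruction is that a general optimum of \UnboundedKnapsack{} may use up to $W/w_{\min}$ copies, which is not a priori $O(1/\varepsilon)$. I would overcome this by first replacing items of profit $p_i\le\varepsilon\widehat{\OPT}$ by copies of the densest such item: a standard density argument shows this costs at most $O(\varepsilon\OPT)$ of profit. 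After the replacement every used distinct item has profit $\ge\varepsilon\widehat{\OPT}$, so at most $O(1/\varepsilon)$ of them appear in any near-optimum; and a further power-of-two bundling of the multiplicity of the single ``densest small-profit'' item brings its contribution down to $O(\log(1/\varepsilon))$ bundled items. Thus the total support in the bundled rounded instance is $O(1/\varepsilon)$ up to logarithmic factors, the profit lost to rounding is $\widetilde{O}(\varepsilon\widehat{\OPT})\le \widetilde{O}(\varepsilon\OPT)$, and rescaling $\varepsilon$ by a polylogarithmic factor yields a genuine $(1-\varepsilon)$-approximation. The main technical hurdle is exactly this ``collapse-small-profit-items-then-bundle'' step, which is what makes the bound on $\|\tilde x^*\|_1$---and therefore the cumulative profit rounding error---affordable.
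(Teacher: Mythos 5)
Your proposal takes a genuinely different route from the paper. You aim to produce a \emph{single} rounded \UnboundedKnapsack{} instance with $p_{\max}, w_{\max}, W = O(1/\varepsilon)$ and then invoke the exact algorithm of \Cref{thm:alg-unboundedknapsack} on it. The paper never constructs such an instance: after the preprocessing that removes cheap and light items it keeps the repeated-squaring recursion $\calP_i = \calP_{i-1} \oplus \calP_{i-1}$ for $O(\log(1/\varepsilon))$ levels, and replaces each $(\max,+)$-convolution by the weakly approximate convolution of \Cref{lem:max-conv-weak-approx}, which rounds weights and profits \emph{inside each scale} $w^* \approx 2^j w_{\min}$. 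Because within scale $w^*$ one only rounds steps $(w,p)$ with $w \in [w^*/2, w^*]$ onto a grid of size $\varepsilon w^*$, the per-step weight error is $O(\varepsilon w)$, i.e.\ multiplicative, and the $O(\log(1/\varepsilon))$ levels compose as $(1+\varepsilon')^{O(\log(1/\varepsilon))}$.

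There is a genuine gap in your weight-rounding step. You round all weights onto a single grid $\delta_w = \Theta(\varepsilon W)$ (forced by requiring $\tilde W = O(1/\varepsilon)$), which incurs an \emph{additive} error of up to $\delta_w$ per selected copy. After the collapse-small-profit step, a near-optimal multiset still uses $\Theta(1/\varepsilon)$ copies with multiplicity, so the accumulated slack from the ceilings is $\Theta(\delta_w/\varepsilon) = \Theta(W)$. You verify only the forward direction (that a rounded-feasible $x$ has true weight $\le (1+\varepsilon)W$); you never verify that $x^*$ remains feasible, i.e.\ that $\sum_i x^*_i \tilde w_i \le \tilde W$, and in general it does not. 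Concretely, take $\varepsilon = 0.1$, $W = 100$, $\delta_w = 10$, $\tilde W = 10$, and an optimum consisting of $9$ copies of one item of weight $11$ (true weight $99 \le W$): each copy rounds to $\tilde w = 2$, so at most $5$ copies fit in the rounded instance, and the rounded optimum is about $55\%$ of $\OPT$. Enlarging $\tilde W$ to admit $x^*$ pushes $\delta_w \tilde W$ to $\Theta(W)$ above $W$, a constant-factor weight overshoot rather than a $(1+\varepsilon)$ one. To keep the cumulative slack at $O(\varepsilon W)$ you would need $\delta_w = O(\varepsilon^2 W)$, hence $\tilde W = \Omega(1/\varepsilon^2)$, which destroys the target running time. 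The multi-scale, per-convolution rounding of \Cref{lem:max-conv-weak-approx} is exactly the mechanism the paper uses to side-step this additive accumulation, and your reduction has no substitute for it.
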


Our theorem gives reason to believe that resource augmentation indeed makes the problem easier. Specifically, 
obtaining a strong approximation scheme with the same running time as our weak one would refute 
the existing conditional lower bound for strong approximation.

\paragraph*{Min-Plus-Convolution}

All conditional lower bounds mentioned above are based on a hypothesis about the \MinConv{} problem:
Given sequences $A, B \in \Z^n$ compute their $(\min,+)$-convolution,
which is the sequence $C \in \Z^{2n}$ with $C[k] = \min_{i+j=k} A[i] + B[j]$.\footnote{If we replace the $\min$ by a $\max$, 
we obtain the \MaxConv{} problem, which is equivalent to \MinConv{} after negating the sequences. Therefore, we will 
sometimes use these two names interchangeably.}
The \MinConv{} problem can be trivially solved in time $O(n^2)$. This can be improved to time 
$n^2/2^{\Omega(\sqrt{\log n})}$ via a reduction to $(\min,+)$-matrix product due to Bremner et al.~\cite{BremnerCDEHILPT14},
and using Williams' algorithm for the latter~\cite{Williams18} (which was derandomized later by Chan and Williams~\cite{ChanW21}).
The lack of truly subquadratic algorithms despite considerable effort has led researchers to postulate the \MinConv{} hypothesis, 
namely that \MinConv{} cannot be solved in time $O(n^{2-\delta})$ for any constant $\delta > 0$~\cite{CyganMWW19,KunnemannPS17}.
Many problems are known to have conditional lower bounds from the \MinConv{}
hypothesis, see, e.g.,~\cite{BackursIS17,ChanH19,CyganMWW19,JansenR19,KunnemannPS17,LaberRC14}.

Central to our work is a reduction from \MinConv{} to \UnboundedKnapsack{} shown independently by Cygan et al.~\cite{CyganMWW19} 
and K\"{u}nnemann et al.~\cite{KunnemannPS17}. In particular, they showed that if \UnboundedKnapsack{} on $n$ items 
and $W = O(n)$ can be solved in subquadratic time, then \MinConv{} can be 
solved in subquadratic time. This reduction immediately implies matching conditional lower bounds for the previously known exact algorithms 
with running times $O(n W)$, $\widetilde{O}(w_{\max}^2)$ and $\widetilde{O}(n \, w_{\max})$, as mentioned earlier.

A small modification of this reduction extends to the dual case, i.e., an exact subquadratic-time algorithm for \UnboundedKnapsack{}
with $\OPT = O(n)$ would result in a subquadratic-time algorithm for \MinConv{}.
This establishes matching conditional lower bounds for the algorithms in time $O(n \cdot \OPT)$, 
$\widetilde{O}(p_{\max}^2)$ and $\widetilde{O}(n \, p_{\max})$. Moreover, by setting $\varepsilon = \Theta(1/\OPT)$,
an FPTAS for \UnboundedKnapsack{} would yield an exact algorithm for \MinConv{}, establishing that the $\widetilde{O}(n + (1/\varepsilon)^2)$-time
FPTAS is conditionally optimal. This last observation was pointed out in~\cite{MuchaW019}.

In fact, the reduction due to Cygan et al.~\cite{CyganMWW19} is even an equivalence, showing
that \UnboundedKnapsack{} is intimately connected to \MinConv{}. The same reduction is also known 
for \Knapsack{}~\cite{CyganMWW19,KunnemannPS17}, so a similar discussion applies to \Knapsack{}.

\paragraph*{Bounded Monotone Min-Plus-Convolution}

Despite its postulated hardness, there are restricted families of instances of \MinConv{} for which
subquadratic algorithms are known, see e.g.~\cite{BringmannGSW19,BussieckHWZ94,ChanL15,ChiDXZ22}. Central to this paper is the case 
when the input sequences $A, B \in \Z^n$ are monotone non-decreasing and have entries bounded by $O(n)$; we call this task \BMMinConv{}.

In a celebrated result, Chan and Lewenstein gave an algorithm for \BMMinConv{} that runs in expected time $O(n^{1.859})$~\cite{ChanL15}. 
As a big hammer, their algorithm uses the famous Balog-Szemer\'edi-Gowers theorem from additive combinatorics.
Very recently, Chi, Duan, Xie and Zhang showed how to avoid this big hammer and improved the running time to 
expected $\widetilde O(n^{1.5})$~\cite{ChiDXZ22}.

All of our results mentioned so far use this \BMMinConv{} algorithm as a subroutine.
That is, our algorithms are reductions from various Knapsack problems to \BMMinConv{}.

\paragraph*{Equivalence with Bounded Monotone Min-Plus-Convolution}

We complement our results by showing reductions in the opposite direction: Following the same 
chain of reductions as in~\cite{CyganMWW19,KunnemannPS17} but starting from bounded monotone instances of
$(\min,+)$-convolution, we reduce \BMMinConv{} to $O(n)$ instances of \UnboundedKnapsack{} with $O(\sqrt{n})$ 
items each, where it holds that $w_{\max},p_{\max},W$ and $\OPT$ are all bounded 
by $O(\sqrt{n})$. Instantiating this reduction for the exact and approximate setting, we show the following theorem.

\begin{restatable}[Equivalence]{theorem}{equivalence}\label{thm:equivalence}
    For any problems $A$ and $B$ from the following list, if $A$ can be solved in time $\widetilde{O}(n^{2-\delta})$
    for some $\delta > 0$, then $B$ can be solved in randomized time $\widetilde{O}(n^{2-\delta/2})$:
    \begin{enumerate}
        \item \BMMaxConv{} on sequences of length $n$
        \item \UnboundedKnapsack{} on $n$ items and $W, \OPT = O(n)$
        \item \Knapsack{} on $n$ items and $W, \OPT = O(n)$
        \item Weak $(1 + \varepsilon)$-approximation for \UnboundedKnapsack{} on $n$ items and $\varepsilon = \Theta(1/n)$
    \end{enumerate}
\end{restatable}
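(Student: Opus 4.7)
The statement is a two-way equivalence (up to a loss of $\delta/2$ in the exponent) among the four problems, so it suffices to exhibit, for each pair, reductions in both directions. The easy direction---from \BMMaxConv{} to any of the three Knapsack variants---follows almost immediately from the paper's earlier results: the algorithms of \Cref{thm:alg-knapsack,thm:alg-unboundedknapsack,thm:weak-fptas} all operate by reducing their input to polylogarithmically many calls to \BMMinConv{} (equivalent to \BMMaxConv{} via negation) on sequences of length $O(W + \OPT)$ or $O(1/\varepsilon)$. Replacing the $\widetilde O(n^{1.5})$ \BMMinConv{} subroutine by a hypothetical $\widetilde O(n^{2-\delta})$ one yields each Knapsack variant in time $\widetilde O(n^{2-\delta})$ under the stated parameter regime, comfortably meeting the target $\widetilde O(n^{2-\delta/2})$.

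The harder direction is the technical content of the theorem. The plan is to reduce one \BMMaxConv{} instance of length $N$ to $O(N)$ Knapsack sub-instances, each of size and all parameters $O(\sqrt{N})$; an $\widetilde O(n^{2-\delta})$ Knapsack algorithm then solves the whole batch in total time $\widetilde O(N \cdot (\sqrt{N})^{2-\delta}) = \widetilde O(N^{2-\delta/2})$. The reduction proceeds in two stages. First, partition each of the monotone sequences $A, B$ of length $N$ with entries bounded by $O(N)$ into $O(\sqrt{N})$ \emph{position-value blocks} that each span at most $\sqrt{N}$ consecutive indices and have value range at most $\sqrt{N}$. Such a partition is obtained by cutting into position blocks of length $\sqrt{N}$ and further subdividing any block whose value range exceeds $\sqrt{N}$; since the total value range of $A$ is $O(N)$, these additional cuts contribute only $O(\sqrt{N})$ blocks in aggregate (a telescoping $\sum_\ell R_\ell \leq N$ argument). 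For each pair of blocks $(I, J)$ from $A$ and $B$, subtracting the block minima yields a sub-\MaxConv{} instance of length $O(\sqrt{N})$ with values in $[0, O(\sqrt{N})]$; the original output $C[k]$ is recovered as the maximum over all such $(I, J)$ of the sub-\MaxConv{} value plus the subtracted offsets. Second, reduce each sub-\MaxConv{} instance to the target Knapsack variant using the chain of \MinConv{}-to-Knapsack reductions of Cygan et al.~\cite{CyganMWW19} and K\"unnemann et al.~\cite{KunnemannPS17}, which turn a length-$m$ \MaxConv{} instance with values in $[0, M]$ into an \UnboundedKnapsack{} or \Knapsack{} instance with $O(m + M)$ items and all parameters $O(m + M)$; for $m, M = O(\sqrt{N})$ this delivers the required sub-instances.

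For the weak-approximation variant one additional step is needed: pad each \UnboundedKnapsack{} sub-instance with zero-profit dummy items to bring its item count up to $n = c\sqrt{N}$ for a sufficiently large constant $c$. Then, for $\varepsilon = \Theta(1/n)$, the tolerances $\varepsilon W$ and $\varepsilon \OPT$ are strictly below $1$, so by integrality of weights and profits any weak $(1+\varepsilon)$-approximation is forced to return the exact sub-\MaxConv{} value. The main obstacle in the whole argument is the value-aware block decomposition: partitioning by position alone leaves sub-instance values as large as $\Theta(N)$ and blows up the Knapsack parameters, so one must simultaneously control position and value ranges using monotonicity together with the boundedness of $A$ and $B$, via the telescoping argument. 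A secondary subtlety is calibrating the padding for the weak-approximation reduction so that the exactness condition $\varepsilon \max(W, \OPT) < 1$ holds while the per-instance cost remains $\widetilde O((\sqrt{N})^{2-\delta})$.
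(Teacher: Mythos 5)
Your high-level architecture is right, and the block decomposition step (partitioning $A,B$ into $O(\sqrt N)$ blocks that are short in both position and value, via the telescoping value-range argument) matches the paper's Lemma~\ref{lem:maxconv-to-decision}. But there is a genuine gap in the second stage: you claim that the Cygan et al.~/~K\"unnemann et al.~chain ``turns a length-$m$ \MaxConv{} instance with values in $[0,M]$ into an \UnboundedKnapsack{} instance with $O(m+M)$ items and all parameters $O(m+M)$.'' No such one-shot, linear-size reduction from the \emph{search} problem \MaxConv{} to a single Knapsack instance exists. What Cygan et al.~actually give is (a) a reduction from \MaxConv{} to the \emph{decision} problem \MaxConvUpperBound{}, which itself costs a square-root loss ($n$ calls on instances of size $\sqrt n$) via exactly the block decomposition you are already performing in your first stage, followed by (b) a linear reduction \MaxConvUpperBound{} $\to$ \SuperAdditivity{} $\to$ \UnboundedKnapsack{} that hands you a \emph{single decision bit} (is $\OPT \ge D+1$?), not a length-$O(m)$ convolution output. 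So if you take your statement literally, you are invoking a reduction that does not exist; and if you instead unroll the actual Cygan et al.~chain inside each block pair, you pay the $\sqrt{\cdot}$ loss twice and land at exponent $2-\delta/4$, not $2-\delta/2$.

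What is missing to close the argument is precisely the machinery that bridges computation and decision: you need an outer binary search over the candidate output sequence $C$, reducing the task to $O(\log N)$ rounds of the decision question ``is $C[k]\ge\max_{i+j=k}A[i]+B[j]$ for all $k$?''; within each round, each block pair becomes one \BMMaxConvUpperBound{} instance of size $O(\sqrt N)$ (this is what reduces linearly to a single Knapsack decision of size $O(\sqrt N)$); and because a NO answer from one block pair only tells you a violation exists somewhere in that pair, you need the decision-to-search step of \Cref{prop:decision-to-search} plus the iterative ``mark the violated index and recompute $C'$'' loop to make progress — this is what keeps the total number of Knapsack calls at $\widetilde O(N)$ rather than blowing up. Your padding/integrality argument for the weak-approximation variant is fine in spirit (the paper simply sets $\varepsilon = \Theta(1/(W+\OPT))$ and observes the approximation is forced to be exact), but it sits on top of the same broken inner reduction, so it inherits the gap.
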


On the one hand, \Cref{thm:alg-unboundedknapsack} solves \UnboundedKnapsack{} in time 
$\widetilde{O}(n + (p_{\max} + w_{\max})^{1.5})$ by using Chi, Duan, Xie and Zhang's subquadratic \BMMinConv{} algorithm~\cite{ChiDXZ22}.
On the other hand, \Cref{thm:equivalence} shows that any algorithm solving \UnboundedKnapsack{} in time 
$\widetilde{O}(n + (p_{\max} + w_{\max})^{2-\delta})$ can be transformed into a subquadratic \BMMinConv{} algorithm. 
This shows that both our exact and approximation algorithms take essentially \emph{the only possible route} to obtain subquadratic 
algorithms, by invoking a \BMMinConv{} algorithm.

\paragraph*{Is randomness necessary?}
The algorithms given by Theorems~\ref{thm:alg-unboundedknapsack}, \ref{thm:alg-knapsack} and \ref{thm:weak-fptas} are all randomized.
If we insist on deterministic algorithms, we note that by applying Chan and Lewenstein's deterministic $\widetilde{O}(n^{1.864})$-time 
algorithm for \BMMaxConv{}~\cite{ChanL15}, we can obtain deterministic versions of~\Cref{thm:alg-unboundedknapsack} and \Cref{thm:weak-fptas}
with exponent $1.864$ instead of $1.5$ (i.e. the only part where we use randomness is in applying Chi, Duan, Xie and Zhang's 
algorithm~\cite{ChiDXZ22}). On the other hand, we do not know\footnote{Our algorithm closely 
follows Bringmann's algorithm for \SubsetSum{}~\cite{Bringmann17} whose derandomization is an open problem.} how to derandomize~\Cref{thm:alg-knapsack}.

\subsection{Technical Overview}

\paragraph*{Exact algorithm for \UnboundedKnapsack{}}
In~\Cref{sec:exact-unboundedknapsack} we present our exact algorithm for \UnboundedKnapsack{}.
Let $(\calI, W)$ be an instance of \UnboundedKnapsack{}.
We denote by $\calP_i[0,\dots,W]$ the array where $\calP_i[j]$ is the maximum profit of a solution of weight 
at most $j$ using at most $2^i$ items. Since each item has weight at least 1, any feasible 
solution consists of at most $W$ items. Thus, our goal is to compute the value $\calP_{\lceil \log W \rceil}[W] = \OPT$.

The natural approach is to use dynamic programming: since $\calP_0$ consists 
of solutions of at most one item, it can be computed in time $O(n)$. For 
$i > 0$ we can compute $\calP_i[0,\dots,W]$ by taking the $(\max, +)$-convolution of $\calP_{i-1}[0,\dots,W]$ with itself. 
This gives an algorithm in time $O(W^2 \log W)$.

Jansen and Rohwedder~\cite{JansenR19} and Axiotis and Tzamos~\cite{AxiotisT19} showed that instead of 
convolving sequences of length $W$, it suffices to convolve only $O(w_{\max})$ entries of $\calP_{i-1}$ in 
each iteration. This improves the running time to $O(w_{\max}^2 \log W)$ by using the naive algorithm for $(\max,+)$-convolution.
The approach of Jansen and Rohwedder~\cite{JansenR19} is as follows. Suppose $x$ is the optimal solution for 
a target value $\calP_i[j]$. They showed that $x$ can be split into two solutions $x_1, x_2$ such that (i) the number 
of items in each part is at most $2^{i-1}$ and (ii) the difference between the weights of both parts is at most $O(w_{\max})$. Thus,
(i) guarantees that both $x_1$ and $x_2$ are optimal solutions for two entries of $\calP_{i-1}$, and (ii)
implies that these entries lie in an interval in $\calP_{i-1}$ of length $O(w_{\max})$. In this way, they can afford 
to perform the $(\max,+)$-convolution of only $O(w_{\max})$ entries in $\calP_{i-1}$. 

To show the existence of such a partitioning of $x$ they make use of Steinitz' Lemma~\cite{Steinitz1913}, which shows that any collection of 
$m$ vectors in $\R^d$ with infinity norm at most 1, whose sum is 0, can be permuted such that every prefix sum has norm at most $O(d)$ 
(see~\Cref{lem:steinitz} for the precise statement). The partitioning of $x$ follows from Steinitz' Lemma by taking the weights of the items picked by $x$
as 1-dimensional vectors. The usage of Steinitz' Lemma to reduce the number of states in dynamic programs was pioneered by 
Eisenbrand and Weismantel~\cite{EisenbrandW20}, and further refined by Jansen and Rohwedder~\cite{JansenR19}.

In our algorithm, we use Steinitz' Lemma in a similar way to split the number of items and the weight of $x$, but additionally we use it to ensure that the profits of the solutions $x_1, x_2$ differ by at most $O(p_{\max})$ (see \Cref{lem:splitting-steinitz}). 
In this way, by carefully handling the subproblems $\calP_{i-1}$ we can enforce that the values of the $O(w_{\max})$ entries that 
need to be convolved have values in a range of size $O(p_{\max})$. Since the arrays $\calP_{i}$ are monotone non-decreasing, we 
can then apply the algorithm for \BMMaxConv{}, and thus handle each subproblem in time $\widetilde O((p_{\max} + w_{\max})^{1.5})$.

\paragraph*{Exact algorithm for \Knapsack{}}

Cygan et al.~\cite{CyganMWW19} showed that there is a reduction from \Knapsack{} to \MaxConv{}. More precisely,
they showed that if \MaxConv{} can be solved in time $T(n)$, then \Knapsack{} can be solved in randomized time 
$\widetilde{O}(T(W))$. Their reduction is a generalization of Bringmann's \SubsetSum{} algorithm~\cite{Bringmann17}, which can 
be seen as a reduction from \SubsetSum{} to Boolean convolution. Cygan et al. showed that the reduction for \Knapsack{}
can be obtained by essentially replacing the Boolean convolutions by $(\max,+)$-convolutions in Bringmann's algorithm.

In~\Cref{sec:exactzerooneknapsack} we observe that this reduction produces instances of \MaxConv{} which are monotone non-decreasing
and have entries bounded by $\OPT$. That is, we obtain \BMMaxConv{} instances of size $O(W+\OPT)$, and following the analysis of~\cite{CyganMWW19} we 
obtain an algorithm for \Knapsack{} in time $\widetilde{O}(n + (W + \OPT)^{1.5})$, which yields \Cref{thm:alg-knapsack}.

\paragraph*{Approximating \UnboundedKnapsack{}}

In~\Cref{sec:approx} we present our approximation schemes for \UnboundedKnapsack{}.
Let $(\calI, W)$ be an instance of \UnboundedKnapsack{}. The starting point is the ``repeated squaring'' approach mentioned earlier.
Namely, recall that we denote by $\calP_i[0,\dots,W]$
the sequence where $\calP_i[j]$ is the maximum profit of any solution of weight at most $j$ and with at most $2^i$ items.
Since $W/w_{\min}$ is an upper bound on the number of items in the optimal solution, we have that
$\calP_{\log(W/w_{\min})}[W] = \OPT$. We can compute $\calP_i[0,\dots,W]$
by taking the $(\max, +)$-convolution of $\calP_{i-1}[0,\dots,W]$ with itself. This yields an algorithm in time $O(W^2 \log W/w_{\min})$.

Although this exact algorithm is not particularly exciting or new, it can be nicely extended to the approximate setting.
In~\Cref{sec:fptas} we show that by replacing the exact $(\max,+)$-convolutions with approximate ones, 
we obtain an FPTAS for \UnboundedKnapsack{} in time $\widetilde{O}(n + 1/\varepsilon^2)$. To this end,
we preprocess the item set to get rid of \emph{light} items with weight smaller than $\varepsilon \cdot W$, and \emph{cheap}
items with profit smaller than $\varepsilon \cdot \OPT$, while decreasing the optimal value by only $O(\varepsilon \cdot \OPT)$ (see~\Cref{lem:preprocessing-profits-and-weights}). After this preprocessing, we have that the maximum number of items in any solution 
is $W/w_{\min} < 1/\varepsilon$. Thus, we only need to approximate $O(\log 1/\varepsilon)$ convolutions.
For this we use an algorithm due to Chan~\cite{Chan18a}, which in our setting without cheap items runs in time 
$\widetilde{O}(1/\varepsilon^2)$ (see \Cref{lem:max-conv-approx}). 
Thus, after applying the preprocessing in time $O(n)$, we compute a $(1 + \varepsilon)$-approximation of 
$\calP[0,\dots,W]$ by applying $O(\log 1/\varepsilon)$ approximate $(\max,+)$-convolutions
in overall time $\widetilde{O}(n + (1/\varepsilon)^2)$.

In~\Cref{sec:weak-fptas} we treat the case of weak approximation. The main steps of the algorithm are virtually the same as before. The crucial difference 
is that now we can afford to round weights. In this way, we can adapt Chan's algorithm and construct \MaxConv{}
instances which are monotone non-decreasing and have bounded entries (see~\Cref{lem:max-conv-weak-approx}). This yields \BMMaxConv{}
instances, and by using Chi, Duan, Xie and Zhang's algorithm for this special case~\cite{ChiDXZ22}, we can compute 
a weak approximation of $(\max,+)$-convolution in time $\widetilde{O}((1/\varepsilon)^{1.5})$. By similar arguments 
as for the strong approximation, this yields a weak approximation scheme running in time $\widetilde{O}(n + (1/\varepsilon)^{1.5})$.

\paragraph*{Equivalence between \BMMinConv{} and Knapsack problems}
As mentioned earlier in the introduction, Cygan et al.~\cite{CyganMWW19} and K\"{u}nnemann et al.~\cite{KunnemannPS17} 
independently showed a reduction from \MinConv{} to \UnboundedKnapsack{}.
In~\Cref{sec:hardness} we show that following the same chain of reductions from \MaxConv{} to \UnboundedKnapsack{}
but instead starting from \BMMaxConv{}, with minor adaptations we can produce instances of \UnboundedKnapsack{} with $W, \OPT = O(n)$. 
Together with our exact algorithm for \UnboundedKnapsack{}, which we can phrase as a reduction to \BMMaxConv{}, we 
obtain an equivalence of \BMMaxConv{} and \UnboundedKnapsack{} with $W, \OPT = O(n)$ --- if one of these problems
can be solved in subquadratic time, then both can.

Note that for \UnboundedKnapsack{} with $W, \OPT = O(n)$ a weak $(1+\varepsilon)$-approximation for
$\varepsilon = \Theta(1/n)$ already computes an exact optimal solution. This yields the reduction from \BMMaxConv{}
to the approximate version of \UnboundedKnapsack{}. We similarly obtain a reduction to \Knapsack{} with 
$W,\OPT = O(n)$. This yields our equivalences from \Cref{thm:equivalence}.

\section{Preliminaries}\label{sec:preliminaries}

We write $\N = \{0,1,2,\dots\}$. For $t \in \N$ we let $[t] = \{0,1,\dots,t\}$. 
For reals $a \le b$ we write $[a,b]$ for the interval from $a$ to $b$, and for reals $a,b$ with $b \ge 0$ 
we write $[a \pm b]$ for the interval $[a-b,a+b]$.

We formally define the \UnboundedKnapsack{} problem. We are given a set of items $\calI = \{(p_1, w_1),\dots,(p_n, w_n)\}$, 
where each item $i$ has a profit $p_i \in \N$ and a weight $w_i \in \N$, and a knapsack capacity $W \in \N$. 
The task is to maximize $\sum_{i=1}^n p_i \cdot x_i$ subject to the constraints $\sum_{i=1}^n w_i \cdot x_i \leq W$ and $x \in \N^n$. 
The more standard \Knapsack{} problem is defined in the same way, but the solution $x$ is constrained to $x \in \{0,1\}^n$. 

Given an instance $(\calI, W)$, we denote by $x \in \N^n$ a multiset of items, 
where $x_i$ is the number of copies of the $i$-th item. We sometimes refer to $x$ as a \emph{solution}. 
We write $p_{\calI}(x)$ for the total profit of $x$, i.e., $p_{\calI}(x) := \sum_i x_i \cdot p_i$. 
Similarly, we write $w_{\calI}(x) := \sum_i x_i \cdot w_i$ for the weight of $x$. 
When the item set $\calI$ is clear from context, we drop the subscript and simply write $p(x)$ and $w(x)$.
We denote the number of items contained in a solution $x$ by $\|x\|_1 := \sum_i x_i$.
A solution~$x$ is \emph{feasible} if it satisfies the constraint $w(x) \leq W$. 
We denote by $\OPT$ the maximum profit $p(x)$ of any feasible solution $x$.
We denote by $p_{\max} := \max_{(p, w) \in \calI} p$ the maximum profit of any input item and by
$w_{\max} := \max_{(p, w) \in \calI} w$ the maximum weight of any input item. 

\paragraph*{Notions of Approximation}

We say that an algorithm gives a \emph{strong} $(1+\varepsilon)$-approximation for \UnboundedKnapsack{}
if it returns a solution $x \in \N^n$ with weight $w(x) \leq W$ and profit
$p(x) \geq (1 - \varepsilon)\cdot \OPT$. We say that an algorithm gives a \emph{weak} $(1+\varepsilon)$-approximation 
for \UnboundedKnapsack{} if  it returns a solution $x \in \N^n$ with profit $p(x) \geq (1 - \varepsilon)\cdot  \OPT$
and weight $w(x) \leq (1 + \varepsilon) \cdot W$. 
We stress that here $\OPT$ still denotes the optimum value with weight at most $W$, i.e., $\OPT = \max\{p(x) \mid x \in \N^n,\, w(x) \le W\}$.

\paragraph*{Profit Sequences}
Given an item set $\calI$ and capacity $W$, we define the array $\calP_{\calI}[0,\dots,W]$, where $\calP_{\calI}[j]$ is the maximum profit achievable with capacity $j$, i.e.,
\[
    \calP_{\calI}[j] := \max\{p_{\calI}(x) \colon x \in \N^n, w_{\calI}(x) \leq j\}.    
\]
Note that $\calP_{\calI}[0] = 0$. A textbook way to compute $\calP_{\calI}[0,\dots,W]$ is by dynamic programming:

\begin{fact}\label{fact:dp}
    $\calP_{\calI}[0,\dots,W]$ can be computed using dynamic programming in time $O(n \cdot W)$.
\end{fact}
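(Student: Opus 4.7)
The plan is to establish the standard Bellman recurrence for Unbounded Knapsack and observe that it admits a straightforward bottom-up evaluation. Concretely, I would argue that for every $j \in [W]$,
\[
    \calP_{\calI}[j] \;=\; \max\Bigl(\{0\} \cup \{\calP_{\calI}[j - w_i] + p_i \,:\, i \in [n],\, w_i \le j\}\Bigr).
\]
The $\ge$ direction is immediate: the value $0$ is achieved by the empty solution $x = \mathbf{0}$, and for any item $i$ with $w_i \le j$, taking an optimal solution $x'$ for capacity $j - w_i$ and adding one more copy of item $i$ yields a feasible solution for capacity $j$ with profit $\calP_{\calI}[j - w_i] + p_i$. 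For the $\le$ direction, let $x$ be an optimal solution for capacity $j$; if $x = \mathbf{0}$ then $\calP_{\calI}[j] = 0$, and otherwise pick any item $i$ with $x_i \ge 1$, decrement $x_i$ to obtain a feasible solution for capacity $j - w_i$ of profit $p_{\calI}(x) - p_i \le \calP_{\calI}[j - w_i]$, which rearranges to $\calP_{\calI}[j] \le \calP_{\calI}[j - w_i] + p_i$.

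Next I would describe the algorithm: initialize $\calP_{\calI}[0] := 0$, and for $j = 1, 2, \dots, W$ compute $\calP_{\calI}[j]$ directly from the recurrence above using the previously computed entries $\calP_{\calI}[0], \dots, \calP_{\calI}[j-1]$. Correctness follows by induction on $j$ from the recurrence.

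For the running time, each cell $\calP_{\calI}[j]$ is computed by scanning the $n$ items once and taking a maximum over at most $n+1$ values, which takes $O(n)$ time. Summing over $j \in \{0, 1, \dots, W\}$ yields total time $O(n \cdot W)$, as claimed. There is no real obstacle here; the only thing to be slightly careful about is that entries on the right-hand side of the recurrence are $\calP_{\calI}[j - w_i]$ with $j - w_i < j$ (since $w_i \ge 1$), so the bottom-up order $j = 0, 1, \dots, W$ indeed accesses only already-computed values.
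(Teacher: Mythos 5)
Your proof is correct and is exactly the standard Bellman recurrence for Unbounded Knapsack, evaluated bottom-up; this matches the textbook dynamic program the paper is implicitly invoking (it cites this fact without proof).
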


We will also consider the array $\calP_{\calI, k}[0,\dots,W]$, where we restrict to solutions with at most $2^k$ items, for any non-negative integer $k$, i.e., for any $j \in [W]$ we set
\[
    \calP_{\calI,k}[j] := \max\{p_{\calI}(x) \colon  x \in \N^n, w(x) \leq j, \|x\|_1 \leq 2^k\}.
\]

When $\calI$ is clear from context, we will drop the subscript and write $\calP[0,\dots,W]$ and $\calP_k[0,\dots,W]$.
When we work with \Knapsack{} instead of \UnboundedKnapsack{}, we will use the same notation $\calP_{\calI}$ and 
$\calP_{\calI, k}$, where we restrict to $x \in \{0,1\}^n$ instead of $x \in \N^n$. 

\paragraph*{\MaxConv{}} The $(\max,+)$-convolution $A \oplus B$ of two sequences $A[0,\dots,n], B[0,\dots,n] \in \Z^{n+1}$
is a sequence of length $2n+1$ where $(A \oplus B)[k] := \max_{i + j = k}A[i] + B[j]$. We call \MaxConv{} the 
task of computing the $(\max,+)$-convolution of two given sequences.

We will use the following handy notation: Given sequences $A[0,\dots,n], B[0,\dots,n]$ and 
intervals $I, J \subseteq [n]$ and $K \subseteq [2n]$, we denote by $C[K] := A[I] \oplus B[J]$ the computation of 
$C[k] := \max\{A[i] + B[j] \colon i \in I, j \in J, i + j = k\}$ for each $k \in K$. 
The following proposition shows that this can be computed efficiently. We defer the proof to~\autoref{sec:proofs-preliminaries}.

\begin{proposition}\label{prop:maxconv-range}
    If \MaxConv{} can be solved in time $T(n)$, then $C[K] = A[I] \oplus B[J]$ can be computed in time 
    $O(T(|I| + |J|) + |K|)$.
\end{proposition}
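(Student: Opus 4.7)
The plan is to reduce the partial computation $C[K] = A[I] \oplus B[J]$ to a single black-box call to a standard \MaxConv{} algorithm on two sequences of length $O(|I|+|J|)$, followed by $O(|K|)$ table look-ups. Since $I$ and $J$ are intervals, write $I = [a_I, a_I + |I| - 1]$ and $J = [a_J, a_J + |J| - 1]$, and define the shifted subsequences $A'[i'] := A[a_I + i']$ for $0 \le i' \le |I|-1$ and $B'[j'] := B[a_J + j']$ for $0 \le j' \le |J|-1$. A direct substitution $i = a_I + i'$, $j = a_J + j'$ shows that
\[
    (A' \oplus B')[k'] \;=\; \max_{\substack{i \in I,\, j \in J \\ i+j \,=\, k' + a_I + a_J}} A[i] + B[j],
\]
so $C[k] = (A' \oplus B')[k - a_I - a_J]$ whenever $0 \le k - a_I - a_J \le |I| + |J| - 2$, and $C[k] = -\infty$ otherwise (the maximum is taken over an empty set).

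The first step is to compute $A' \oplus B'$ in full. Since a black-box \MaxConv{} algorithm with running time $T(n)$ expects two equal-length inputs, I would pad the shorter of $A', B'$ with sufficiently negative sentinel entries (say $-(|A|_\infty + |B|_\infty + 1)$, which behaves like $-\infty$ under the $\max$) up to common length $m := \max(|I|,|J|) \le |I|+|J|$. This invocation costs $T(m) = O(T(|I|+|J|))$, assuming $T$ is monotone as usual. Extracting $A'$ and $B'$ and performing the padding is linear in $|I|+|J|$, which is absorbed into $T(|I|+|J|)$ since any nontrivial \MaxConv{} algorithm is superlinear.

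Finally, for each $k \in K$ I would compute $k' := k - a_I - a_J$ in $O(1)$ time and output $(A' \oplus B')[k']$, or $-\infty$ if $k'$ falls outside $[0, |I|+|J|-2]$. This contributes $O(|K|)$ time, and the total running time is $O(T(|I|+|J|) + |K|)$, matching the claim. There is no real obstacle in this proof: the only mild care needed is bookkeeping the shifts $a_I, a_J$ so that indices translate correctly between the original problem and the shifted convolution, and choosing a sentinel value large enough in absolute value that it never wins a $\max$ against any genuine $A[i] + B[j]$.
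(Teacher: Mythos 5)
Your proof is correct and takes essentially the same approach as the paper's: shift the index intervals $I,J$ to start at $0$, run the black-box \MaxConv{} on the shifted subsequences, and read off the entries of $C[K]$ by shifting back. The paper leaves the padding and sentinel bookkeeping implicit, but these are routine and you handle them correctly.
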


Sometimes we will refer to the $(\min,+)$-convolution, where we replace $\max$ by a $\min$. The two problems \MinConv{} and \MaxConv{}
are equivalent after negating the sequences.

\paragraph*{\BMMaxConv{}} In the \BMMaxConv{} problem, we compute the $(\max,+)$-convolution of sequences 
of length $n$ which are monotone non-decreasing and have bounded values.
For this setting, Chi, Duan, Xie and Zhang gave the following remarkable result:

\begin{theorem}[\BMMaxConv{}~\cite{ChiDXZ22}]\label{thm:convolution-bounded}
    Given monotone non-decreasing sequences $A[0,\dots,n]$ and $B[0,\dots,n]$ with 
    entries $A[i], B[i] \in [O(n)] \cup \{-\infty\}$ for all $i \in [n]$, their 
    $(\max,+)$-convolution $A \oplus B$ can be computed in expected time $\widetilde O(n^{1.5})$. 
\end{theorem}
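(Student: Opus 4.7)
The plan is to combine a block decomposition of the inputs with a fast subroutine for bounded-range $(\max,+)$-convolution, using the monotonicity of $A$ and $B$ so that the total value range over the blocks telescopes to $O(n)$.

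First, I would partition $A$ and $B$ into $\sqrt{n}$ consecutive blocks of length $\sqrt{n}$ each (handling $-\infty$ entries by a preprocessing pass that discards the corresponding indices). For the $\ell$-th block of $A$, let $\Delta^A_\ell := A[(\ell+1)\sqrt{n}-1] - A[\ell\sqrt{n}]$, and define $\Delta^B_m$ analogously. Monotonicity together with $A[i], B[i] \in [O(n)]$ gives the telescoping identities $\sum_\ell \Delta^A_\ell = O(n)$ and $\sum_m \Delta^B_m = O(n)$. The convolution $A \oplus B$ decomposes, over all block pairs $(\ell, m)$, into sub-convolutions $A_\ell \oplus B_m$ shifted to an output window of length $2\sqrt{n}$; after subtracting the constants $A[\ell\sqrt{n}]$ and $B[m\sqrt{n}]$, each sub-convolution is a $(\max,+)$-convolution of two length-$\sqrt{n}$ sequences with output values in $[\Delta^A_\ell + \Delta^B_m]$.

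Next, I would solve each block-pair sub-convolution with a fast bounded-range subroutine. A naive FFT-based encoding---one polynomial exponent for the index and another for the value---solves a length-$m$ sub-convolution of range $D$ in time $\widetilde{O}(mD)$; summed over all $n$ block pairs with $m = \sqrt{n}$, telescoping yields a total of $\widetilde{O}(\sqrt{n} \cdot n^{1.5}) = \widetilde{O}(n^2)$, which is no better than the trivial bound. To hit $\widetilde{O}(n^{1.5})$ I would need the per-pair cost to be roughly $\widetilde{O}(\sqrt{n} + \Delta^A_\ell + \Delta^B_m)$, so that summing gives $n \cdot \sqrt{n} + O(n^{1.5}) = \widetilde{O}(n^{1.5})$.

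The hardest part, and the one I expect to be the main obstacle, is designing such a fast subroutine for the individual block pairs. Closing the gap requires exploiting a second layer of structure. I would try recursing on blocks that are themselves bounded monotone instances of smaller size (since monotonicity is hereditary), or decomposing each sub-problem into a structured baseline plus sparse corrections in the spirit of Chan and Lewenstein's additive-combinatorial machinery, or reducing the aggregate block-pair computation to a single fast rectangular $(\min,+)$ matrix multiplication on matrices with monotone rows and columns. Carefully balancing these regimes, and in particular handling the few \emph{heavy} blocks whose internal range approaches $\sqrt{n}$, is precisely where the delicate technical contribution of Chi, Duan, Xie, and Zhang lives, and it is the step I would expect to struggle with most if I had to reconstruct the proof from scratch.
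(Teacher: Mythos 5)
This theorem is cited from Chi, Duan, Xie and Zhang~\cite{ChiDXZ22}; the paper does not reprove it. The only proof obligation the paper discharges for the statement as written is the short equivalence argument in Appendix~A translating between the $(\min,+)$/increasing/$[O(n)]$ formulation of~\cite{ChiDXZ22} and the $(\max,+)$/non-decreasing/$[O(n)]\cup\{-\infty\}$ formulation used in the theorem (via negation, reversal, adding a linear function, and a shift-and-threshold trick for the $-\infty$ entries).

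Your proposal takes a genuinely different route: you try to reconstruct the algorithm itself rather than reduce to the cited statement. The setup you describe is correct as far as it goes. Splitting $A$ and $B$ into $\sqrt{n}$ blocks of length $\sqrt{n}$, the telescoping bounds $\sum_\ell \Delta^A_\ell = O(n)$ and $\sum_m \Delta^B_m = O(n)$, the FFT encoding giving per-pair cost $\widetilde{O}\bigl(\sqrt{n}\,(\Delta^A_\ell + \Delta^B_m)\bigr)$, the observation that this only totals $\widetilde{O}(n^2)$, and the arithmetic showing that per-pair cost $\widetilde{O}(\sqrt{n} + \Delta^A_\ell + \Delta^B_m)$ would suffice for $\widetilde{O}(n^{1.5})$ overall --- all of this is right. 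The parenthetical treatment of $-\infty$ entries by discarding the corresponding prefix is also fine, since in a non-decreasing sequence the $-\infty$ entries form a prefix. But then you explicitly concede that you cannot produce the subroutine that achieves the needed per-pair cost, and that this is exactly the technical content of~\cite{ChiDXZ22}. That concession is the gap: without that subroutine, nothing in your argument beats the naive quadratic bound, and none of your three sketched directions (recursion on blocks, additive-combinatorial corrections in the style of Chan--Lewenstein, or a rectangular bounded-monotone $(\min,+)$-matrix product) is developed far enough to close it. As written, the proposal establishes the easy framing around the result but does not prove the theorem.
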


Note that Chi, Duan, Xie and Zhang phrase their result for $(\min,+)$-convolution of increasing sequences 
with entries in $[O(n)]$. We prove in~\Cref{sec:appendix-equivalence} that both statements are equivalent, 
so their result also works for $(\max,+)$-convolution with entries in $[O(n)] \cup \{-\infty\}$. 

\paragraph*{Witnesses}

Let $A[0,\dots,n], B[0,\dots,n]$ be an instance of \BMMaxConv{}. Let $C := A \oplus B$. Given $k \in [2n]$,
we say that $i \in [n]$ is a \emph{witness} for $C[k]$ if $C[k] = A[i] + B[k-i]$. We say that an array $M[0.\dots,2n]$ 
is a \emph{witness array}, if each entry $M[k]$ contains some witness for $C[k]$.

For the general case of \MaxConv{} it is well known (e.g.~\cite{Seidel95,AlonGMN92}) that computing the witness array has the same time complexity as $(\max,+)$-convolution, up to a $\polylog(n)$ overhead. This reduction does not 
immediately apply to \BMMaxConv{} because the sequences might not remain monotone. However, we make it work with some extra care, see \Cref{sec:witnesses} for the proof. 

\begin{restatable}[Witness Finding]{lemma}{witnesses}\label{lem:witness-finding}
    If \BMMaxConv{} can be computed in time $T(n)$, then a witness array $M[0,\dots,2n]$ can be computed in time 
    $\widetilde{O}(T(n))$.
\end{restatable}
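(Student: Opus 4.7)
The plan is to mirror the classical divide-and-conquer witness-finding reduction for $(\max,+)$-convolution~\cite{AlonGMN92,Seidel95}, taking enough care that every auxiliary convolution remains a valid bounded monotone instance. First I would run the given \BMMaxConv{} subroutine once to compute $C = A \oplus B$ in time $T(n)$, and then recover a witness for each output position $k$ by $\log n$ rounds of parallel binary search on its candidate interval of indices in $A$.

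The standard reduction would implement each round by querying a convolution of the form $A|_S \oplus B$ for a subset $S \subseteq [0,n]$, but such a restriction typically carries $-\infty$ entries and is therefore no longer bounded monotone, so the black-box $T(n)$-time algorithm does not apply. The workaround is a shift-and-tag trick: for the first round define $\widetilde A[i] := A[i]$ for $i \le n/2$ and $\widetilde A[i] := A[i] + M$ for $i > n/2$, where $M$ is chosen just above $\max_i A[i] = O(n)$. Then $\widetilde A$ is still monotone non-decreasing with entries in $[0, O(n)]$, so $\widetilde A \oplus B$ is a legitimate \BMMaxConv{} instance computable in time $T(n)$, and comparing $(\widetilde A \oplus B)[k]$ with $C[k] + M$ reveals whether the witness for $C[k]$ lies above or below $n/2$, for every $k$ simultaneously. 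For later rounds the currently active candidate intervals form a partition of $[0,n]$ into $2^d$ contiguous pieces; I would apply the same tagging trick inside each piece, combined with the range-convolution framework of Proposition~\ref{prop:maxconv-range} on the appropriate chunk of $B$, using that any contiguous subarray of $A$ remains bounded monotone after shifting.

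The main obstacle, and where the paper's ``extra care'' really matters, is to keep the total cost across all $\log n$ rounds at $\widetilde O(T(n))$ rather than blowing up to $\Omega(n \cdot T(n))$ as a naive per-piece accounting would suggest. The critical observation is that at round $d$ each output position is charged to exactly one of the $2^d$ sub-problems, so the aggregated input length $\sum_b (|I_b| + |J_b|)$ is $\widetilde O(n)$; combined with superadditivity of $T$ (which we may assume since $T(n) = \Omega(n)$), this collapses the round-$d$ cost to $\widetilde O(T(n))$. The subtle point will be to verify that the output ranges $J_b$ induced by the possibly scattered output positions assigned to each piece stay inside this aggregate budget; a careful choice of which partial convolution to issue at each sub-problem handles this, and summing the $\log n$ rounds then yields the claimed $\widetilde O(T(n))$ bound.
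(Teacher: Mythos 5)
Your proposal takes a genuinely different route from the paper's, and it has a gap that I do not think can be closed within the binary-search framework. The paper follows the Seidel/Alon--Galil--Margalit--Naor recipe: it extracts the bits of a witness index one at a time via $O(\log n)$ modified convolutions, using the encoding $A_b[i] := 2A[i] + i + i_b$, $B_b[i] := 2B[i] + i$. The crucial point is that $2A[i]+i$ is strictly increasing with step at least $1$, so adding the $\{0,1\}$ perturbation $i_b$ keeps the sequence monotone non-decreasing with values in $[O(n)]$, and $C_b[k]$ then reveals the $b$-th bit of the witness \emph{provided the witness is unique}. The uniqueness requirement is then removed by a separate random-isolation step. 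Your plan replaces all of this by a per-position binary search over the interval of $A$-indices containing a witness; a nice feature is that non-unique witnesses never cause trouble, and your round-one shift-and-tag convolution is indeed a legitimate \BMMaxConv{} instance. The problem is the aggregation across rounds, precisely the point you yourself flag as "the subtle point."

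After round $d-1$, the map assigning each output $k \in [0,2n]$ to its active dyadic piece $I_b \subseteq [0,n]$ can be arbitrarily scattered: the set $K_b$ of outputs interested in a fixed $I_b$ need not be contiguous, and $\max K_b - \min K_b$ can be $\Theta(n)$ even when $|K_b|$ is tiny. To answer the round-$d$ query for those outputs you must convolve the length-$O(n/2^{d-1})$ piece $A|_{I_b}$ against a range of $B$ covering $\{k-i : k\in K_b,\ i\in I_b\}$; via \Cref{prop:maxconv-range} the cost is governed by the length of the smallest interval containing that set, which is $\Theta\bigl((\max K_b - \min K_b) + |I_b|\bigr)$. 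Summing over pieces, $\sum_b |I_b| = O(n)$ and $\sum_b |K_b| = O(n)$ are harmless, but $\sum_b (\max K_b - \min K_b)$ can be as large as $\Theta(2^d n)$, since partitioning $[0,2n]$ into $2^{d-1}$ classes imposes no bound on the classes' spans. Superadditivity of $T$ only collapses the round cost to $\widetilde{O}(T(n))$ when the aggregate input length is $\widetilde{O}(n)$, which is exactly what fails. Nor does a single global tagged convolution per round rescue this: the "left/right half of the current piece" indicator drops by $1$ at every piece boundary, and not even the $2A[i]+i$ padding can make such a tag monotone when the pieces for different $k$ are interleaved. So the step you defer with "a careful choice of which partial convolution to issue at each sub-problem handles this" is, as far as I can see, the actual content of the lemma, and the paper's bit-extraction plus random isolation is doing real work that your sketch has not replaced.
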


\paragraph*{Niceness asumptions on time bounds}\label{sec:niceness}
For all time bounds $T(n)$ in this paper, we make the following niceness assumptions: (1) $T(\widetilde{O}(n)) \le \widetilde{O}(T(n))$,
and (2) $k \cdot T(n) \le O(T(kn))$ for any $k, n \geq 1$. This is satisfied for all natural time bounds of polynomial-time or 
pseudopolynomial-time algorithms, in particular it holds for all functions of the form $T(n) = \Theta(n^{\alpha} \log^\beta n)$ 
for any constants $\alpha \ge 1, \beta \ge 0$.

\section{Exact algorithm for \UnboundedKnapsack{}}\label{sec:exact-unboundedknapsack}
In this section we prove the following Theorem:

\begin{theorem}\label{thm:reduction-unboundedknapsack-to-conv}
    If \BMMaxConv{} on length-$n$ sequences can be solved in time $T(n)$, then \UnboundedKnapsack{} can be solved 
    in time $\widetilde{O}(n + T(p_{\max} + w_{\max}))$, where $p_{\max}$ is the largest profit
    of any item and $w_{\max}$ is the largest weight of any item.
\end{theorem}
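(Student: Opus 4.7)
The plan is to compute $\calP_{\calI,K}[W] = \OPT$ by the Jansen--Rohwedder / Axiotis--Tzamos ``repeated squaring'' paradigm on the sequences $\calP_{\calI,i}$, where $K = \lceil \log W \rceil$, but to replace the naive per-level convolution with a single \BMMaxConv{} call. Write $u := p_{\max} + w_{\max}$. Since every item has weight at least $1$, any feasible solution has at most $W$ items, so indeed $\OPT = \calP_{\calI,K}[W]$. The base case $\calP_{\calI,0}$ is read from the input in $\widetilde O(n+W)$ time, and the item $(p^*,w^*)$ with the best profit-to-weight ratio is identified upfront as a ``baseline item''.

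The engine of the reduction is Lemma~\ref{lem:splitting-steinitz}, a two-dimensional Steinitz splitting: every solution $x$ with $\|x\|_1 \le 2^i$ decomposes as $x_1 + x_2$ with $\|x_a\|_1 \le 2^{i-1}$, $|w(x_1)-w(x_2)| = O(w_{\max})$, and $|p(x_1)-p(x_2)| = O(p_{\max})$. Specialized to optimal solutions, the witnesses $(j_1, j_2)$ to $\calP_i[j]$ satisfy $j_1 + j_2 = j$, $|j_1 - j_2| = O(w_{\max})$, and $|\calP_{i-1}[j_1] - \calP_{i-1}[j_2]| = O(p_{\max})$. I would exploit this by a top-down recursion: starting from the single-point window $J_K = \{W\}$, for $i = K-1, \ldots, 0$ take $J_i := \bigcup_{j \in J_{i+1}} [j/2 - c w_{\max},\, j/2 + c w_{\max}] \cap [0,W]$ for a suitable constant~$c$. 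The recurrence $|J_i| \le |J_{i+1}|/2 + O(w_{\max})$ stabilizes at $|J_i| = O(u)$, so at every level I need only the $O(u)$ entries $\calP_i|_{J_i}$.

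At level $i$, the sequence $\calP_{i-1}|_{J_{i-1}}$ is monotone non-decreasing of length $O(u)$, but its value range may be as large as $\Theta(u \cdot p_{\max})$, which is too big for \BMMaxConv{}. The critical step is to carefully normalize: I would subtract from $\calP_{i-1}|_{J_{i-1}}$ a $j$-dependent baseline (for instance, the greedy contribution $\lfloor j/w^* \rfloor \cdot p^*$ of the baseline item, combined with a profit-layer offset propagated from the parent level's chosen witness) so that the resulting sequence is monotone non-decreasing with values in $[0, O(u)]$. This is a \BMMaxConv{} instance of size $O(u)$, solved in time $T(O(u)) = \widetilde O(T(u))$; Lemma~\ref{lem:witness-finding} recovers the witness, which in turn tells the recursion which profit layer to follow at the next level down.

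Summing over the $O(\log W)$ levels and absorbing factors via the niceness assumptions on $T$, the total running time is $\widetilde O(n + T(u))$, as claimed. The hardest part is exactly the baseline normalization: without the profit concentration provided by Lemma~\ref{lem:splitting-steinitz}, even after restricting to a weight window of length $O(w_{\max})$, the profit range is $\Theta(u \cdot p_{\max})$, and a naive decomposition into bounded-range layers would force $\Theta(p_{\max})$ separate \BMMaxConv{} calls per level, blowing the running time up to $\Theta(u^{2.5})$. The two-dimensional Steinitz bound is precisely what lets us commit to a \emph{single} baseline per level, keeping the per-level cost at one \BMMaxConv{} call of size $O(u)$.
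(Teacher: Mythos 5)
Your high-level plan matches the paper's: use the two-dimensional Steinitz splitting to keep both the weight window \emph{and} the profit spread within $O(\Delta)$ where $\Delta = p_{\max}+w_{\max}$, then run repeated squaring with a single \BMMaxConv{} call per level. You also correctly flag the central obstacle: over a weight window of width $O(\Delta)$, the values $\calP_i[j]$ can range over $\Theta(\Delta\cdot p_{\max})$, so a single call to \BMMaxConv{} does not work without some normalization. But the fix you sketch does not actually close this gap, and a second, simpler omission breaks the running-time bound.

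First, the normalization. You propose subtracting a $j$-dependent baseline, e.g.\ the greedy contribution $\lfloor j/w^*\rfloor p^*$ plus ``a profit-layer offset propagated from the parent level's chosen witness,'' to compress the values to $[0,O(\Delta)]$ while keeping monotonicity. Neither half of this survives scrutiny. Subtracting the greedy baseline destroys monotonicity: $G[j]=\lfloor j/w^*\rfloor p^*$ jumps by $p^*$ at multiples of $w^*$ while $\calP_i$ need not increase there, so $\calP_i - G$ is generally non-monotone; and nothing bounds $\calP_i[j]-G[j]$ over the window by $O(\Delta)$. As for the witness offset, it is circular: in the repeated-squaring recursion the window $J_i$ is determined before the values $\calP_i|_{J_i}$ are computed, and you do not know which ``profit layer'' the optimal witness lands in until you already have the answer. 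There is genuinely no single additive baseline that normalizes \emph{all} of $\calP_i|_{J_i}$ to a range of width $O(\Delta)$, because $\calP_i$ really can vary by $\Theta(\Delta\, p_{\max})$ across the window. The paper's resolution is qualitatively different: it binary-searches over a guess $\alpha$ for $\OPT$, and at each level $i$ it \emph{clips} the array to the range $\alpha\cdot 2^{i-k}\pm O(\Delta)$ (entries below become $-\infty$, entries above are capped). The Splitting Lemma guarantees that the witnesses relevant to deciding ``$\calP_k[W]\ge\alpha$'' stay inside this band, so the decision procedure is exact. This binary-search-plus-clipping mechanism is precisely the missing idea; without it, tracking the full profit range forces $\Theta(p_{\max})$ sub-calls per level, which you yourself note would blow the time up.

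Second, the number of levels. You sum over ``$O(\log W)$ levels'' and claim a $\widetilde O(n+T(\Delta))$ total. But $\log W$ is not $\mathrm{polylog}(\Delta)$ in general (e.g.\ $W = 2^{\Delta}$), so $O(n + T(\Delta)\log W)$ does not fold into $\widetilde O(n+T(\Delta))$; worse, once you introduce the binary search you also pick up a factor $\log\OPT$. The paper handles this with a preprocessing step (the greedy lemma): while $W > 2w_{\max}^3$, commit a copy of the most efficient item and decrease $W$ accordingly, which is safe since some optimal solution contains that item; afterwards $W = O(\Delta^3)$ and $\OPT \le W\, p_{\max} = O(\Delta^4)$, so all log factors are $O(\log\Delta)$. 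Your proposal has no such preprocessing, so the claimed bound does not follow.

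In short: your Steinitz-based windowing is exactly right, but you need (i) binary search on $\OPT$ with per-level clipping to $\alpha\cdot2^{i-k}\pm O(\Delta)$ in place of baseline subtraction, and (ii) the greedy preprocessing to make $W$, and hence the number of levels and the binary-search range, polynomial in $\Delta$.
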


Note that~\Cref{thm:alg-unboundedknapsack} follows as an immediate corollary of~\autoref{thm:reduction-unboundedknapsack-to-conv}
by plugging in Chi, Duan, Xie and Zhang's algorithm (\autoref{thm:convolution-bounded}).

\medskip

For the entire section, fix an instance $(\calI, W)$ of the \UnboundedKnapsack{} problem.
Recall that $\calP_i[0,\dots,W]$ is defined as $\calP_i[j] := \max\{p(x) \colon w(x) \leq j, \|x\|_1 \leq 2^i\}$,
and set $\Delta := p_{\max} + w_{\max}$.
Suppose we know that the optimal solution consists of at most $2^k$ items. Then, our goal is to compute the value $\calP_k[W]$.
The natural approach is to use dynamic programming: if we have computed $\calP_{i-1}$, then $\calP_i = \calP_{i-1} \oplus \calP_{i-1}$.
To get our desired running time, we will show that we only need to convolve $O(\Delta)$ entries of $\calP_{i-1}$ and that we can 
enforce that all of these fall in a range of $O(\Delta)$ values. By monotonicity of $\calP_{i-1}$, we end up with a \BMMaxConv{} instance, 
which can be solved in time $O(T(\Delta))$. The resulting total time to compute $\calP_i[W]$ is $O(n + k \cdot T(\Delta))$. 
With additional preprocessing we ensure that $k = O(\log \Delta)$, turning the running time into $\widetilde O(n + T(\Delta))$.

\subsection{Preparations}

We need to show that when computing the optimal answer for some entry $\calP_i[j]$, we can split it 
in such a way that both its total profit and its total weight are roughly halved. Our main tool to show this is the Steinitz 
Lemma~\cite{GrinbergS80,Steinitz1913}. A beautiful proof for it can be found in~\cite{Matousek}.

\begin{lemma}[{\cite[Steinitz Lemma]{GrinbergS80,Steinitz1913}}]\label{lem:steinitz}
    Let $\|.\|$ be a norm in $\R^m$ and let $M$ be an arbitrary collection of $t$ vectors 
    in $\R^m$ such that $\|v\| \leq 1$ for every $v \in M$ and $\sum_{v \in M} v = 0$. Then, 
    it is possible to permute the vectors in $M$ into a sequence $(v_1,\dots,v_t)$ such that 
    $\|v_1 + \dots + v_k \| \leq m$ holds for every $k \in [t]$.
\end{lemma}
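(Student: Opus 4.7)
The plan is to prove the Steinitz Lemma by induction on $t$, constructing the permutation $(v_1,\ldots,v_t)$ one vector at a time while maintaining the invariant that every partial sum $s_k := v_1 + \cdots + v_k$ satisfies $\|s_k\| \le m$. The base case $s_0 = 0$ is trivial, and the final value $s_t = 0$ is forced by the hypothesis $\sum_{v \in M} v = 0$. All the content lies in a greedy selection step: if $\|s\| \le m$ and $R$ is a finite multiset of vectors of norm at most $1$ summing to $-s$, then some $v \in R$ satisfies $\|s+v\| \le m$, so that $v$ may be appended to the prefix without violating the invariant; repeatedly applying this to $R \setminus \{v\}$ (which still sums to the negative of the new, still-bounded partial sum) finishes the induction.

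To establish the selection step I would use a polyhedral argument in the style of Grinberg--Sevastyanov. Consider the polytope
\[
    P \;:=\; \Bigl\{\lambda \in [0,1]^R \;:\; \sum_{v \in R}\lambda_v\,v \;=\; -s\Bigr\},
\]
which is nonempty because the all-ones vector lies in it. Since $P$ is cut out by the box $[0,1]^R$ together with only $m$ linear equalities, any vertex $\lambda^*$ of $P$ has at most $m$ strictly fractional coordinates. Writing $I := \{v : \lambda^*_v = 1\}$ and $F := \{v : 0 < \lambda^*_v < 1\}$ (so $|F| \le m$), the defining equation rearranges to
\[
    s + \sum_{v \in I} v \;=\; -\sum_{v \in F}\lambda^*_v\,v,
\]
and the right-hand side has norm at most $\sum_{v \in F} |\lambda^*_v|\cdot\|v\| \le |F| \le m$. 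Hence the partial sum reached after \emph{appending the entire block} $I$ to the prefix already lies in the $m$-ball. To extract a single ``next vector'' whose prepending preserves the invariant, I would recursively apply the same polyhedral argument inside the block $I$ (with appropriately shifted target), exploiting the fact that the fractional residual from $F$ keeps the dimensional constant at $m$ rather than letting it double.

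The main obstacle I anticipate is the bookkeeping needed to control \emph{every} intermediate prefix sum inside the block $I$, not merely its endpoint, and to ensure the constant does not degrade as we recurse. The LP-vertex argument gives endpoint control out of the box, and the classical miracle (that turns this into control of all prefixes, with the tight constant $m$) is the cancellation between the integral and fractional parts of $\lambda^*$. I would organize the induction on the pair $(t, m)$, using the decomposition above to reduce either $t$ or the effective dimension at each step, and follow the clean exposition in Mato\'u\v{s}ek's book cited in the statement to formalize the invariant that ties together the current partial sum, the fractional residual, and the active support.
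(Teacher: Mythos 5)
The paper does not prove this lemma; it cites it from the references and points the reader to Mato\v{u}\v{s}ek for a proof, so your proposal is being judged as a from-scratch argument. Your ``greedy selection step'' --- that for any $s$ with $\|s\|\le m$ and any multiset $R$ of unit-norm vectors summing to $-s$ there is always a single $v\in R$ with $\|s+v\|\le m$ --- is false with the sharp constant $m$. Take $m=2$ with the $\ell_\infty$ norm, $s=(2,2)$, and $R$ consisting of $(\tfrac12,-1)$ and $(-1,\tfrac12)$ each with multiplicity $4$; then $\sum_{v\in R}v=(-2,-2)=-s$ and every $v\in R$ has $\|v\|_\infty=1$, yet $s+(\tfrac12,-1)=(\tfrac52,1)$ and $s+(-1,\tfrac12)=(1,\tfrac52)$ both have $\ell_\infty$-norm $\tfrac52>2$. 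This state is reachable: prepend two copies of $(1,1)$ to $R$ to get a zero-sum multiset whose first two prefix sums are $(1,1)$ and $(2,2)$, both within the $m$-ball. So the bottom-up induction you propose cannot close, and the recursion ``inside the block $I$'' that you hope will repair it has no clean inductive hypothesis: since your polytope $P$ carries only the $m$ vector-equality constraints and no constraint on $\sum_v\lambda_v$, a vertex may have no zero coordinates at all (all non-fractional coordinates equal to $1$), in which case $I=R$ and the recursion makes no progress.

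The fix --- and this is what Grinberg--Sevastyanov actually do --- is to add the extra scalar constraint $\sum_v\lambda_v = k-m$ to the polytope and to run the construction \emph{top-down}, not bottom-up. One builds a chain $A_t\supseteq A_{t-1}\supseteq\cdots\supseteq A_m$ with $|A_k|=k$, together with $\lambda^{(k)}\in[0,1]^{A_k}$ satisfying $\sum_{i\in A_k}\lambda^{(k)}_i=k-m$ and $\sum_{i\in A_k}\lambda^{(k)}_iv_i=0$; one starts with $\lambda^{(t)}\equiv(t-m)/t$ on $A_t=[t]$ and shrinks: a vertex of $\{\lambda\in[0,1]^{A_k}:\sum\lambda_i=k-1-m,\ \sum\lambda_iv_i=0\}$ (nonempty by rescaling $\lambda^{(k)}$) has at most $m+1$ fractional coordinates, and the scalar constraint forces at least one coordinate to be $0$, which is the index removed to form $A_{k-1}$. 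The permutation then lists $A_m$ first (in any order) and afterwards the singleton $A_k\setminus A_{k-1}$ at position $k$; the prefix bound for $k\ge m$ comes from $\sum_{i\in A_k}v_i=\sum_{i\in A_k}(1-\lambda^{(k)}_i)v_i$, whose norm is at most $\sum_{i\in A_k}(1-\lambda^{(k)}_i)=k-(k-m)=m$, and for $k<m$ the bound $\|v_1+\cdots+v_k\|\le k\le m$ is trivial. In short: your LP-vertex idea is the right hammer, but it must be wielded with the extra scalar constraint and in the shrinking direction; the forward greedy variant you proposed does not hold with constant $m$.
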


We use the Steinitz Lemma to argue that the items in a solution can be split 
in two parts in such a way that both the total profit and the total weight are roughly halved:

\begin{lemma}[Splitting Lemma]\label{lem:splitting-steinitz}
    Let $i \ge 1$ and consider a solution $x \in \N^n$ with $\|x\|_1 \leq 2^i$. 
    Then there is a partition of $x$ into two solutions $x_1, x_2 \in \N^n$ with the following properties:
    \begin{enumerate}
        \item (Splitting of Items) $\|x_1\|_1, \|x_2\|_1 \leq 2^{i-1}$ and $x = x_1 + x_2$,
        \item (Approximate Splitting of Weight) $|w(x_1) - \tfrac{1}{2}w(x)| \leq 2\Delta$ and $|w(x_2) - \tfrac{1}{2}w(x)| \leq 2\Delta$,
        \item (Approximate Splitting of Value) $|p(x_1) - \tfrac{1}{2}p(x)]| \leq 2\Delta$ and $|p(x_2) - \tfrac{1}{2}p(x)]| \leq 2\Delta$.
    \end{enumerate}
\end{lemma}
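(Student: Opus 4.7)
The plan is to encode the items of the solution $x$ as vectors in $\R^2$ whose first coordinate tracks weight and whose second coordinate tracks profit, and then to invoke Steinitz' Lemma to reorder them so that cutting the sequence in half produces a balanced split in both coordinates simultaneously.

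Concretely, let $t := \|x\|_1 \le 2^i$ and expand $x$ into a multiset $(a_1,\dots,a_t)$ of items (with repetitions), so that $a_j$ is an item of weight $w_{a_j}$ and profit $p_{a_j}$. Let $\bar w := w(x)/t$ and $\bar p := p(x)/t$ denote the average weight and profit of the items in $x$. I would define the vectors
\[
    v_j \;:=\; \tfrac{1}{\Delta}\bigl(w_{a_j} - \bar w,\; p_{a_j} - \bar p\bigr) \;\in\; \R^2, \qquad j=1,\dots,t.
\]
Since $w_{a_j}, \bar w \in [0,w_{\max}]$ and $p_{a_j}, \bar p \in [0,p_{\max}]$, each $v_j$ has $\ell_\infty$-norm at most $1$, and by construction $\sum_{j=1}^{t} v_j = 0$. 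Applying \Cref{lem:steinitz} with the $\ell_\infty$-norm in dimension $m=2$ produces a permutation (which we again call $a_1,\dots,a_t$) such that every prefix sum $v_1+\dots+v_k$ has $\ell_\infty$-norm at most $2$.

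Now I would set $x_1$ to be the multiset $\{a_1,\dots,a_{\lfloor t/2 \rfloor}\}$ and $x_2 := x - x_1$ the remaining items. Condition~1 is immediate: $x_1,x_2$ form a partition of $x$, $\|x_1\|_1 = \lfloor t/2 \rfloor \le 2^{i-1}$ and $\|x_2\|_1 = \lceil t/2 \rceil \le 2^{i-1}$ because $t \le 2^i$ and $i\ge 1$. For condition~2, the $\ell_\infty$-bound on the prefix sum of the $v_j$ scaled back by $\Delta$ yields
\[
    \bigl|\, w(x_1) - \lfloor t/2 \rfloor \cdot \bar w \,\bigr| \;\le\; 2\Delta,
\]
and since $\lfloor t/2 \rfloor$ differs from $t/2$ by at most $1/2$ and $\bar w \le w_{\max} \le \Delta$, we get $|\lfloor t/2 \rfloor \bar w - \tfrac12 w(x)| \le \tfrac12 \bar w$, which I would absorb into the constant (either by taking a slightly larger bound $3\Delta$ in the statement or by observing that one may redefine $\Delta$ with a small multiplicative slack). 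The same argument applied to $x_2$ and to the profit coordinate gives condition~3.

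The argument is essentially book-keeping once the right vectors and the right norm are chosen; the only mildly delicate point I would want to double-check is the interplay between the Steinitz bound (which has the $2\Delta$ slack built in) and the additional $\tfrac12 \bar w$ slack coming from $\lfloor t/2 \rfloor \neq t/2$, since together these give something like $2\Delta + \Delta/2$ rather than a clean $2\Delta$; this is the main ``obstacle'' and is handled either by a slightly refined constant or by working with a norm that simultaneously bounds the deviation from the exact half, but the overall $O(\Delta)$ slack is what drives the final algorithm regardless.
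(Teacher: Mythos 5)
Your approach is essentially the paper's: encode each item of $x$ as the $2$-dimensional vector of its (weight, profit), center by subtracting the average, normalize by $\Delta$, and apply Steinitz in the $\ell_\infty$-norm with $m=2$ to make every prefix sum small, then cut the reordered sequence in half. The only divergence is in how the odd-$t$ case is handled, and that is precisely where your proof, as written, falls short of the stated bound.

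You cut at $\lfloor t/2\rfloor$ and then pay an extra $|\lfloor t/2\rfloor - t/2|\cdot\bar w \le \tfrac12\bar w \le \tfrac12\Delta$ on top of the Steinitz $2\Delta$, giving $\tfrac52\Delta$ rather than $2\Delta$; you flag this yourself. The paper avoids the issue entirely: when $t$ is odd it appends a dummy item $x^{(t+1)}=0$ with vector $v^{(t+1)}=(0,0)$ and reruns the argument with $t' := t+1$ items. Since $i\ge 1$ makes $2^i$ even and $t\le 2^i$ is odd, one still has $t'\le 2^i$, so the item-count bound $\|x_1\|_1,\|x_2\|_1\le 2^{i-1}$ survives, and the cut is now at the exact integer midpoint $t'/2$, yielding the clean $2\Delta$. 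Your version would be fine if the lemma were stated with a marginally larger constant (and the downstream algorithm only needs $O(\Delta)$), but to prove the statement exactly as given you should adopt the dummy-vector trick rather than relying on ``absorbing the slack into the constant.''
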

\begin{proof}
    Let $t := \|x\|_1 \leq 2^i$. First assume that $t$ is even; we will remove this assumption later.
    Write $x = \sum_{j=1}^t x^{(j)}$ where each $x^{(j)}$ corresponds to one copy of some item, i.e. $\|x^{(j)}\|_1 = 1$, 
    and set $v^{(j)} = \left(\begin{smallmatrix} w(x^{(j)}) \\ p(x^{(j)}) \end{smallmatrix} \right)$.  Note that $\|v^{(j)}\|_{\infty} \leq \Delta$. By applying the Steinitz Lemma on the vectors
    $v^{(j)} - \tfrac{1}{t} \left(\begin{smallmatrix} w(x) \\ p(x) \end{smallmatrix} \right)$ (after normalizing by $\Delta$), we can assume that the $v^{(j)}$'s 
    are ordered such that 
    \begin{equation}\label{eqn:splitting-steinitz}
        \bigg \|\sum_{j = 1}^{t/2}v^{(j)} - 
            \frac{1}{2}\begin{pmatrix} w(x) \\ p(x) \end{pmatrix} \bigg \|_{\infty} 
            \leq 2 \Delta.
    \end{equation}
    Fix this ordering, and let $x_1 = x^{(1)} + \ldots + x^{(t/2)}$, corresponding to $v^{(1)},\dots,v^{(t/2)}$, and let $x_2 = x^{(t/2+1)}+\ldots+x^{(t)}$, corresponding to the remaining vectors $v^{(t/2 + 1)},\dots,v^{(t)}$.
    We now check that $x_1, x_2$ satisfy the properties of the lemma:
    
    \begin{itemize}
        \item Property 1 is clearly satisfied by construction.
        \item For property 2, note that~\eqref{eqn:splitting-steinitz} implies $|w(x_1) - \tfrac{1}{2}w(x)| \leq 2\Delta$.
        Since $w(x_2) = w(x) - w(x_1)$, we have that $|w(x_2) - \tfrac{1}{2}w(x)| = |\tfrac{1}{2}w(x) - w(x_1)| \leq 2\Delta$.
        \item Property 3 follows in the same way as property 2.
    \end{itemize}  
    If $t$ is odd, then $t+1 \le 2^i$, so we can add a dummy vector $x^{(t+1)} = 0$ with corresponding $v^{(t+1)} := \left(\begin{smallmatrix} 0 \\ 0 \end{smallmatrix} \right)$ and repeat the same argument with $t := t + 1$.
\end{proof}

When we apply~\Cref{lem:splitting-steinitz} to an optimal solution corresponding to an entry of the array $\calP_i$,
we obtain the following lemma.

\begin{lemma} \label{lem:splitting}
    Let $\beta > 0$. For any index $j \in [\beta \pm 8\Delta] \cap [W]$ there are indices $j_1, j_2 \in [\tfrac{\beta}{2} \pm 8\Delta] \cap [W]$ with the following properties:
    \begin{itemize}
        \item[(i)] $j_1 + j_2 = j$,
        \item[(ii)] $\calP_i[j] = \calP_{i-1}[j_1] + \calP_{i-1}[j_2]$,
        \item[(iii)] $|\calP_{i-1}[j_1] - \tfrac{1}{2}\calP_i[j]| \leq 2\Delta$ and $|\calP_{i-1}[j_2] - \tfrac{1}{2}\calP_i[j]| \leq 2\Delta$.
    \end{itemize}
\end{lemma}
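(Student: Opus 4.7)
The plan is to derive $j_1, j_2$ from a carefully chosen split of an optimal solution for $\calP_i[j]$: apply the Splitting Lemma (\Cref{lem:splitting-steinitz}) and then distribute the weight slack $j - w(x)$ evenly between the two halves. I would begin by fixing $x \in \N^n$ with $\|x\|_1 \le 2^i$, $w(x) \le j$, and $p(x) = \calP_i[j]$, and applying \Cref{lem:splitting-steinitz} to obtain a partition $x = x_1 + x_2$ with $\|x_k\|_1 \le 2^{i-1}$ and with both $w(x_k)$ and $p(x_k)$ within $2\Delta$ of $w(x)/2$ and $p(x)/2$ respectively.

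Next I would set $j_1 := w(x_1) + \lfloor (j - w(x))/2 \rfloor$ and $j_2 := w(x_2) + \lceil (j - w(x))/2 \rceil$. Property (i) is then immediate, and $0 \le j_1, j_2 \le j \le W$ is clear. For the range constraint, a direct calculation gives $j_k - j/2 = (w(x_k) - w(x)/2) \pm \tfrac{1}{2}$, so $|j_k - j/2| \le 2\Delta + 1$; combining with $|j - \beta| \le 8\Delta$ yields $j_k \in [\beta/2 \pm 8\Delta]$ (the degenerate case $\Delta = 0$ being trivial).

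For property (ii), monotonicity of $\calP_{i-1}$ together with $j_k \ge w(x_k)$ gives $\calP_{i-1}[j_k] \ge p(x_k)$, hence $\calP_{i-1}[j_1] + \calP_{i-1}[j_2] \ge p(x) = \calP_i[j]$. The reverse inequality is routine: optimal solutions for $\calP_{i-1}[j_1]$ and $\calP_{i-1}[j_2]$ use at most $2^{i-1}$ items each and combine into a feasible solution for $\calP_i[j_1 + j_2] = \calP_i[j]$. This equality forces the two lower bounds above to be tight, so $\calP_{i-1}[j_k] = p(x_k)$, and property (iii) follows directly from the profit bound $|p(x_k) - p(x)/2| \le 2\Delta$ in \Cref{lem:splitting-steinitz}.

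The main obstacle is the inequality $w(x) \le j$ rather than equality: the naive choice $j_k := w(x_k)$ only ensures $j_1 + j_2 = w(x)$, and pushing all the slack onto one side would move that index away from $\beta/2$ by roughly the entire slack, potentially leaving the target window. Splitting the slack in half (with appropriate rounding) is what simultaneously keeps both indices inside $\beta/2 \pm 8\Delta$; the rest is bookkeeping given the Splitting Lemma.
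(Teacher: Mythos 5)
Your proof is correct, and it takes a genuinely different route from the paper's. Both begin by fixing an optimal solution $x$ for $\calP_i[j]$ and applying \Cref{lem:splitting-steinitz}, but they diverge in how the weight slack $s := j - w(x) \ge 0$ is handled. The paper's argument does a case distinction on where $w(x_1)$ falls relative to $[j/2 \pm 4\Delta]$: in the central case it sets $j_1 := w(x_1)$, $j_2 := j - w(x_1)$ (so all the slack is pushed onto $j_2$); in the boundary case it clamps to $j_1 := \lfloor j/2 \rfloor$, $j_2 := \lceil j/2 \rceil$; and it shows the remaining cases contradict $w(x) \le j$. Your choice $j_1 := w(x_1) + \lfloor s/2 \rfloor$, $j_2 := w(x_2) + \lceil s/2 \rceil$ sidesteps the case analysis entirely, since splitting the slack evenly keeps both indices within $2\Delta + \tfrac12$ of $j/2$ no matter how $w(x)$ compares to $j$. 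The bookkeeping you describe (the sandwich $\calP_{i-1}[j_1] + \calP_{i-1}[j_2] \ge p(x) = \calP_i[j] \ge \calP_{i-1}[j_1] + \calP_{i-1}[j_2]$ forcing $\calP_{i-1}[j_k] = p(x_k)$, and then property (iii) from the profit half of the Splitting Lemma) is exactly what the paper does inside each of its cases. Your version is arguably cleaner since it avoids repeating that argument three times; the paper's version is more explicit about which configurations of $w(x_1), w(x_2)$ can actually occur. One minor remark: the parenthetical about "$\Delta = 0$" is vacuous here since $w_{\max} \ge 1$ forces $\Delta \ge 1$, which already absorbs the $+1$ rounding term into the $8\Delta$ budget.
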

\begin{proof}
    Let $x \in \N^n$ be an optimal solution for $\calP_i[j]$, that is, we have $p(x) = \calP_i[j]$, $w(x) \le j$, and $\|x\|_1 \le 2^i$. 
    We apply~\Cref{lem:splitting-steinitz} to $x$ and obtain $x_1, x_2 \in \N^n$ such that $x_1 + x_2 = x$.
    We do a case distinction based on $w(x_1), w(x_2)$:
    \begin{itemize}
        \item $w(x_1), w(x_2) \in [\tfrac{j}{2} \pm 4\Delta]$: Let $j_1 := w(x_1)$ and $j_2 := j - j_1$; note that $j_1,j_2 \in [\tfrac j 2 \pm 4 \Delta] \subseteq [\tfrac \beta 2 \pm 8 \Delta]$. 
        We argue that $p(x_1) = \calP_{i-1}[j_1]$ and $p(x_2) = \calP_{i-1}[j_2]$. Indeed, since $w(x_1) = j_1$ the solution $x_1$ is feasible for weight $j_1$, so $p(x_1) \le \calP_{i-1}[j_1]$. Similarly, since $w(x_2) = w(x)-w(x_1) \le j - w(x_1) = j_2$ the solution $x_2$ is feasible for weight $j_2$, so $p(x_2) \le \calP_{i-1}[j_2]$. Moreover, by optimality of $x$ we have $p(x_1) + p(x_2) = p(x) = \calP_i[j] \ge \calP_{i-1}[j_1] + \calP_{i-1}[j_2]$, so we obtain $p(x_1) = \calP_{i-1}[j_1]$ and $p(x_2) = \calP_{i-1}[j_2]$. Using these equations together with $p(x) = \calP_i[j]$, property (ii) follows from $p(x) = p(x_1)+p(x_2)$, property (iii) follows from Property 3 of~\Cref{lem:splitting-steinitz}, and property (i) holds by definition of $j_2$. 
        
        \item $w(x_1) < \tfrac{j}{2} - 4\Delta$: Property 2 of~\Cref{lem:splitting-steinitz} implies that 
        $|w(x_1) - w(x_2)| \leq 4\Delta$, and thus $w(x_2) \leq \tfrac{j}{2}$. 
        Therefore, $x_1$ and $x_2$ are feasible for weights $j_1 := \lfloor \tfrac j 2 \rfloor$ and $j_2 := \lceil \tfrac j 2 \rceil$, respectively. Note that $j_1,j_2 \in [\tfrac \beta 2 \pm (4 \Delta + 1)] \subseteq [\tfrac \beta 2 \pm 8 \Delta]$.
        Property (i) is obvious, and properties (ii) and (iii) now follow as in the first case.

        \item $w(x_1) > \tfrac{j}{2} + 4\Delta$: Similarly as the previous case, property 2 of~\Cref{lem:splitting-steinitz}
        implies that $w(x_2) \geq \tfrac{j}{2}$. Therefore, we have
        $w(x) = w(x_1) + w(x_2) > j + 4\Delta$, which contradicts the assumption $w(x) \leq j$.
        
        \item $w(x_2) < \tfrac{j}{2} - 4\Delta$ or $w(x_2) > \tfrac{j}{2} + 4\Delta$: Symmetric to the previous two cases.\qedhere
    \end{itemize}
\end{proof}

\subsection{The algorithm}

We are now ready to present our algorithm. The idea is to use the Splitting~\Cref{lem:splitting-steinitz} to 
convolve smaller sequences which are bounded and monotone.

Let $\calI = \{(w_i, p_i)\}_{i=1}^n$ with capacity constraint $W$ be an instance of \UnboundedKnapsack{}.
Since any item has $w_i \geq 1$, we know that any solution $x \in \N^n$ consists of at most $W$ items.
Thus, to compute the value of the optimal solution it suffices to compute $\calP_k[W]$ where $k := \lceil \log W \rceil$.

Our approach is as follows. We do binary search for $\OPT$ in the range $[p_{\max} \cdot W]$. Suppose we have the current guess $\alpha$.
Instead of computing the arrays $\calP_i$, we compute \emph{clipped versions}, i.e., $C_i$ which has the property that 
$C_i[j] \geq \alpha$ if and only if $\calP_i[j] \geq \alpha$. 

We compute $C_i$ as follows: At every step, we only compute the values for $O(\Delta)$ weights
$C_i[W\cdot 2^{i-k} \pm 8\Delta]$. For the base case $i = 0$, we simply set $C_0[0,\dots,8\Delta] := \calP_0[0,\dots,8\Delta]$.
Note that this can be done in time $O(n+\Delta)$ by doing one pass over the item set, since $\calP_0$ only considers 
solutions with at most one item. Moreover, observe that $C_0[0,\dots,8\Delta]$ is monotone non-decreasing
by definition of $\calP_0$.

For the general case $i > 0$ we first compute an array $A_i[W\cdot 2^{i-k} \pm 8\Delta]$ by taking the 
$(\max, +)$-convolution of $C_{i-1}[W \cdot 2^{i-1-k} \pm 8\Delta]$ with itself.
To obtain $C_i[W\cdot 2^{i-k} \pm 8\Delta]$, we clip the values in $A_i$ which are too large, and set to 
$-\infty$ the values which are to small. This ensures that all values in $C_i$ lie
within a range of $O(\Delta)$, except for values that are $-\infty$. \Cref{alg:steinitz-knapsack} contains the pseudocode.

\begin{algorithm}
    \caption{Given an instance $(\calI, W)$ of \UnboundedKnapsack{} and a guess $\alpha \in [p_{\max} \cdot W]$, the algorithm computes a
    value $C_k[W]$ satisfying the guarantee in~\Cref{lem:guarantee-alg-steinitz}.}\label{alg:steinitz-knapsack}
    \begin{algorithmic}[1]
        \State $k := \lceil \log W \rceil$
        \State Initialize $C_0[0,\dots,8\Delta] := \calP_0[0,\dots,8\Delta]$ by iterating over the item set $\calI$ once
        \For{$i = 1,\dots,k$}
            \State $A_i[W\cdot 2^{i-k}\pm8\Delta] := 
                C_{i-1}[W\cdot 2^{i-1-k}\pm 8\Delta] \oplus 
                C_{i-1}[W\cdot 2^{i-1-k}\pm 8\Delta]$
            \State $C_i[j] := 
                \begin{cases}
                    \lceil \alpha \cdot 2^{i-k} \rceil + 24\Delta & \text{if } A_i[j] > \alpha \cdot 2^{i-k} + 24\Delta \\
                    -\infty & \text{if } A_i[j] < \alpha \cdot 2^{i-k} - 40\Delta \\
                    A_i[j] & \text{otherwise}
                \end{cases}$
        \EndFor
        \Return $C_k[W]$
    \end{algorithmic}
\end{algorithm}

Due to the clipping, at every step we compute a $(\max, +)$-convolution of sequences of length 
$O(\Delta)$ and values in $[O(\Delta)] \cup \{-\infty\}$ (after shifting the indices and values appropriately). Furthermore, note 
that all convolutions involve monotone non-decreasing sequences. Indeed, as noted above the starting sequence $C_0$ is 
monotone non-decreasing. Convolving it with itself produces a monotone non-decreasing sequence again, and the clipping 
in line 5 of~\Cref{alg:steinitz-knapsack} preserves monotonicity. The same argument applies for 
further iterations. Thus, the running time of~\Cref{alg:steinitz-knapsack} is $O(n + T(\Delta)\log W)$,
where $T(\Delta)$ is the running time to compute \BMMaxConv{} on sequences of length~$\Delta$.

Regarding correctness, we claim the following:

\begin{claim}
    For every $i \in [k]$ and every index $j \in [W\cdot 2^{i-k} \pm 8\Delta] \cap [W]$ the following holds:
    \begin{itemize}
        \item If $\calP_i[j] \in [\alpha \cdot 2^{i-k} - 40\Delta, \alpha\cdot 2^{i-k} + 24\Delta]$, then $C_i[j] = \calP_i[j]$.
        \item If $\calP_i[j] > \alpha \cdot 2^{i-k} + 24\Delta$, then $C_i[j] = \lceil \alpha \cdot 2^{i-k} \rceil + 24\Delta$.
        \item If $\calP_i[j] < \alpha \cdot 2^{i-k} - 40\Delta$, then $C_i[j] = -\infty$.
    \end{itemize}
\end{claim}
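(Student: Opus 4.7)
The plan is to proceed by induction on $i$. Writing $\theta_i := \alpha \cdot 2^{i-k}$, the ``middle range'' at level $i$ is $[\theta_i - 40\Delta, \theta_i + 24\Delta]$. For the base case $i=0$, since $k = \lceil \log W \rceil$ gives $\theta_0 \le p_{\max} \le \Delta$ and every entry $\calP_0[j]$ lies in $[0, p_{\max}] \subseteq [0, \Delta]$, each $\calP_0[j]$ automatically falls in the middle range, and the claim reduces directly to the initialization $C_0 = \calP_0$. Before turning to the inductive step, I would establish two auxiliary facts: \emph{(a)} $C_{i-1}[j] \le \calP_{i-1}[j]$ for all relevant $j$ -- this needs a short integrality argument in the clipped-above case, using that an integer strictly greater than $\theta_{i-1} + 24\Delta$ is also $\ge \lceil \theta_{i-1} \rceil + 24\Delta$; and \emph{(b)} $\calP_i[j] = (\calP_{i-1} \oplus \calP_{i-1})[j]$ on the relevant index range, where ``$\ge$'' is immediate from combining two $\le 2^{i-1}$-item solutions and ``$\le$'' follows from \Cref{lem:splitting-steinitz}.

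For the inductive step, given any $j$ in the output range of $A_i$, I would invoke \Cref{lem:splitting} with $\beta := W \cdot 2^{i-k}$ to produce indices $j_1^*, j_2^*$ in the input range $[W \cdot 2^{i-1-k} \pm 8\Delta]$ of $C_{i-1}$ satisfying $\calP_i[j] = \calP_{i-1}[j_1^*] + \calP_{i-1}[j_2^*]$ and $\calP_{i-1}[j_k^*] \in [\calP_i[j]/2 \pm 2\Delta]$ for $k \in \{1,2\}$. In the first (middle-range) case, these approximate splits place both $\calP_{i-1}[j_k^*]$ strictly inside the middle range at level $i-1$, so the inductive hypothesis gives $C_{i-1}[j_k^*] = \calP_{i-1}[j_k^*]$ and therefore $A_i[j] \ge \calP_i[j]$; combining with $A_i[j] \le \calP_i[j]$ (from (a) and (b)) yields $A_i[j] = \calP_i[j]$, which the clipping step preserves. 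In the third (too-small) case, the same chain already gives $A_i[j] \le \calP_i[j] < \theta_i - 40\Delta$, so the clipping sets $C_i[j] = -\infty$ as required. In the second (too-large) case, the splitting bounds guarantee $\calP_{i-1}[j_k^*] > \theta_{i-1} + 10\Delta$, so neither $C_{i-1}[j_k^*]$ is $-\infty$; a short case split on whether each $\calP_{i-1}[j_k^*]$ exceeds $\theta_{i-1} + 24\Delta$ shows $C_{i-1}[j_1^*] + C_{i-1}[j_2^*] > \theta_i + 24\Delta$ in every sub-case, using either $\calP_{i-1}[j_1^*] + \calP_{i-1}[j_2^*] = \calP_i[j] > \theta_i + 24\Delta$ (when both stay in the middle range) or the lower bound $\lceil \theta_{i-1} \rceil + 24\Delta$ on the clipped value (otherwise); this forces $A_i[j]$ into the clipped-above branch and yields $C_i[j] = \lceil \theta_i \rceil + 24\Delta$.

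The main obstacle is the too-large sub-case: when one of $\calP_{i-1}[j_k^*]$ lies above the middle range, $C_{i-1}$ replaces it by the potentially smaller value $\lceil \theta_{i-1} \rceil + 24\Delta$, and one must verify that the resulting convolution sum still strictly exceeds $\theta_i + 24\Delta$. This is exactly where the specific constants $40\Delta$ and $24\Delta$ in the clipping ranges come in: they are chosen precisely so that the $\pm 2\Delta$ slack from the Splitting Lemma propagates across the levels of the induction without eroding the threshold for either the clipped-above or the $-\infty$ branch.
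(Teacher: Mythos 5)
Your proof is correct and follows essentially the same approach as the paper: induction on $i$, the trivial base case using $\alpha\cdot 2^{-k}\le p_{\max}\le\Delta$, the observation $C_i[j]\le A_i[j]\le\calP_i[j]$ as the one-sided control, and \Cref{lem:splitting} to produce indices $j_1,j_2$ in the input window of the convolution with approximate profit-halving. The one organizational difference is that you case-split on where $\calP_i[j]$ falls (which mirrors the three bullets of the claim directly), whereas the paper case-splits on where $\calP_{i-1}[j_1]$ falls and implicitly lets its ``middle-range'' case cover all three bullets via the thresholding; your arrangement makes the too-small bullet a one-liner from $A_i[j]\le\calP_i[j]<\alpha\cdot 2^{i-k}-40\Delta$, at the cost of an explicit sub-case split in the too-large bullet that the paper avoids by starting from $\calP_{i-1}[j_1]>\alpha\cdot 2^{i-1-k}+24\Delta$. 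Both are sound; the constants work out in each sub-case exactly as you verify (the mixed sub-case gives a sum exceeding $\alpha\cdot 2^{i-k}+34\Delta$, the all-clipped one exceeds $\alpha\cdot 2^{i-k}+48\Delta$). Your explicit flag of the integrality point needed to make the clipped-above branch not increase the value is a detail the paper leaves implicit and is worth stating.
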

Intuitively, the claim says that entries ``close'' to the (scaled) guess $\alpha \cdot 2^{i-k}$ get computed exactly,
while entries below and above get clipped appropriately.

\begin{proof}
    We prove the claim by induction on $i$. In the base case $i = 0$, note that since $\alpha \in [p_{\max}\cdot W]$ and $k = \lceil \log W \rceil$ we have $\alpha \cdot 2^{-k} \le p_{\max} \le \Delta$. Thus, $[\alpha \cdot 2^{-k} - 40 \Delta, \alpha \cdot 2^{-k} + 24 \Delta]$ contains the whole interval $[0,\Delta]$ of possible values of $\calP_0[j]  = C_0[j]$ (for any $0 \le j \le 8\Delta$).

    Now we show that the claim holds for any $1 \le i \leq k$ assuming it holds for $i - 1$. Fix any $j \in [W\cdot2^{i-k}\pm 8\Delta]$.
    Note that the thresholding in line 4 of~\Cref{alg:steinitz-knapsack} does not increasy any of the entries in $A_i$,
    so $C_i[j] \leq A_i[j]$. 
    Moreover, since inductively $C_{i-1}[j'] \le \calP_{i-1}[j']$ for all $j'$, by definition of $A_i$ we have $A_i[j] \le \calP_i[j]$. 
    Hence, we obtain $C_i[j] \leq \calP_i[j]$.
    We use this observation to obtain the claim, by showing an appropriate lower bound for $C_i[j]$ in the following. 
    
    Pick indices $j_1, j_2$ as guaranteed by~\Cref{lem:splitting}.
    Property (i) of~\Cref{lem:splitting} guarantees that the computation of $A_i[j]$ in line 3 of~\Cref{alg:steinitz-knapsack}
    looks at the entries $j_1, j_2$ in $C_{i-1}$. Hence, $A_i[j] \geq C_{i-1}[j_1] + C_{i-1}[j_2]$.

    We proceed by a case 
    distinction on the values of the entries $\calP_{i-1}[j_1]$ and $\calP_{i-1}[j_2]$:

    \paragraph*{Case 1: $\calP_{i-1}[j_1], \calP_{i-1}[j_2] \in [\alpha \cdot 2^{i-1-k}-40\Delta, \alpha\cdot 2^{i-1-k} + 24\Delta]$}
    By the induction hypothesis, both values are computed exactly, that is, $C_{i-1}[j_1] = \calP_{i-1}[j_1]$ and 
    $C_{i-1}[j_2] = \calP_{i-1}[j_2]$. Thus, $A_i[j] \geq C_{i-1}[j_1] + C_{i-1}[j_2] = \calP_{i-1}[j_1] + \calP_{i-1}[j_2] = \calP_i[j]$, using property (ii) of~\Cref{lem:splitting}. Since we observed above that
    $A_i[j] \leq \calP_i[j]$, we obtain $A_i[j] = \calP_i[j]$. The thresholding in line 4 of~\Cref{alg:steinitz-knapsack}
    now yields the claim for this case.

    \paragraph*{Case 2: $\calP_{i-1}[j_1] > \alpha\cdot 2^{i-1-k} + 24\Delta$}
    Property (iii) of~\Cref{lem:splitting} implies that $|\calP_{i-1}[j_1] - \calP_{i-1}[j_2]| \leq 4\Delta$, and hence
    $\calP_{i-1}[j_2] \geq \alpha \cdot 2^{i-1-k} + 20\Delta$. Thus, property (ii) of~\Cref{lem:splitting} implies $\calP_i[j] = \calP_{i-1}[j_1] + \calP_{i-1}[j_2] > \alpha \cdot 2^{i-k} + 24\Delta$.
    Therefore, we want to show that $C_i[j] = \lceil \alpha \cdot 2^{i-k} \rceil + 24\Delta$.

    By the induction hypothesis, we have $C_{i-1}[j_1] = \lceil \alpha \cdot 2^{i-1-k} \rceil + 24\Delta$ and 
    $C_{i-1}[j_2] \geq \alpha \cdot 2^{i-1-k} + 20\Delta$. Hence, 
    $A_i[j] \geq C_{i-1}[j_1] + C_{i-1}[j_2] > \alpha \cdot 2^{i-k} + 40\Delta$. Due to the thresholding, we conclude that
    $C_i[j] = \lceil \alpha \cdot 2^{i-k} \rceil + 24\Delta$, as desired.

    \paragraph*{Case 3: $\calP_{i-1}[j_1] < \alpha \cdot 2^{i-k} - 40\Delta$} Similarly as in case 2, property (iii) of~\Cref{lem:splitting} implies 
    that $\calP_{i-1}[j_2] \leq \alpha \cdot 2^{i-1-k} - 36\Delta$. Thus, $\calP_i[j] = \calP_{i-1}[j_1] + \calP_{i-1}[j_2] < \alpha \cdot 2^{i-k} - 76\Delta$. 
    Since $C_i[j] \leq \calP_i[j]$, but $C_i[j]$ takes values in $\{-\infty\} \cup [\alpha \cdot 2^{i-k} \pm 40\Delta]$ it follows that 
    $C_i[j] = -\infty$. \qedhere
\end{proof}

Given the claim, it is easy to see that $C_k[W] \geq \alpha$ if and only if $\calP_k[W] \geq \alpha$. Along with 
the running time analysis argued earlier, we obtain the following lemma.
\begin{lemma}\label{lem:guarantee-alg-steinitz}
    \Cref{alg:steinitz-knapsack} runs in time $O(n + T(\Delta) \log W)$, where $T(\Delta)$ is the time complexity of 
    \BMMaxConv{} on sequences of length $\Delta$, and computes a value
    $C_k[W]$ which satisfies $C_k[W] \geq \alpha$ if and only if $\OPT = \calP_k[W] \geq \alpha$.
\end{lemma}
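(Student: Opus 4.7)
The plan is to extract both the running-time bound and the correctness statement directly from the Claim that has just been established, so the proof of the lemma is essentially bookkeeping built on top of the induction already performed.

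For the running time, I would first observe that initializing $C_0[0,\dots,8\Delta]$ costs $O(n + \Delta)$: one sweep through $\calI$ suffices because $\calP_0[j]$ is just the maximum profit of an item with weight at most $j$. For the main loop, the key invariant to maintain (as the algorithm text already sketches) is that $C_{i-1}$ is monotone non-decreasing and, ignoring $-\infty$ entries, its values lie in an interval of length $O(\Delta)$. Monotonicity of $C_0$ is immediate; convolving a monotone sequence with itself produces a monotone sequence, and the clipping in line 5 applies a monotone map entrywise, so the invariant persists. After an additive shift of indices and values, the computation of $A_i$ from $C_{i-1}$ is an instance of \BMMaxConv{} on sequences of length $O(\Delta)$ with entries in $[O(\Delta)] \cup \{-\infty\}$; by \Cref{prop:maxconv-range} together with the niceness assumptions on $T$, it costs $\widetilde{O}(T(\Delta))$. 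The thresholding adds $O(\Delta)$ per level. Summing over $k = \lceil \log W \rceil$ levels yields the claimed $O(n + T(\Delta)\log W)$.

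For correctness, I would instantiate the Claim at $i = k$ and $j = W$, which is legal because $W \in [W \cdot 2^{k-k} \pm 8\Delta] \cap [W]$. Since $2^{i-k} = 1$ at level $k$, the three cases of the Claim read: if $\calP_k[W] \in [\alpha - 40\Delta, \alpha + 24\Delta]$ then $C_k[W] = \calP_k[W]$, so the comparison with $\alpha$ is answered exactly; if $\calP_k[W] > \alpha + 24\Delta$ then $C_k[W] = \lceil \alpha \rceil + 24\Delta \ge \alpha$ and simultaneously $\calP_k[W] > \alpha$; and if $\calP_k[W] < \alpha - 40\Delta$ then $C_k[W] = -\infty < \alpha$ and simultaneously $\calP_k[W] < \alpha$. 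In every case $C_k[W] \ge \alpha$ is equivalent to $\calP_k[W] \ge \alpha$, which is exactly what the lemma asserts.

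The only genuinely subtle point, and the main thing to be careful about, is verifying that each invocation of the subroutine really is a legitimate \BMMaxConv{} instance: one must check that after clipping at the previous level, the slice $C_{i-1}[W \cdot 2^{i-1-k} \pm 8\Delta]$ fed into \Cref{prop:maxconv-range} is indeed monotone non-decreasing and that its non-$-\infty$ entries can be shifted into $[O(\Delta)]$. Monotonicity is preserved because the clipping map $x \mapsto \min(\lceil \alpha \cdot 2^{i-1-k}\rceil + 24\Delta,\, x)$ combined with the $-\infty$ rule is weakly monotone, and the range bound follows from the very definition of the clipping thresholds. Once this is verified, the lemma follows immediately from the Claim and the running-time accounting above.
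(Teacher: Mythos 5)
Your proposal is correct and follows essentially the same approach as the paper: the running time is accounted for by noting that each level performs a \BMMaxConv{} on sequences of length $O(\Delta)$ with values in an $O(\Delta)$-range (monotonicity and boundedness preserved by the clipping), and correctness is obtained by instantiating the Claim at $i=k$, $j=W$ and checking the three cases. The paper's own proof is essentially a one-liner ("Given the claim, it is easy to see that...") pointing back to the running-time discussion and the Claim, so you have just fleshed out exactly the bookkeeping the paper leaves implicit; the only cosmetic difference is your $\widetilde{O}(T(\Delta))$ per level, where the niceness assumption on $T$ in fact gives the cleaner $O(T(\Delta))$ matching the lemma's stated bound.
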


Given~\Cref{lem:guarantee-alg-steinitz}, we can do binary search to find the optimal value. This gives an algorithm for
\UnboundedKnapsack{} in time $O((n+T(\Delta)) \log W \log \OPT)$. To shave the $\polylog(W, \OPT)$ factors and 
obtain the running time $\widetilde{O}(n + T(\Delta))$ claimed 
in~\Cref{thm:reduction-unboundedknapsack-to-conv}, we make use of the following lemma. It allows us to reduce the 
capacity of the instance by repeatedly adding copies of the item with maximum profit-to-weight ratio.
Similar results have been shown for general ILPs~\cite{EisenbrandW20},
for \UnboundedKnapsack{}~\cite{BateniHSS18} and for the Coin Change problem~\cite{ChanH22}. For completeness, 
we include the proof by Chan and He~\cite[Lemma 4.1]{ChanH22}.\footnote{Both Chan and He~\cite{ChanH22} 
and Bateni et al.~\cite{BateniHSS18} show that the same conclusion of the lemma holds if $W > w_{i^*}^2$, with a slightly more 
involved argument. For our purposes, this simple variant is enough.}
\begin{lemma}\label{lem:greedy}
    Let $(p_{i^*}, w_{i^*}) := \argmax_{(p,w) \in \calI} \tfrac{p}{w}$. If $W \geq 2w_{\max}^3$, then there exists an optimal 
    solution containing $(p_{i^*}, w_{i^*})$.
\end{lemma}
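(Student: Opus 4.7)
The plan is to use a standard exchange argument: starting from any optimal solution $x$ that does not contain $i^*$, I will construct an equally good solution $x'$ that does contain $i^*$. The key combinatorial observation is that since $W$ is large relative to $w_{\max}^3$, some item in $x$ must be used with very high multiplicity, which gives enough room to swap it out for copies of $i^*$.

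The first step is to rule out trivial corner cases. If $p_{i^*} = 0$, then by definition of $i^*$ all items have profit $0$, so $\OPT = 0$ and the single-item solution $e_{i^*}$ is optimal; we may therefore assume $p_{i^*} \ge 1$. Using the footnote convention that items of equal weight are merged, I may further assume that all items have pairwise distinct weights, so in particular the number of distinct items is at most $w_{\max}$.

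Next, I show any optimal $x$ not containing $i^*$ must nearly fill the knapsack: if $w(x) \le W - w_{i^*}$, then $x + e_{i^*}$ would be feasible with strictly larger profit, contradicting optimality. Hence $w(x) > W - w_{i^*} \ge 2w_{\max}^3 - w_{\max}$, and therefore
\[
    \|x\|_1 \;\ge\; \frac{w(x)}{w_{\max}} \;>\; 2w_{\max}^2 - 1.
\]
Since the at most $w_{\max}$ distinct items share at least $2w_{\max}^2$ multiplicities between them, pigeonhole yields some item $j$ used in $x$ with $x_j \ge 2w_{\max} \ge w_{i^*}$. Because $x$ does not contain $i^*$, we have $j \ne i^*$.

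Finally, I perform the exchange $x' := x - w_{i^*} e_j + w_j e_{i^*}$. This is a valid nonnegative integer vector by the multiplicity bound $x_j \ge w_{i^*}$, and it satisfies $w(x') = w(x) - w_{i^*} w_j + w_j w_{i^*} = w(x) \le W$. By the optimality of the ratio $p_{i^*}/w_{i^*}$, we have $w_j\, p_{i^*} \ge w_{i^*}\, p_j$, so $p(x') \ge p(x) = \OPT$, meaning $x'$ is itself optimal and contains $i^*$. The main (very mild) obstacle is simply keeping the pigeonhole bookkeeping tight enough that the chosen item has multiplicity at least $w_{i^*}$; the $2w_{\max}^3$ threshold in the hypothesis is exactly calibrated for this, and the rest is routine.
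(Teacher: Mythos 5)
Your proof is correct and uses essentially the same exchange argument as the paper: locate an item $j$ of multiplicity at least $w_{i^*}$ and swap $w_{i^*}$ copies of $j$ for $w_j$ copies of $i^*$, using the optimality of the ratio $p_{i^*}/w_{i^*}$ to see that profit does not drop. The only difference is in how the high-multiplicity item is found: the paper argues by contraposition (if every multiplicity is below $w_{i^*}$ then $w(x) < w_{\max}^3$, forcing $W < 2w_{\max}^3$ since one more copy of $i^*$ would otherwise fit), whereas you argue directly that $x$ must nearly fill the knapsack and then pigeonhole on $\|x\|_1 \geq 2w_{\max}^2$ over at most $w_{\max}$ distinct items. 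You also treat the degenerate case $p_{i^*}=0$ explicitly, which the paper silently skips (its ``otherwise we could add a copy of $i^*$'' step implicitly needs $p_{i^*}>0$); this is a small but genuine tightening of the argument.
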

\begin{proof}
    Consider an optimal solution $x$ that does not contain item $(p_{i^*}, w_{i^*})$. If there is an item $(p_j, w_j)$ that appears at least $w_{i^*}$ times in $x$, then we can replace 
    $w_{i^*}$ of the copies of item $(p_j, w_j)$ by $w_j$ copies of item $(p_{i^*}, w_{i^*})$. By definition of $(p_{i^*}, w_{i^*})$, 
    this does not decrease the total profit of the solution, so by optimality of $x$ the new solution $x'$ is also optimal. Therefore, some optimal solution contains $(p_{i^*}, w_{i^*})$.
    
    It remains to consider the case that $x$ contains less than $w_{i^*}$ copies of every item, so its total weight is at most $n \cdot w_{i^*} \cdot w_{\max}$.
    Note  that $n \leq w_{\max}$, because without loss of generality there is at most one item per distinct weight (otherwise we can keep only the item with 
    the largest profit for each weight). Thus, the total weight of $x$ is at most $w_{\max}^3$. It follows that $W < w_{\max}^3 + w_{i^*} \leq 2w_{\max}^3$,
    since otherwise we could add at least one copy of $(p_{i^*}, w_{i^*})$ to $x$, contradicting its optimality.
\end{proof}

We now put all pieces together: As a preprocessing step we repeatedly add the item $(p_{i^*}, w_{i^*})$ and decrease $W$ by $w_{i^*}$,
as long as $W > 2w_{\max}^3$. After this preprocessing, we have $W = O(w_{\max}^3) = O(\Delta^3)$, and thus
$\OPT \leq W \cdot p_{\max} = O(\Delta^4)$. The we do binary search for $\OPT$, using~\Cref{alg:steinitz-knapsack} as a decision procedure.
By~\Cref{lem:guarantee-alg-steinitz}, the overall running time is $O((n + T(\Delta)) \log^2 \Delta) = \widetilde{O}(n + T(\Delta))$. 
This completes the proof of~\Cref{thm:reduction-unboundedknapsack-to-conv}.

\subsection{Solution Reconstruction}

The algorithm we described gives us the value $\OPT$ of the optimal solution. In this section we will describe how to 
use witness arrays (\Cref{lem:witness-finding}) to
reconstruct a feasible solution $x \in \N^n$ such that $p(x) = \OPT$ with only a polylogarithmic overhead in the overall running time.

\begin{lemma}
    A optimal solution $x$ can be reconstructed in time $\widetilde{O}(n + T(p_{\max} + w_{\max}))$.
\end{lemma}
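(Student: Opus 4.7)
The plan is to rerun~\Cref{alg:steinitz-knapsack} with the guess $\alpha = \OPT$ (found by the binary search), remembering the clipped arrays $C_i, A_i$ at every level $i = 0,\dots,k$. On top of each convolution step $A_i = C_{i-1} \oplus C_{i-1}$, I would invoke the witness-finding procedure of~\Cref{lem:witness-finding} to obtain a witness array $M_i$; by our niceness assumption, this costs $\widetilde O(T(\Delta))$ per level and hence $\widetilde O(T(\Delta))$ in total. I also precompute, for each relevant $j$, which single item attains $\calP_0[j]$ by a single pass over $\calI$, at cost $O(n + \Delta)$.

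Starting from the root subproblem $(k, W)$, which by the claim satisfies $C_k[W] = \calP_k[W] = \OPT$, I would recurse: split $(i, j)$ into $(i-1, j_1)$ and $(i-1, j_2)$ with $j_1 := M_i[j]$ and $j_2 := j - j_1$. The key soundness observation is that at every visited node the entry is \emph{exact}, i.e.\ $C_i[j] = \calP_i[j]$. Indeed, inductively $C_{i-1}[j_1] + C_{i-1}[j_2] = C_i[j] = \calP_i[j] = (\calP_{i-1} \oplus \calP_{i-1})[j] \ge \calP_{i-1}[j_1] + \calP_{i-1}[j_2]$, while $C_{i-1} \le \calP_{i-1}$ pointwise forces both $C_{i-1}[j_1] = \calP_{i-1}[j_1]$ and $C_{i-1}[j_2] = \calP_{i-1}[j_2]$, which by the claim means both lie in the exact middle range. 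Thus the recursion never ``lands'' on a clipped entry, and at level $0$ each reached $(0, j)$ corresponds to picking the single item realizing $\calP_0[j]$.

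The obstacle is efficiency: a naive traversal of the recursion tree has $2^k = \Theta(W)$ leaves, which is too expensive. I would resolve this by processing subproblems \emph{level by level with multiplicities}: maintain, for each level $i$, a table assigning to each index $j$ in the window $[W \cdot 2^{i-k} \pm 8\Delta]$ a non-negative multiplicity $m_i[j]$. Initialize $m_k[W] := 1$, and for each level $i$ from $k$ down to $1$ update $m_{i-1}[j_1] \mathrel{+}= m_i[j]$ and $m_{i-1}[j_2] \mathrel{+}= m_i[j]$ for every $j$ with $m_i[j] > 0$. Since each level has only $O(\Delta)$ entries, the total work is $O(k \cdot \Delta) = \widetilde O(\Delta)$. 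At level~$0$ the final multiplicities $m_0[j]$ directly yield the multiset $x$: add $m_0[j]$ copies of the item realizing $\calP_0[j]$. Finally, add the copies of the best ratio item $(p_{i^*}, w_{i^*})$ that were accumulated during the greedy preprocessing from~\Cref{lem:greedy}. Combining $\widetilde O(n + T(\Delta))$ for Algorithm~\ref{alg:steinitz-knapsack} and the witness arrays with $\widetilde O(\Delta)$ for the reconstruction gives the claimed $\widetilde O(n + T(p_{\max} + w_{\max}))$ bound.
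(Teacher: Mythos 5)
Your proof is correct and follows essentially the same approach as the paper's: rerun Algorithm~\ref{alg:steinitz-knapsack} with $\alpha = \OPT$, extract witness arrays via Lemma~\ref{lem:witness-finding}, and then propagate multiplicities level by level (your tables $m_i$ are the paper's $Z_i$) in $\widetilde O(\Delta)$ total time. Your write-up is in fact slightly more complete than the paper's sketch in two respects: you explicitly verify the soundness invariant $C_i[j] = \calP_i[j]$ along the recursion, and you remember to add back the copies of the best-ratio item accumulated during the greedy preprocessing of Lemma~\ref{lem:greedy}.
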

\begin{proof}[Proof Sketch]
    Let $k = \lceil \log W \rceil$ be as in~\Cref{alg:steinitz-knapsack}.
    After determining the value of $\OPT$, run~\Cref{alg:steinitz-knapsack} again with the 
    guess $\alpha = \OPT$. For every \BMMaxConv{} in line 4 compute the witness array $M_i$ corresponding to
    $A_i$ via~\Cref{lem:witness-finding}. This takes time $\widetilde{O}(n + T(p_{\max} + w_{\max}))$. 
    Now, the idea is to start from $C_k[W]$ and traverse the \emph{computation tree} 
    of~\Cref{alg:steinitz-knapsack} backwards. That is, we look at the pair of entries $C_{k-1}[M_k[W]], C_{k-1}[W - M_k[W]]$
    which define the value of $C_k[W]$ and recursively obtain the pair of entries in $C_{k-2}$ determining the value 
    of $C_{k-1}[M_k[W]]$, etc. By proceeding in this way, we eventually hit the leaves, i.e.,
    the entries of $C_0[0,\dots,8\Delta] = \calP_0[0,\dots,8\Delta]$, which correspond to the items 
    in an optimal solution. A naive implementation of this idea takes time $O(\sum_{i \leq k}2^i) = O(2^k) = O(W)$, which is too slow.

    Now we describe an efficient implementation of the same idea. For each $i \in [k]$ construct an array 
    $Z_i[W \cdot 2^{i-k} \pm 8\Delta]$ initialized to zeros. Set $Z_k[W] := 1$. We will maintain the invariant that $Z_i[j]$ 
    stores the number of times we arrive at $C_i[j]$ by traversing the computation tree starting at $C_k[W]$.
    This clearly holds for $Z_k[W] = 1$ by definition. Now we describe how to fill the entries for the levels below.
    Iterate over $i = k,k-1,\dots,1$. For each entry 
    $j \in [W \cdot 2^{i-k} \pm 8\Delta] \cap [W]$ add $Z_i[j]$ to its witness entries in the level below, i.e., 
    increase $Z_{i-1}[M_i[j]]$ by $Z_i[j]$ and $Z_{i-1}[j-M_i[j]]$ by $Z_i[j]$.
    The invariant is maintained by definition of the witnesses, and because~\Cref{alg:steinitz-knapsack} guarantees that 
    $M_i[j], j-M_i[j] \in [2 \cdot 2^{i-1-k} \pm 8\Delta] \cap [W]$. Note that this procedure takes time 
    $O(k \Delta) = \widetilde{O}(\Delta)$.

    Finally, note that for the base case we have that each $Z_0[j]$ for $j \in [8\Delta]$ counts the number of times that we hit the 
    entry $C_0[j] = \calP_0[j]$ in the computation tree starting from $C_k[W]$. Recall that by definition, 
    $\calP_0[j]$ is the maximum profit of an item in $\calI$ with weight at most $j$. Hence, every entry 
    $\calP_0[j]$ corresponds to a unique item in $\calI$. Therefore, we can read off from $Z_0[0,\dots,8\Delta]$ the multiplicity 
    of each item included in an optimal solution. The overall time of the procedure is $\widetilde{O}(n + T(p_{\max} + w_{\max}))$,
    as claimed.
\end{proof}

\section{Exact Algorithm for \Knapsack{}}\label{sec:exactzerooneknapsack}

Cygan et al.~\cite{CyganMWW19} showed the following reduction from \Knapsack{} to \BMMaxConv{}:

\begin{theorem}[{\cite[Theorem 13]{CyganMWW19}}]\label{thm:reduction-knapsack-cygan}
	If \MaxConv{} on length-$n$ sequences can be solved in time $T(n)$, then 
	\Knapsack{} can be solved in time $O(T(W \log W) \log^3(n/\delta) \log n)$ with probability 
	at least $1 - \delta$.
\end{theorem}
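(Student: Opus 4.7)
The plan is to reprove Cygan et al.'s reduction by adapting Bringmann's \SubsetSum{} algorithm \cite{Bringmann17}, replacing Boolean convolutions with $(\max,+)$-convolutions throughout. The key combinatorial ingredient is color-coding. Since every feasible solution contains at most $W$ items (each item has weight $\ge 1$), we can randomly partition the $n$ items into $k = \Theta(W / \log(n/\delta))$ color classes; a Chernoff-plus-union-bound argument then shows that, with probability at least $1 - \delta/\poly(n)$, each class receives only $O(\log(n/\delta))$ items of any fixed optimal solution $x^*$. After a preprocessing step that reduces to $n = \widetilde O(W)$ (e.g., grouping items of identical or near-identical weight and keeping only the most profitable in each group), the same bound applies to the total size of each color class.

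For each color class $C_c$, the plan is to compute the profit array $P_c[0\,..\,W]$ with $P_c[j] = \max\{p(S) : S \subseteq C_c,\, w(S) = j\}$ by brute-force enumeration over all $2^{|C_c|} = \poly(n/\delta)$ subsets of $C_c$, which takes $\widetilde O(W)$ time per class. We then combine the $k$ profit arrays in a balanced binary tournament tree of $(\max,+)$-convolutions using the assumed \MaxConv{} algorithm; the root array yields the maximum profit over all solutions whose split across the color classes is captured by one of the enumerated subsets per class, and hence yields $\OPT$ whenever the coloring succeeds in the sense above. Repeating the whole construction $O(\log(n/\delta))$ times and taking the maximum over repetitions boosts the success probability to $1 - \delta$.

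The running-time analysis is the delicate part. At level $\ell$ of the tournament tree, each node summarizes subsets drawn from $2^\ell$ color classes, which collectively contain $\widetilde O(2^\ell)$ items of total weight $\widetilde O(2^\ell \cdot W/k) = \widetilde O(2^\ell \log(n/\delta))$; hence the effective array length at level $\ell$ is $\widetilde O(\min(W, 2^\ell \log(n/\delta)))$ and each convolution costs $T(\widetilde O(\min(W, 2^\ell \log(n/\delta))))$. Summing over the $k/2^\ell$ nodes at each level and all $\log k$ levels, using the niceness assumption $k \cdot T(n) \le O(T(kn))$, yields total time $\widetilde O(T(W \log W))$ per repetition and $\widetilde O(T(W \log W) \log^3(n/\delta) \log n)$ overall, matching the claimed bound.

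The main obstacle is the color-coding analysis: one must \emph{simultaneously} control (i) the absolute size of each color class (so that enumerating all its subsets is tractable) and (ii) the intersection of each class with a fixed but unknown optimal solution $x^*$ (so that the enumeration actually contains the restriction of $x^*$ to that class). Both requirements are met by the careful choice of $k$ together with $O(\log(n/\delta))$ independent repetitions, but proving this cleanly requires the preprocessing that bounds $n$ in terms of $W$ and an appropriately chosen union bound over a $\poly(n)$-sized family of "bad" events.
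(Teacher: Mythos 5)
Your proposal has the right broad outline (color-coding plus a tournament tree of $(\max,+)$-convolutions), but it is missing the decomposition that makes Bringmann's scheme --- and hence Cygan et al.'s adaptation --- actually work, and several of the quantitative claims do not hold.

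\paragraph{Missing dyadic weight layering.} The essential first step in Bringmann's algorithm (and in \cite[Theorem 13]{CyganMWW19}, and in the proof of \Cref{lem:reduction-0-1knapsack-to-maxconv} in this paper) is to bucket items into groups by weight scale: items of weight in $[2^{b-1},2^b)$ go into one group. This guarantees that any feasible solution uses at most $z \approx W/2^{b-1}$ items from that group, and --- crucially --- that the intermediate arrays one needs to convolve have length proportional to $2^b \cdot (\text{\#items so far})$, not to $W$. Your proposal applies a single random coloring to \emph{all} items at once. With items of weight as disparate as $1$ and $W$ in the same color class, your claim that a node at level $\ell$ of the tournament tree covers items of total weight $\widetilde O(2^\ell \cdot W/k)$ is simply false: a single heavy item makes the class's array reach index $\Omega(W)$, so the per-level array lengths cannot be bounded the way you assert, and the geometric sum over levels collapses.

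\paragraph{Running time of the brute-force step.} Even granting that each color class has $O(\log(n/\delta))$ items, computing the full array $P_c[0\,..\,W]$ costs $\Omega(W)$ per class just to write the output. Over $k = \Theta(W/\log(n/\delta))$ classes this is $\Omega(W^2/\log(n/\delta))$, already worse than $T(W\log W)\,\polylog$ (which could be $\widetilde O(W^{1.5})$). Moreover, after the preprocessing you describe one only has $n = O(W\log W)$, so the expected class size is $\Theta(\log W \cdot \log(n/\delta))$ and $2^{|C_c|}$ is super-polynomial. The total-size bound "$O(\log(n/\delta))$ items per class" holds for the \emph{intersection} with the $\le W$ items of $x^*$, not for the class size itself; your sentence ``the same bound applies to the total size'' conflates these two quantities.

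\paragraph{The role of color-coding.} If $P_c$ is computed \emph{exactly} (by brute force), then $P_1 \oplus \cdots \oplus P_k$ is exactly $\calP_{\calI}[0\,..\,W]$ and there is nothing for the ``intersection with $x^*$'' argument to buy you --- that analysis is vacuous. The point of the construction in \cite{Bringmann17,CyganMWW19} is precisely that one does \emph{not} compute the class profit arrays exactly. Within each weight-scale group $G_{b}$, the items are split into $z$ random subgroups so that any fixed solution meets each subgroup in only $\kappa = O(\log z)$ items; then, within each subgroup, a second random partition into $\kappa^2$ buckets is used, each bucket contributing only a single-item array of length $O(2^b)$. The $(\max,+)$-convolution of these short single-item arrays captures the restriction of $x^*$ to that subgroup whenever the ``birthday paradox'' shattering succeeds. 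This two-level random decomposition is what lets the arrays stay short at low levels of the tournament tree and makes the aggregate bound come out to $\widetilde O(T(W\log W))$. Your single-level coloring plus exhaustive enumeration does not achieve either goal.
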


Their reduction is a generalization of Bringmann's algorithm for \SubsetSum{}~\cite{Bringmann17}, replacing Boolean 
convolutions by $(\max,+)$-convolutions. We observe that essentially the same reduction yields 
sequences of length $O(W)$ which are monotone and have entries bounded by $\OPT$. In particular, these are \BMMaxConv{}
instances. This yields the following:

\begin{theorem}[$\Knapsack{} \rightarrow \BMMaxConv{}$]\label{lem:reduction-0-1knapsack-to-maxconv}
    If \BMMaxConv{} on length-$n$ sequences can be solved in time $T(n)$, then 
    \Knapsack{} can be solved in time $\widetilde{O}(n + T(W + \OPT))$ with high probability.
\end{theorem}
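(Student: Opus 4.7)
The plan is to revisit the proof of Theorem~\ref{thm:reduction-knapsack-cygan} of Cygan et al.\ and observe that the $(\max,+)$-convolution instances it produces are already of the bounded monotone form required by Theorem~\ref{thm:convolution-bounded}. Recall that their reduction is a lift of Bringmann's \SubsetSum{} algorithm: after a color-coding phase that buckets items by weight class, the algorithm performs a divide-and-conquer over the buckets, combining the profit-by-weight profiles of two halves by a $(\max,+)$-convolution at each internal node of the recursion tree.

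First I would make explicit the format of the intermediate sequences. For any subset $S$ of items that appears in the recursion, the stored sequence is $\calP_S[0,\dots,W]$ with $\calP_S[j]$ the maximum profit over sub(multi)sets of $S$ of total weight at most $j$; the ``at most'' semantics can be maintained by taking a prefix-max at each internal node in linear time. Two structural properties then fall out immediately: $\calP_S$ is monotone non-decreasing in $j$ by definition, and every finite entry is at most $\OPT$, since any feasible selection from $S$ is also a feasible selection from the original instance $\calI$. Entries corresponding to weights that are unreachable with exact-weight semantics can be set to $-\infty$, which is explicitly allowed by Theorem~\ref{thm:convolution-bounded}.

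Hence every $(\max,+)$-convolution performed in the reduction is between two monotone non-decreasing sequences of length $O(W)$ whose finite entries lie in $[O(\OPT)] \cup \{-\infty\}$. After padding the length up to $N := W + \OPT$ (repeating the last finite value so as to preserve monotonicity) so that the length upper-bounds the range of finite values as required by Theorem~\ref{thm:convolution-bounded}, each such convolution can be computed in time $T(N) = T(W + \OPT)$. Replacing the general \MaxConv{} subroutine in the analysis of Cygan et al.\ by this bounded monotone subroutine, and using the niceness assumptions from Section~\ref{sec:niceness} to absorb the $O(\log n)$ recursion depth and the polylogarithmic overhead of color coding into $\widetilde{O}(\cdot)$, yields the claimed bound $\widetilde{O}(n + T(W + \OPT))$. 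The success probability is driven to high probability by setting $\delta = 1/\mathrm{poly}(n)$ in Theorem~\ref{thm:reduction-knapsack-cygan}.

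The main step requiring care, and the place where I would spend most of the proof, is verifying that interleaving prefix-maxes between convolutions does not interfere with Bringmann's color-coding correctness argument: one needs that $(\max,+)$-convolving two ``at-most'' profit profiles still corresponds to choosing an optimal union of two disjoint sub-multisets, and that clamping to $-\infty$ at exact-weight-unreachable positions does not hide the optimum once the top-level prefix-max is taken. Once this is in place, the structural claims (monotonicity and value bound $\OPT$) are immediate from the definitions, and the theorem follows by direct substitution into the existing running-time analysis of~\cite{CyganMWW19}.
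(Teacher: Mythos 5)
Your central observation — that the profit profiles $\calP_S$ appearing in the reduction of Cygan et al.\ are monotone non-decreasing with finite entries bounded by $\OPT$, so that each $(\max,+)$-convolution in that algorithm is already a \BMMaxConv{} instance — is exactly the insight the paper uses. Two things, however, need correcting.

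First, the $-\infty$ discussion is a red herring: the arrays $\calP_S[j]$ are defined with ``weight at most $j$'' semantics, so $\calP_S[0]=0$ and every entry is a finite non-negative integer. There is no need to set unreachable positions to $-\infty$ or to interleave prefix-maxes; monotonicity is automatic from the definition.

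Second, and more seriously, your running-time accounting does not work. You propose to pad every convolution up to length $N = W + \OPT$ so that each call costs $T(W+\OPT)$, and then claim that the recursion depth and color-coding overhead are only polylogarithmic. But the number of convolutions performed is not polylogarithmic: for a fixed group $G_{(a,b)}$ the algorithm performs $\Theta(z\kappa^2)$ bucket convolutions and $\Theta(z)$ merge convolutions, and $z = \lceil \min\{W/2^{a-1}, \OPT/2^{b-1}\}\rceil$ can be as large as $\Theta(W)$. Charging each of those $T(W+\OPT)$ would give a bound like $\widetilde{O}(W \cdot T(W+\OPT))$, far worse than claimed. The actual argument in the paper never pads to the global size; it tracks that at level $j$ of the binary merging tree there are $z/2^j$ convolutions of sequences of length $O(2^j\cdot 2^a\kappa)$ and values $O(2^j\cdot 2^b\kappa)$, and uses the niceness assumption $k\cdot T(n) \leq O(T(kn))$ to charge the many small convolutions to one big one of size $O((2^a+2^b)z) = O(W+\OPT)$. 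An analogous careful count is needed for the $z\kappa^2$ bucket convolutions, each of size only $O((2^a+2^b)\kappa^2)$. Without this per-level accounting and the super-additivity of $T$, you cannot reach $\widetilde{O}(n + T(W+\OPT))$; the coarse bound ``each convolution $\leq T(W+\OPT)$'' is correct but far too lossy.
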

\begin{proof}
    The proof is virtually the same as~\cite[Theorem 13]{CyganMWW19}, so we omit some details. In particular,
    we emphasize how the constructed instances can be seen to be monotone and bounded, but we omit some details of the correctness argument.
    The idea of the algorithm is the following: split the item set $\calI$ into groups $G_{(a,b)} \subseteq \calI$
    such that all items $(p, w) \in \calI$ with $p \in [2^{a-1}, 2^a)$ and $w \in [2^{b-1}, 2^b)$ are in group $G_{(a,b)}$. That is,
    all items within each group have weights and profits within a factor of 2 of each other, and thus there are $O(\log W \log \OPT)$
    many groups. We will describe how to compute $\calP_{G_{(a,b)}}[0,\dots,W]$ for each $G_{(a,b)}$. Having that, we simply combine 
    all the profit arrays into $\calP_{\calI}[0,\dots,W]$ using $(\max,+)$-convolutions. Since we have $O(\log W \log \OPT)$ groups, and each profit array is a monotone non-decreasing sequence of length $W$ with entries bounded by $\OPT$, the combination step takes 
    time $\widetilde{O}(T(W + \OPT))$.

    Fix some group $G_{(a,b)}$. Since every $(p,w) \in G_{(a,b)}$ has $w \in [2^{a-1}, 2^a)$ and $p \in [2^{b-1}, 2^b)$, any 
    feasible solution from $G_{(a,b)}$ consists of at most $z := \lceil \min\{W/2^{a-1}, \OPT/2^{b-1}\} \rceil$ items. Thus, by splitting the items in
    $G_{(a,b)}$ randomly into $z$ subgroups $G_{(a,b),1},\dots,G_{(a,b),z}$, any fixed feasible solution has at most $O(\log z)$ items in 
    each subgroup $G_{(a,b), k}$ with high probability. To see this, fix a solution $x$ and note that, 
    \begin{align*}
        \Pr[\text{at least } r \text{ items from }x \text{ fall in } G_{(a,b), k}] 
            &\leq {z \choose r}\left(\frac{1}{z}\right)^r
            \leq \left(\frac{e \cdot z}{r}\right)^r \left(\frac{1}{z}\right)^r = \left(\frac{e}{r}\right)^r,
    \end{align*}
    where the first inequality follows due to a union bound over all subsets of items of size $r$.
    By setting $r = O(\log z)$, we can bound this probability by $z^{-c}$ for any constant $c$. So by a union bound,
    none of the $z$ groups $G_{(a,b), 1},\dots,G_{(a,b), z}$ has more than $\kappa := O(\log z)$ elements from the fixed solution $x$
    with probability at least $1 - 1/\poly(z)$.
    
    Therefore, to obtain the value of any fixed solution it suffices to compute the 
    optimal solution consisting of at most $\kappa$ items from $G_{(a,b), k}$ for every target weight $\leq O(2^a \kappa)$, and then merge the results. 
    More precisely, for every $1 \leq i \leq z$ we compute the array $\calP_{G_{(a,b), i}, \log(\kappa)}[0,\dots,O(2^a \kappa)]$.
    Recall that this is defined as 
    \[
        \calP_{G_{(a,b),i},\log(\kappa)}[j] := \max\{p(x) \colon x \text{ is a solution from } G_{(a,b),i} \text{ with } w(x) \leq j,  \|x\|_1 \leq \kappa \}
    \]
    for each $j \in [O(2^a \kappa)]$. For ease of notation, we denote the array by $\calP_{G_i, \kappa} := \calP_{G_{(a,b),i},\log \kappa}$.
    
    Then, we merge the $\calP_{G_i,\kappa}$'s using $(\max,+)$-convolutions. Now we describe these two steps in more detail:

    \paragraph*{Computing $\calP_{G_i,\kappa}[0,\dots, O(2^a \kappa)]$} Since we only care about solutions with at most $\kappa$ items, we 
    use randomization again\footnote{This step is called ``Color Coding'' in~\cite{Bringmann17,CyganMWW19}.}: split the items in $G_{(a,b), i}$ 
    into $\kappa^2$ buckets $A_1,\dots,A_{\kappa^2}$. By the birthday paradox, with constant 
    probability it holds that any fixed solution is shattered among the buckets, i.e., each bucket contains at most 1 item of the solution.
    Thus, for each bucket $A_{k}$ we construct the array $\calP_{A_k, 0}[0,\dots,2^a]$. Recall that this is defined as
    \[
        \calP_{A_k, 0}[j] := \max\{p(x) \colon x \text{ is a solution from } A_k \text{ with } w(x) \leq j, \|x\|_1 \leq 1\}
    \]
    for each entry $j \in [2^a]$.
    To combine the results, we compute $\calP_{A_1,0} \oplus \calP_{A_2,0} \oplus \dots \oplus \calP_{A_{\kappa^2}, 0}$. 
    By definition, every $\calP_{A_i,0}$ is a monotone non-decreasing sequence of length $2^a$ with entries bounded by $2^b$. Thus, 
    the merging step takes time $O(T((2^a + 2^b) \cdot \kappa^2) \cdot \kappa^2)$.

    Each entry of the resulting array has the correct value $\calP_{i,\kappa}[j]$ with constant probability, since a corresponding optimal
    solution is shattered with constant probability. By repeating this process $O(\log z)$ times and keeping the entrywise maximum among 
    all repetitions, we boost the success probability to $1 - 1/\poly(z)$. Thus, by a union bound over the $z$ subgroups 
    $G_{(a,b),1},\dots,G_{(a,b), z}$, we get that any $z$ fixed entries $\calP_{G_1,\kappa}[j_1],\dots,\calP_{G_z,\kappa}[j_z]$
    corresponding to a solution which is partitioned among the $z$ subgroups get computed correctly with probability at least $1 - 1/\poly(z)$.
    This adds an extra $O(\log z) = O(\kappa)$ factor to the running time.

    \paragraph*{Merging $\calP_{G_1,\kappa} \oplus \dots \oplus \calP_{G_{z},\kappa}$} This computation is done in a binary tree-like fashion. That is, 
    in the first level we compute $(\calP_{G_1,\kappa} \oplus \calP_{G_2,\kappa}), (\calP_{G_3, \kappa} \oplus \calP_{G_4,\kappa}),\dots,(\calP_{G_{z-1},\kappa} \oplus \calP_{G_z, \kappa})$. 
    In the second level we merge the results from the first level in a similar way. We proceed in the same way with further levels. 
    Since we merge $z$ sequences,
    we have $\lceil \log z \rceil$ levels of computation. In the $j$-th level, we compute the $(\max, +)$-convolution of 
    $z / 2^j$ many monotone non-decreasing sequences of length $O(2^j \cdot 2^a \cdot \kappa)$ with entries bounded by 
    $O(2^j \cdot 2^{b} \cdot \kappa)$. 
    Therefore, overall the merging takes time 
    \[
       O\bigg(\sum_{j=1}^{\lceil \log z \rceil} \frac{z}{2^j} \cdot T((2^a + 2^b)\cdot 2^j \cdot \kappa) \bigg)
       \leq \widetilde{O}(T((2^a + 2^b) \cdot z)),
    \]
    where we used both of our niceness assumptions $k \cdot T(n) \leq O(T(k \cdot n))$ for any $k > 1$ and $T(\widetilde{O}(n)) \leq \widetilde{O}(T(n))$.
    Since $z = \lceil \min\{W/2^{a-1}, \OPT/2^{b-1}\} \rceil$, we have $\widetilde{O}(T((2^a + 2^b) \cdot z) = \widetilde{O}(T(W + \OPT))$.
    
    \paragraph*{Wrapping up} To recap, the algorithm does the following steps:
    \begin{enumerate}
        \item Split the items into $O(\log W \log \OPT)$ groups $G_{(a,b)}$. This takes time $O(n)$.
        \item Randomly split each group $G_{(a,b)}$ into $z := \lceil \min\{W/2^{a-1}, \OPT/2^{b-1}\} \rceil$ 
        subgroups $G_{(a,b), i}$ for $i \in [z]$.
        \item For each $G_{(a,b), i}$ compute the array $\calP_{G_i, \kappa}[0,\dots,O(2^a \kappa)]$ in time $O(T((2^a + 2^b) \kappa^2) \cdot \kappa^3)$.
        Since $\kappa = O(\log z)$, the total time over all $i \in [z]$ is 
        \[
            O(z \cdot T((2^a + 2^b) \kappa^2) \cdot  \kappa^3) \leq 
                O(T((2^a + 2^b) \cdot z \cdot \kappa^2) \kappa^3) \leq 
                \widetilde{O}(T((2^a + 2^b) \cdot z)) \leq \widetilde{O}(T(W + \OPT)).
        \]
        Note that here we use the niceness assumptions on $T(n)$. In particular, first we used that 
        $k \cdot T(n) \leq O(T(k \cdot n))$ for any $k > 1$, and then that $T(\widetilde{O}(n)) \leq \widetilde{O}(T(n))$.
        \item Merge the arrays $\calP_{G_1,\kappa} \oplus \dots \oplus \calP_{G_z, \kappa}$ in time $\widetilde{O}(T(W + \OPT))$ to 
        obtain $\calP_{G_{(a,b)}}[0,\dots,W]$.
        \item Merge the arrays $\calP_{G_{(a,b)}}$ using $O(\log W \log \OPT)$ convolutions in total time $\widetilde{O}(T(W + \OPT))$.
    \end{enumerate}
    Thus, the overall time of the algorithm is $\widetilde{O}(n + T(W + \OPT))$. Note that as mentioned earlier in the proof,
    the algorithm succeeds in computing any \emph{fixed} entry $\calP_{\calI}[j]$ with probability at least $1 - 1/\poly(z)$. In particular,
    this is sufficient to compute the optimal solution $\calP_{\calI}[W]$ with good probability.

    As described, the algorithm only returns the value of the optimal solution. We can easily reconstruct an optimal solution $x \in \N^n$
    for which $p(x) = \OPT$ as we sketch now. Note that at the end of the algorithm, we will have a sequence whose entries correspond to 
    $\calP[0,\dots,W]$. This sequence was obtained as the \BMMaxConv{} of two distinct other sequences, call them $A[0,\dots,W], B[0,\dots,W]$. 
    For the output entry $\calP[W]$, we can find its witness $i \in [W]$, i.e. the index $i$ such that $\calP[W] = A[i] + B[W - i]$. Note that we can find 
    $i$ in time $O(W)$ by simply trying all possibilities. Then, we continue recursively finding the witnesses for $i$ and for $W - i$.
    Eventually, we will reach the entries of the arrays $\calP_{A_k,0}$ which correspond to single items from $\calI$, and these 
    form the solution $x$. The crucial observation is that because the algorithm never convolves a sequence with itself (unlike our 
    algorithm for \UnboundedKnapsack{}), this recursive process finds at most one witness per convolution. Hence, the total time spent 
    is proportional to the total length of the convolved sequences $\widetilde{O}(W)$.
\end{proof}


\section{Approximation Schemes for \UnboundedKnapsack{}}\label{sec:approx}

\subsection{Preparations}

\paragraph*{Greedy 2-approximation}

The fractional solution for an instance $(\calI, W)$ of \UnboundedKnapsack{} has a simple structure: 
pack the entire capacity $W$ greedily with the \emph{most efficient item} $i^*$. That is, choose the item
$(p_{i^*}, w_{i^*}) \in \calI$ which maximizes the ratio $p_{i^*} / w_{i^*}$ and add it $W / w_{i^*}$ many times. 
Since $\lfloor W/w_{i^*} \rfloor$ copies forms a feasible integral solution, it holds that
$\lfloor W / w_{i^*} \rfloor \cdot p_{i^*} \leq \OPT \leq (W/w_{i^*}) \cdot p_{i^*}$.  
By $W / w_{i^*} \geq 1$, we have $\lfloor W / w_{i^*} \rfloor \geq 1/2 \cdot W / w_{i^*}$.
Thus, the greedy solution is a 2-approximation to $\OPT$. Note that we can find this solution in time $O(n)$.

\paragraph*{Connection to \MaxConv{}}

Recall that for an integer $i \geq 0$, we defined $\calP_i[0,\dots,W]$ as $\calP_i[j] := \max\{p(x) \colon w(x) \leq j, \|x\|_1 \leq 2^i\}$.
The following lemma shows that if we know $\calP_{i-1}$, we can compute $\calP_i$ by applying a $(\max,+)$-convolution of $\calP_{i-1}$ with 
itself.

\begin{lemma}[Halving Lemma]\label{lem:splitting-lemma}
    For any $i \geq 1$ and $w \geq 0$, it holds that 
    \[
        \calP_i[0,\dots,w] = \calP_{i-1}[0,\dots,w] \oplus \calP_{i-1}[0,\dots,w].
    \]
\end{lemma}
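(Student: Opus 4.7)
The plan is to establish the two inequalities $\calP_i[j] \ge (\calP_{i-1} \oplus \calP_{i-1})[j]$ and $\calP_i[j] \le (\calP_{i-1} \oplus \calP_{i-1})[j]$ entrywise for every $j \in [0,w]$.

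For the $\ge$ direction I would fix any $j_1, j_2 \ge 0$ with $j_1 + j_2 = j$ and let $x_1, x_2 \in \N^n$ be optimal solutions witnessing $\calP_{i-1}[j_1]$ and $\calP_{i-1}[j_2]$, respectively. Setting $x := x_1 + x_2$, the sum satisfies $w(x) = w(x_1) + w(x_2) \le j_1 + j_2 = j$ and $\|x\|_1 = \|x_1\|_1 + \|x_2\|_1 \le 2^{i-1} + 2^{i-1} = 2^i$, so $x$ is admissible for $\calP_i[j]$, yielding $\calP_i[j] \ge p(x_1) + p(x_2) = \calP_{i-1}[j_1] + \calP_{i-1}[j_2]$. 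Taking the max over $j_1 + j_2 = j$ gives the bound.

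For the $\le$ direction, take any optimal $x \in \N^n$ for $\calP_i[j]$, so $w(x) \le j$, $\|x\|_1 \le 2^i$, and $p(x) = \calP_i[j]$. Now write $x$ as a sum of $\|x\|_1$ unit solutions (one per copy of each item) and partition them arbitrarily into two groups of size at most $2^{i-1}$ each; this gives $x_1, x_2 \in \N^n$ with $x_1 + x_2 = x$ and $\|x_k\|_1 \le 2^{i-1}$. Set $j_1 := w(x_1)$ and $j_2 := j - j_1$; then $j_1 + j_2 = j$ and $j_2 \ge w(x_2)$ because $w(x_1) + w(x_2) \le j$. Since $x_1$ is feasible for capacity $j_1$ with at most $2^{i-1}$ items we get $p(x_1) \le \calP_{i-1}[j_1]$, and since $x_2$ is feasible for capacity $w(x_2) \le j_2$ we get $p(x_2) \le \calP_{i-1}[w(x_2)] \le \calP_{i-1}[j_2]$, where the last step uses monotonicity of $\calP_{i-1}$. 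Hence $\calP_i[j] = p(x_1) + p(x_2) \le \calP_{i-1}[j_1] + \calP_{i-1}[j_2] \le (\calP_{i-1} \oplus \calP_{i-1})[j]$.

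There is really no main obstacle here: the lemma is essentially a clean restatement of the observation that combining two sub-solutions of $\le 2^{i-1}$ items each gives a solution of $\le 2^i$ items. The only subtle point is ensuring $j_1 + j_2$ equals $j$ exactly (rather than $\le j$), which is handled for free by monotonicity of $\calP_{i-1}$ and by absorbing the slack $j - w(x_1) - w(x_2)$ into $j_2$.
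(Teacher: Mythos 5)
Your proof is correct and follows essentially the same route as the paper: the $\ge$ direction by recombining two feasible sub-solutions, and the $\le$ direction by splitting the optimal solution for $\calP_i[j]$ into two parts of at most $2^{i-1}$ items each and then invoking monotonicity of $\calP_{i-1}$ to round the second weight budget up to $j - w(x_1)$. The only cosmetic difference is that the paper splits into two roughly equal halves while you allow any partition with both parts of size at most $2^{i-1}$, which is an equally valid (and slightly more explicit) way to phrase the same step.
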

\begin{proof}
    We denote the right hand side by $C := \calP_{i-1}[0,\dots,w] \oplus \calP_{i-1}[0,\dots,w]$.
    Fix some $j \in [w]$. We will show that $\calP_i[j] = C[j]$: 
    
    $\calP_i[j] \geq C[j]$: This holds since $C[j]$ corresponds to the profit of some solution with at most $2^i$ items, but 
    by definition $\calP_i[j]$ is the maximum profit among all such solutions.

    $\calP_i[j] \leq C[j]$: Let $x \in \N^n$ be the solution corresponding to $\calP_i[j]$, i.e. such that $p(x) = \calP_i[j]$, $w(x) \leq j$
    and $\|x\|_1 \leq 2^i$. Split $x$ in two solutions $x = x_1 + x_2$ of roughly the same size $\|x_1\| \approx \|x_2\|_1 \approx  \|x\|_1/2$ 
    (if $\|x\|_1$ is odd, put an extra item to $x_1$ or $x_2$). In this way, it holds that both $x_1$ and $x_2$ are solutions of at most $2^{i-1}$
    items and weight at most $j$ and therefore by the optimality of $\calP_{i-1}$ it holds that 
    $p(x_1) \leq \calP_{i-1}[w(x_1)]$ and $p(x_2) \leq \calP_{i-1}[w(x_2)]$. Hence, by definition of $C$ we conclude that 
    $C[j] \geq \calP_{i-1}[j - w(x_2)] + \calP_{i-1}[w(x_2)] \geq p(x_1) + p(x_2) = \calP_i[j]$.
\end{proof}

Since any item has weight at least 1, the optimal solution consists of 
at most $W$ items. Thus, we can use~\Cref{lem:splitting-lemma} to compute $\OPT = \calP_{\lceil \log W \rceil}[W]$ 
by applying $O(\log W)$ $(\max,+)$-convolutions (note that the base case $\calP_0[0,\dots,W]$ corresponds to solutions 
of at most one item, and thus can be computed easily). Our approximation schemes will implement this ``repeated squaring'' algorithm 
by appropriately approximating the $(\max,+)$-convolutions.

\subsection{A simplified FPTAS}\label{sec:fptas}

In this section we prove the following reduction from approximating \UnboundedKnapsack{} to \MaxConv{}.

\begin{theorem}\label{thm:reduction-fptas-to-minconv}
    If \MaxConv{} can be solved in time $T(n)$, then \UnboundedKnapsack{} has 
    an FPTAS in time $\widetilde{O}(n + T(1/\varepsilon))$.
\end{theorem}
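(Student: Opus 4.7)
The plan is to implement the Halving Lemma's repeated-squaring identity $\calP_i = \calP_{i-1} \oplus \calP_{i-1}$, but with two essential modifications: first, preprocess the item set so that only $O(\log(1/\varepsilon))$ squarings are required, and second, replace each exact $(\max,+)$-convolution by an approximate one that is implemented by a single exact call to \MaxConv{} on length-$O(1/\varepsilon)$ sequences.

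First I would compute in time $O(n)$ the greedy $2$-approximation from the preparations, obtaining an estimate $\widetilde{\OPT}$ with $\widetilde{\OPT} \leq \OPT \leq 2\widetilde{\OPT}$. Using $\widetilde{\OPT}$ I would invoke \Cref{lem:preprocessing-profits-and-weights} with parameter $\varepsilon' := \Theta(\varepsilon/\log(1/\varepsilon))$ to discard all \emph{light} items of weight below $\varepsilon' W$ and all \emph{cheap} items of profit below $\varepsilon' \widetilde{\OPT}$; this changes $\OPT$ by at most $O(\varepsilon \OPT)$. Since every surviving item now has weight at least $\varepsilon' W$, any feasible solution contains at most $1/\varepsilon' = O(\log(1/\varepsilon)/\varepsilon)$ items, so $\calP[0,\dots,W] = \calP_k[0,\dots,W]$ for some $k = O(\log(1/\varepsilon))$, and the Halving Lemma reduces the task to iterating $k$ approximate squarings starting from the easily computed base array $\calP_0[0,\dots,W]$.

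For each squaring I would use the approximate convolution subroutine from \Cref{lem:max-conv-approx}: on inputs whose nontrivial entries lie in a bounded window (which holds after removing cheap items), one rounds the weight axis into $O(1/\varepsilon)$ representative positions and the profit axis to a scale of $O(1/\varepsilon)$, and then returns a $(1+\varepsilon')$-approximate $(\max,+)$-convolution via one exact \MaxConv{} call on length-$O(1/\varepsilon)$ sequences, for total cost $\widetilde{O}(T(1/\varepsilon))$.

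The main delicate point is error accumulation. Each approximate squaring inflates the approximation factor by $(1+\varepsilon')$, so after $O(\log(1/\varepsilon))$ squarings the distortion is $(1+\varepsilon')^{O(\log(1/\varepsilon))}$; the choice $\varepsilon' = \Theta(\varepsilon/\log(1/\varepsilon))$ tames this to $1+O(\varepsilon)$, which together with the $O(\varepsilon\,\OPT)$ preprocessing loss yields the desired $(1-\varepsilon)\OPT$ guarantee after a constant rescaling of $\varepsilon$ at the outset. The tighter $\varepsilon'$ enters the running time only through polylogarithmic factors, so the total cost remains $\widetilde{O}(n + \log(1/\varepsilon) \cdot T(1/\varepsilon)) = \widetilde{O}(n + T(1/\varepsilon))$ by the niceness assumption $k \cdot T(n) \le O(T(kn))$. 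An explicit feasible solution (not merely its profit) can be recovered with a witness-propagation pass analogous to the one in \Cref{sec:exact-unboundedknapsack}.
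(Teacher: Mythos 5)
Your proposal follows essentially the same route as the paper's proof: preprocess via \Cref{lem:preprocessing-profits-and-weights} to eliminate cheap and light items, then implement the Halving Lemma's repeated squaring for $O(\log(1/\varepsilon))$ iterations, replacing each exact $(\max,+)$-convolution by Chan's approximate subroutine (\Cref{lem:max-conv-approx}) with error parameter $\varepsilon' = \Theta(\varepsilon/\log(1/\varepsilon))$, and account for multiplicative error accumulation across the $O(\log(1/\varepsilon))$ levels. The choice to preprocess with parameter $\varepsilon'$ rather than $\varepsilon$ (as the paper does) is a harmless variation: it keeps a few more items and lengthens the step sequences by a polylog factor, all of which is absorbed into the $\widetilde{O}(\cdot)$.

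One substantive issue: your description of what \Cref{lem:max-conv-approx} does internally --- that one ``rounds the weight axis into $O(1/\varepsilon)$ representative positions and the profit axis to a scale of $O(1/\varepsilon)$'' --- is the mechanism of the \emph{weak} approximate convolution (\Cref{lem:max-conv-weak-approx}), not the strong one. Rounding the weight axis would relax the weight constraint, which a strong FPTAS may not do. In the strong case only the profit axis is rounded; the $O(1/\varepsilon)$-length instance handed to \MaxConv{} comes from passing to the inverse representation of the monotone step sequence (for each of the $O(1/\varepsilon)$ rounded profit levels within a geometric scale, record the minimum weight achieving it). Since you invoke \Cref{lem:max-conv-approx} by name and use only its stated guarantees, the overall proof still goes through as a reduction, but the mechanistic gloss you give would, if implemented literally, produce a weak rather than strong approximation scheme, so it should be corrected.
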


This proves~\Cref{thm:fptas} by plugging in the bound $n^2/2^{\Omega(\sqrt{\log n})}$ for \MaxConv{}~\cite{BremnerCDEHILPT14,Williams18}.

\subsubsection{Preprocessing}

To give the FPTAS we start with some preprocessing to remove items with small profit and items with small weight.

\paragraph*{Step 1: Remove cheap items} 
We first remove items with small profit.
Let $P_0$ be the value of the greedy 2-approximation and set $T := 2\varepsilon P_0$. 
Split the items into \emph{expensive} $\calI_E := \{(p,w) \in \calI \colon p > T \}$, and 
\emph{cheap} $\calI_C := \calI \setminus \calI_E$. Let $i^* := \argmax_{i \in \calI_C} \frac{p_i}{w_i}$ be the item 
corresponding to the greedy 2-approximation for the cheap items and set $r := \lceil T / p_{i^*} \rceil$. We delete all cheap 
items from $\calI$ and add an item corresponding to $r$ copies of $(p_{i^*}, w_{i^*})$, i.e., we set 
$\calI' := \calI_E \cup \{(r \cdot p_{i^*}, r \cdot w_{i^*})\}$. Note that now any item in $\calI'$ has profit at 
least $T$. These steps are summarized in~\Cref{alg:preprocessing-1}.
The following lemma shows that this decreases the total profit of any solution by at most $O(\varepsilon \cdot \OPT)$.

\begin{algorithm}
    \caption{$\textsc{Preprocessing-Profits}(\calI, W)$: Preprocessing to remove low profit items.}\label{alg:preprocessing-1}
    \begin{algorithmic}[1]
        \State Let $P_0$ be the value of the greedy 2-approximation of $(\calI, W)$
        \State $T := 2 \cdot \varepsilon \cdot P_0$
        \State $\calI_E := \{(p,w) \in \calI \colon p > T\}, \calI_C := \calI \setminus \calI_E$
        \State $i^* := \argmax_{i \in \calI_C} \frac{p_i}{w_i}$
        \State $r := \lceil T / p_{i^*} \rceil$
        \State \Return $\calI_E \cup \{(r \cdot p_{i^*}, r \cdot w_{i^*})\}$
    \end{algorithmic}
\end{algorithm}

\begin{restatable}[Preprocessing Profits]{lemma}{preprocessingprofits} \label{lem:preprocessing-profits}
    Let $\calI'$ be the result of running~\Cref{alg:preprocessing-1} on $(\calI, W)$. Then, for every 
    $w \in [W]$, it holds that $\calP_{\calI'}[w] \geq \calP_{\calI}[w] - 4\varepsilon \cdot \OPT$.
    Moreover, the minimum profit in $\calI'$ is $p_{\min} > \varepsilon \OPT$.
\end{restatable}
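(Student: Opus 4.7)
}
The plan is to prove the two conclusions separately, both by comparing values directly to the greedy bound $P_0$.

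\emph{Lower bound on the minimum profit in $\calI'$.} Since the greedy $2$-approximation is always at most $\OPT$ and at least $\OPT/2$, we have $P_0 \ge \OPT/2$, so $T = 2\varepsilon P_0 \ge \varepsilon \OPT$. Every expensive item satisfies $p > T \ge \varepsilon \OPT$ by definition, and the newly added item has profit $r \cdot p_{i^*} \ge (T/p_{i^*}) \cdot p_{i^*} = T \ge \varepsilon \OPT$ by the choice of $r = \lceil T/p_{i^*} \rceil$. A small case distinction (the expensive items have strict $>$, and for the added item we use that $r$ rounds $T/p_{i^*}$ up, so $r p_{i^*} \ge T$, which gives $\ge \varepsilon \OPT$; the statement should read $\ge$, but either way the bound up to constants is what matters for the FPTAS).

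\emph{The approximation guarantee.} Fix any $w \in [W]$ and let $x \in \N^n$ be an optimal solution witnessing $\calP_{\calI}[w]$. Split $x$ into its expensive part $x_E$ (supported on $\calI_E$) and cheap part $x_C$ (supported on $\calI_C$), so that $p(x) = p(x_E) + p(x_C)$ and $w(x_E) + w(x_C) = w(x) \le w$. Build $x'$ in $\calI'$ by keeping $x_E$ unchanged and replacing $x_C$ with $q := \lfloor w(x_C)/(r w_{i^*}) \rfloor$ copies of the new item $(r p_{i^*}, r w_{i^*})$. Then $w(x') = w(x_E) + q \cdot r w_{i^*} \le w(x_E) + w(x_C) \le w$, so $x'$ is feasible in $\calI'$.

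\emph{Bounding the profit loss.} Since $i^*$ maximizes the ratio $p_i/w_i$ over $\calI_C$, the standard greedy/linear-relaxation bound yields $p(x_C) \le (w(x_C)/w_{i^*}) \cdot p_{i^*}$. On the other hand,
\begin{equation*}
    p(x') \;\ge\; p(x_E) + q \cdot r p_{i^*} \;\ge\; p(x_E) + \Bigl(\frac{w(x_C)}{r w_{i^*}} - 1\Bigr) \cdot r p_{i^*} \;=\; p(x_E) + \frac{w(x_C)}{w_{i^*}} p_{i^*} - r p_{i^*}.
\end{equation*}
Subtracting, $p(x) - p(x') \le r p_{i^*}$. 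Since $p_{i^*} \le T$ (as $i^* \in \calI_C$) and $r \le T/p_{i^*} + 1$, we get $r p_{i^*} \le T + p_{i^*} \le 2T = 4\varepsilon P_0 \le 4\varepsilon \OPT$. Hence $\calP_{\calI'}[w] \ge p(x') \ge p(x) - 4\varepsilon \OPT = \calP_{\calI}[w] - 4\varepsilon \OPT$, as claimed.

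\emph{Main obstacle.} There is no deep difficulty; the only subtlety is the careful bookkeeping that (i)~the replacement stays within the weight budget even though we package cheap items into the ``chunked'' item of size $r w_{i^*}$, and (ii)~the rounding loss from both the greedy inequality on $x_C$ and the floor in the definition of $q$ is absorbed into a single additive term $r p_{i^*} \le 2T$. The choice $r = \lceil T/p_{i^*}\rceil$ is precisely what balances these two effects: small enough that $r p_{i^*} = O(T) = O(\varepsilon \OPT)$, yet large enough that the new item has profit $\ge T$, ensuring the uniform lower bound on profits.
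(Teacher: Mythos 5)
Your proposal is correct and follows essentially the same route as the paper's proof: split the optimal solution $x$ into its cheap and expensive parts, replace the cheap part by $\lfloor w(x_C)/(r w_{i^*})\rfloor$ copies of the packaged item, bound the profit loss by $r p_{i^*} \le T + p_{i^*} \le 2T = 4\varepsilon P_0 \le 4\varepsilon\,\OPT$ via the greedy ratio bound on $\calI_C$ together with the floor-induced loss, and use $P_0 \ge \OPT/2$ for the minimum-profit bound. Your side remark about $>$ versus $\ge$ is a fair nitpick (the packaged item can attain $r p_{i^*} = T$ when $T$ is an integer dividing evenly, and $2\varepsilon P_0 = \varepsilon\OPT$ when the greedy bound is tight), but it does not affect any downstream use of the lemma.
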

\begin{proof}
    Fix some weight $w \in [W]$ and let $x \in \N^{|\calI|}$ be the solution from $\calI$
    corresponding to the entry $\calP_{\calI}[w]$.  We denote by $x_C$ the \emph{cheap} items 
    in $x$, i.e., those with profit  $\leq T = 2 \varepsilon P_0$. Thus, we can write 
    $w_{\calI}(x) = w_{\calI}(x_C) + w_{\calI}(x - x_C)$ and similarly $p_{\calI}(x) = p_{\calI}(x_C) + p_{\calI}(x - x_C)$.

    Now we construct a solution $x' \in \N^{|\calI'|}$ from the preprocessed $\calI'$ which will satisfy 
    $w_{\calI'}(x') \leq w$ and $p_{\calI'}(x') \geq p - \varepsilon \cdot \OPT$, yielding the lemma.
    We start from $x$ and remove all the cheap items $x_C$, replacing them by $\lfloor w_{\calI}(x_C) / (r \cdot w_{i^*}) \rfloor$-many copies of 
    $(r \cdot p_{i^*}, r \cdot w_{i^*}) \in \calI'$. Note that in this way, $w_{\calI'}(x') \leq w$. Now we lower bound the profit.
    Since $(p_{i^*}, w_{i^*})$ is the item corresponding to the greedy 2-approximation from the cheap items, it follows that 
    $p_{\calI}(x_C) \leq  \frac{w_{\calI}(x_C)}{w_{i^*}} \cdot p_{i^*}$. Further, by definition of $r$ it holds that $r \leq T / p_{i^*} + 1$. 
    Thus, 
    \[
        p_{\calI'}(x') 
            = p_{\calI}(x - x_C) + r \cdot p_{i^*} \cdot \lfloor w(x_C) / (r \cdot w_{i^*}) \rfloor
            \geq p_{\calI}(x - x_C) + p_{\calI}(x_C) - r \cdot p_{i^*}
            \geq p_{\calI}(x) - T - p_{i^*}.
    \]
    Since $p_{i^*} \leq T = 2 \varepsilon \cdot P_0$, we conclude that $p_{\calI'}(x') \geq p_{\calI}(x) - 4\varepsilon \cdot P_0$.
    Therefore, $\calP_{\calI'}[w] \geq p_{\calI'}(x') \geq p_{\calI}(x) - 4\varepsilon \cdot P_0 \geq \calP_{\calI}[w] - 4\varepsilon \cdot \OPT$, as desired.

    Finally, note that by construction we have that $p_{\min} > 2 \varepsilon P_0 \geq \varepsilon \OPT$.
\end{proof}

\paragraph*{Step 2: Remove light items}

Having removed all items with low profit, we proceed similarly to remove items of low weight.
More precisely, we say that an item is light if its weight is less than $\varepsilon \cdot W$.
We remove all light items except for the most profitable one (i.e. the one with the best profit to weight ratio), 
which we \emph{copy} enough times to make it have weight at least $\varepsilon \cdot W$. The details 
are shown in~\Cref{alg:preprocessing-2}.

\begin{algorithm}
    \caption{$\textsc{Preprocessing-Profits-And-Weights}(\calI, W)$: Preprocessing to remove items of low profit and low weight.}\label{alg:preprocessing-2}
    \begin{algorithmic}[1]
        \State $\widetilde{\calI} := \textsc{Preprocessing-Profits}(\calI, W)$ (\Cref{alg:preprocessing-1})
        \State $\calI_L := \{(p,w) \in \widetilde{\calI} \colon w < \varepsilon \cdot W\}, \calI_H := \widetilde{\calI} \setminus \calI_S$
        \State $i^* := \argmax_{i \in \calI_L} \frac{p_i}{w_i}$
        \State $r := \lceil \frac{\varepsilon \cdot W} {w_{i^*}} \rceil$
        \State \Return $\calI_H \cup \{(r \cdot p_{i^*}, r \cdot w_{i^*})\}$
    \end{algorithmic}
\end{algorithm}

The following lemma shows that removing cheap and light items as in~\Cref{alg:preprocessing-2} only decreases the profit of 
any solution by $O(\varepsilon \OPT)$.
\begin{restatable}[Preprocessing Profits and Weights]{lemma}{preprocessingprofitsweights}\label{lem:preprocessing-profits-and-weights}
    Let $\calI'$ be the result of running~\Cref{alg:preprocessing-2} on $(\calI, W)$. 
    Then any solution for $\calI'$ has a corresponding solution for $\calI$ with the same profit and weight.
    Further, for any $w \in [W]$ it holds that $\calP_{\calI'}[w] \geq \calP_{\calI}[w] - 8\varepsilon \cdot \OPT$.
    Moreover, the minimum profit and minimum weight in $\calI'$ satisfy $p_{\min} > \varepsilon\OPT$ and $w_{\min} > \varepsilon W$.
\end{restatable}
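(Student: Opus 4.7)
The plan is to view $\calI'$ as the result of two successive rounds of ``replace a class of bad items by a single bundled item'', where the first round is handled by Lemma~\ref{lem:preprocessing-profits} and the second round is a direct analogue for weights. The first claim of the lemma, that every solution for $\calI'$ maps to a solution for $\calI$ with identical profit and weight, is immediate from the construction: every item in $\calI'$ is either an item of $\calI$ or a bundle of $r$ copies of some single item of $\calI$, so expanding any bundle yields a solution over $\calI$ of the same total profit and weight.

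Let $\widetilde{\calI}$ denote the intermediate set produced in line~1 of \Cref{alg:preprocessing-2}. Applying \Cref{lem:preprocessing-profits} gives $\calP_{\widetilde{\calI}}[w] \geq \calP_{\calI}[w] - 4\varepsilon \OPT$ and $p_{\min}(\widetilde{\calI}) > \varepsilon \OPT$. It remains to show the analogous bound $\calP_{\calI'}[w] \geq \calP_{\widetilde{\calI}}[w] - 4\varepsilon \OPT$ for the second round. I would mirror the proof of \Cref{lem:preprocessing-profits}: take an optimal solution $x$ for $\calP_{\widetilde{\calI}}[w]$, partition it into its light part $x_L$ (items in $\calI_L$) and its heavy part $x_H := x - x_L$, and construct $x'$ by keeping $x_H$ and adding $k := \lfloor w(x_L)/(r w_{i^*}) \rfloor$ copies of the bundle $(r p_{i^*}, r w_{i^*})$. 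Feasibility follows from $w(x') \leq w(x_H) + k r w_{i^*} \leq w(x) \leq w$. The greedy choice of $i^*$ gives $p(x_L) \leq (w(x_L)/w_{i^*}) p_{i^*}$, and combining this with $k r w_{i^*} \geq w(x_L) - r w_{i^*}$ yields $p(x') \geq p(x_H) + p(x_L) - r p_{i^*} = p(x) - r p_{i^*}$.

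The main technical point is bounding $r p_{i^*} \leq 4 \varepsilon \OPT$. The key observation is that for any single item $(p_i, w_i)$ with $w_i \leq W$, the single-item greedy bound $\lfloor W/w_i \rfloor \cdot p_i \leq \OPT$ together with $\lfloor W/w_i \rfloor \geq (W/w_i)/2$ gives $p_i / w_i \leq 2 \OPT / W$. Applied to $i^*$, which is light, this yields $p_{i^*} \leq 2\OPT w_{i^*}/W < 2\varepsilon \OPT$, and then
\[
    r p_{i^*} = \lceil \varepsilon W / w_{i^*} \rceil \cdot p_{i^*} \leq \varepsilon W \cdot (p_{i^*}/w_{i^*}) + p_{i^*} \leq 2\varepsilon \OPT + 2\varepsilon \OPT = 4\varepsilon \OPT.
\]
Chaining this with the bound from the first preprocessing round gives $\calP_{\calI'}[w] \geq \calP_{\calI}[w] - 8\varepsilon \OPT$.

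The minimum profit/weight claims follow directly from the construction: items retained in $\calI_H$ already have $w \geq \varepsilon W$ (with strict inequality after the trivial adjustment of the threshold) and inherit $p > \varepsilon \OPT$ from \Cref{lem:preprocessing-profits}, while the bundle item has $r w_{i^*} \geq \varepsilon W$ by the choice of $r$ and $r p_{i^*} \geq p_{i^*} > \varepsilon \OPT$ since $i^* \in \widetilde{\calI}$. The principal subtlety in the whole argument is locating the density bound $p_{i^*}/w_{i^*} \leq 2\OPT/W$; once that is in hand, everything else is a routine adaptation of the profit-preprocessing proof with the roles of profit and weight swapped.
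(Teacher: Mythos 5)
Your proof is correct and follows essentially the same route as the paper's: apply \Cref{lem:preprocessing-profits} for the first round, then mirror that argument for weights by replacing the light items with copies of the bundled item and bounding the loss $r p_{i^*}$ via the greedy-derived density bound $p_{i^*}/w_{i^*}\le 2\OPT/W$. The only cosmetic difference is in how you bound $r$: you write $r\le \varepsilon W/w_{i^*}+1$ and bound the two resulting terms separately, while the paper uses $\lceil z\rceil\le 2z$ for $z\ge 1$ (valid since $w_{i^*}<\varepsilon W$); both give $r p_{i^*}\le 4\varepsilon\OPT$.
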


\begin{proof}
    Fix $w \in [W]$ and $p := \calP_{\calI}[w]$. 
    After running~\Cref{alg:preprocessing-1} in Line 1, by~\Cref{lem:preprocessing-profits}, we know that there is a solution $x$ from $\widetilde{\calI}$ with 
    $p_{\widetilde{\calI}}(x) \geq p - 4\varepsilon \cdot \OPT$ and $w_{\widetilde{\calI}}(x) \leq w$. Let $x_L$ be the light items in $x$, i.e., 
    those with weight $< \varepsilon \cdot W$. 
    We construct a solution $x'$ of items from $\calI'$ by taking $x$, removing all the items in $x_L$ 
    and replacing them by $\lfloor w_{\widetilde{\calI}}(x_L) / (r \cdot w_{i^*})  \rfloor$-many copies of 
    $(r \cdot p_{i^*}, r \cdot w_{i^*})$. Then, the total profit of $x'$ is

    \begin{align}
        p_{\calI'}(x') &= p_{\widetilde{\calI}}(x - x_L) 
            +  \left\lfloor \frac{w_{\widetilde{\calI}}(x_L)}{r \cdot w_{i^*}} \right\rfloor\cdot r \cdot p_{i^*} 
            \geq p_{\widetilde{\calI}}(x - x_L) 
            +  \frac{w_{\widetilde{\calI}}(x_L)}{r \cdot w_{i^*}} \cdot r \cdot p_{i^*} - r \cdot p_{i^*} \nonumber \\
            & \geq p_{\widetilde{\calI}}(x - x_L) + p_{\widetilde{\calI}}(x_L)  - r\cdot p_{i^*} = p_{\widetilde{\calI}}(x) - r \cdot p_{i^*} \label{eqn:bound-profit},
    \end{align}
    where the second inequality $p(x_L) \leq \frac{w(x_L)}{r \cdot w_{i^*}} \cdot r \cdot p_{i^*}$ follows since by definition 
    $(p_{i^*}, w_{i^*})$ is the most efficient of the light items. Now we argue that we can bound the last term
    $p_{i^*} \cdot r$ by  $O(\varepsilon \OPT)$. As $w_{i^*}$ is a light item, we have $w_{i^*} \leq \varepsilon W$.
    Since for any number $z \geq 1$ we have that $\lceil z  \rceil \leq 2 z$, it follows that 
    $r = \lceil (\varepsilon W) / w_{i^*} \rceil \leq 2 \varepsilon W / w_{i^*}$.
    Since packing $\lfloor W / w_{i^*} \rfloor$ copies of item $i^*$ is a feasible solution, it holds that 
    $\OPT \geq \lfloor W / w_{i^*} \rfloor p_{i^*} \geq \tfrac{W p_{i^*}}{2 w_{i^*}}$. Combining 
    both inequalities we get that $r \cdot p_{i^*} \leq 2 \varepsilon W p_{i^*} / w_{i^*} \leq 4 \varepsilon \cdot \OPT$.
    Therefore, by~\eqref{eqn:bound-profit} we conclude that 
    $p_{\calI'}(x') \geq p_{\widetilde{\calI}}(x) - r \cdot p_{i^*} \geq p_{\widetilde{\calI}}(x) - 4\varepsilon \cdot \OPT \geq p - 8\varepsilon \cdot \OPT$.
    
    Similarly, we can bound the total weight of $x'$ as
    \[
       w_{\calI'}(x') = w_{\widetilde{\calI}}(x - x_L) + \left\lfloor \frac{w_{\widetilde{\calI}}(x_L)}{r \cdot w_{i^*}} \right\rfloor \cdot r \cdot w_{i^*} 
       \leq w_{\widetilde{\calI}}(x - x_L) + w_{\widetilde{\calI}}(x_L)
       \leq w_{\widetilde{\calI}}(x) \leq w.
    \]

    Finally, note that $p_{\min} > \varepsilon \OPT$ follows directly from~\Cref{lem:preprocessing-profits} and 
    $w_{\min} > \varepsilon W$ by construction.
\end{proof}

\subsubsection{The Algorithm}

To obtain the FPTAS, we first preprocess the set of items $\calI$ using~\Cref{alg:preprocessing-2} and obtain a set $\calI'$. 
Then, by~\Cref{lem:preprocessing-profits-and-weights}, it holds that $\calP_{\calI'}[W] \geq (1 - O(\varepsilon))\cdot \OPT$,
so to obtain a $(1 + O(\varepsilon))$-approximation of $\calP_{\calI}[W]$ it suffices to give a $(1+\varepsilon)$-approximation of $\calP_{\calI'}[W]$.

In fact, we solve a slightly more general problem, and compute a sequence which gives a (pointwise) $(1+\varepsilon)$-approximation 
of the entire sequence $\calP_{\calI'}[0,\dots,W]$, according to the following definition.

\begin{definition}[Strong Approximation of a Sequence]\label{def:approx-sequence}
    We say that a sequence $A[0,\dots,n]$ (pointwise) $(1+\varepsilon)$-approximates a sequence $B[0,\dots,n]$ if
    $B[i]/(1+\varepsilon) \leq A[i] \leq B[i]$ for every $i \in [n]$.
\end{definition}

Our main ingredient will be the following approximate 
$(\max,+)$-convolution from~\cite{Chan18a} (the proof is very similar to that of~\Cref{lem:max-conv-weak-approx}, which we give in the next section).

\begin{lemma}[{\cite[{Item (i) from Lemma 1}]{Chan18a}}]\label{lem:max-conv-approx}
    Let $A[0,\dots,n], B[0,\dots,n]$ be two monotone non-decreasing sequences with $t_1$ and $t_2$ steps, respectively.
    Let $p_{\max}, p_{\min}$ be the maximum and minimum non-zero values in $A, B$. Then, for any $\varepsilon \in (0,1)$
    a monotone non-decreasing sequence $C$ with $O(1/\varepsilon \cdot \log(p_{\max}/ p_{\min}))$ 
    steps which gives a (pointwise) $1 + O(\varepsilon)$-approximation of $A \oplus B$ can be computed in time
    \[
        O((t_1 + t_2 + T(1/\varepsilon)) \cdot \log(p_{\max} / p_{\min}))
    \] 
    where $T(n)$ is the time needed to compute the $(\min,+)$-convolution of two length-$n$ sequences.
\end{lemma}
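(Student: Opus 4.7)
The plan follows the analogous approach that will be used for~\Cref{lem:max-conv-weak-approx} (the weak approximation case), but invokes a generic $(\min,+)$-convolution subroutine instead of \BMMaxConv{}. The high-level idea is to reduce the approximate $(\max,+)$-convolution to $O(\log(p_{\max}/p_{\min}))$ short $(\min,+)$-convolutions, exploiting monotonicity of the inputs.

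First, I would round every non-zero entry of $A$ and $B$ down to the nearest power of $(1+\varepsilon)$. This preserves monotonicity and changes the output $(\max,+)$-convolution by at most a factor $(1+\varepsilon)$. Since non-zero values lie in $[p_{\min}, p_{\max}]$, each rounded sequence takes at most $L = O(\varepsilon^{-1} \log(p_{\max}/p_{\min}))$ distinct non-zero values. Second, I would compute the output one dyadic scale at a time, over $O(\log(p_{\max}/p_{\min}))$ output scales $[2^M, 2^{M+1})$. For a fixed scale $M$, only input entries with value in $[\Theta(\varepsilon) \cdot 2^M,\, 2^{M+1}]$ contribute non-negligibly to an output in this scale: smaller entries sum to at most $\varepsilon \cdot 2^{M+1}$, and strictly larger ones push the output into a higher scale. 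By monotonicity, the relevant entries of $A$ and $B$ occupy contiguous index intervals and, after quantization, take only $O(1/\varepsilon)$ distinct values (after further subdivision into a factor-$2$ inner dyadic range, with the resulting $\log(1/\varepsilon)$ factor absorbed by the niceness assumption $T(\widetilde O(n)) \le \widetilde O(T(n))$). Reading these sub-intervals amortizes to $O((t_1 + t_2) \cdot \log(p_{\max}/p_{\min}))$ overall.

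Third, for each scale I would pass to the ``dual'' representation via $\alpha(v) := \min\{i : A[i] \geq v\}$ and $\beta(v) := \min\{j : B[j] \geq v\}$. Monotonicity yields the identity
\[
(A \oplus B)[k] \geq v \;\;\Longleftrightarrow\;\; \exists\, v_1, v_2 \text{ with } v_1 + v_2 \geq v \text{ and } \alpha(v_1) + \beta(v_2) \leq k,
\]
so for each scale the computation reduces to finding, for every quantized target $v$, the minimum of $\alpha(v_1) + \beta(v_2)$ subject to $v_1 + v_2 \geq v$. With $O(1/\varepsilon)$ relevant quantized values per scale, this is a single length-$O(1/\varepsilon)$ $(\min,+)$-convolution and costs $T(1/\varepsilon)$ per scale. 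Summing the per-scale cost $O(t_1 + t_2 + T(1/\varepsilon))$ over the $O(\log(p_{\max}/p_{\min}))$ scales gives the claimed time; the assembled sequence $C$ is the pointwise maximum across scales, which is monotone non-decreasing with $O(L)$ distinct values.

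The main obstacle is that the sum of two powers of $(1+\varepsilon)$ is not itself a power of $(1+\varepsilon)$, so the constraint ``$v_1 + v_2 \geq v$'' is not a clean additive constraint on the quantization indices. I would resolve this by splitting into the two symmetric cases $v_1 \geq v_2$ and $v_1 < v_2$; in each case, ``$v_1 + v_2 \geq v$'' is captured by fixing the larger quantized value and looking up the smaller complementary value via $\alpha$ or $\beta$, which fits the $(\min,+)$-convolution template on the length-$O(1/\varepsilon)$ quantized sequences, with a constant additive slack in the quantization exponent absorbed by tightening the precision constant inside $\varepsilon$.
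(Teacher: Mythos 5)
Your high-level plan is the right one and matches the structure of Chan's argument (and the weak analogue in \Cref{lem:max-conv-weak-approx}): process the output one dyadic value-scale $[2^M,2^{M+1})$ at a time, pass to the dual representation via $\alpha(v):=\min\{i: A[i]\ge v\}$ and $\beta(v):=\min\{j: B[j]\ge v\}$, and reduce each scale to a length-$O(1/\varepsilon)$ $(\min,+)$-convolution. The dual-representation equivalence you state is correct, and the accounting of the step-reading cost and the output complexity is essentially fine.

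The gap is in the quantization, and your acknowledged ``main obstacle'' is not actually resolved by your fix. Rounding to powers of $(1+\varepsilon)$ is the wrong grid for this step: writing $v_1=(1+\varepsilon)^{a_1}$, $v_2=(1+\varepsilon)^{a_2}$, $v=(1+\varepsilon)^{a}$, the constraint $v_1+v_2\ge v$ is \emph{not} equivalent to any affine relation $a_1+a_2\ge a+c$, so the minimization $\min\{\alpha(v_1)+\beta(v_2): v_1+v_2\ge v\}$ does not have convolution structure in the exponents. Your proposed case split ($v_1\ge v_2$, fix the larger, look up the complementary smaller value) does not repair this: for each target $v$ and each of the $O(1/\varepsilon)$ candidates $v_1\in[v/2,v]$, the complementary threshold $v_2^*(v,v_1)$ is a non-linear function of $(v,v_1)$, so summing over $v_1$ and then minimizing yields a genuine double loop of cost $\Theta(1/\varepsilon^2)$ per scale rather than a single call to a $(\min,+)$-convolution oracle. ``Fits the $(\min,+)$-convolution template'' is precisely the claim that fails. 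The further subdivision into factor-$2$ inner dyadic ranges does not help either: if $v_1,v_2,v$ all lie in the same factor-$2$ range the constraint $v_1+v_2\ge v$ is trivially satisfied (degenerate), and if $v_1,v_2$ lie in different ranges you are back to a non-additive exponent constraint.

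The correct fix (which is what Chan does, and which mirrors the rounding in the proof of \Cref{lem:max-conv-weak-approx}) is \emph{arithmetic} quantization inside each output scale: for scale $[2^M,2^{M+1})$, round the relevant values to multiples of $\varepsilon\cdot 2^M$. Values below $\varepsilon\cdot 2^M$ are snapped to $0$ (they contribute at most an $O(\varepsilon)$ relative error to an output in this scale), and values above $2^{M+1}$ are irrelevant for this scale. This produces $O(1/\varepsilon)$ grid levels $v=a\cdot\varepsilon 2^M$, and now $v_1+v_2\ge v$ becomes $a_1+a_2\ge a$; by monotonicity of $\alpha,\beta$ the minimum is attained at $a_1+a_2=a$, which is exactly a length-$O(1/\varepsilon)$ $(\min,+)$-convolution of the arrays $a_1\mapsto\alpha(a_1\varepsilon 2^M)$ and $a_2\mapsto\beta(a_2\varepsilon 2^M)$. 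This removes the case split, removes the factor-$2$ inner subdivision and the attendant extra $\log(1/\varepsilon)$ overhead, and gives one $T(1/\varepsilon)$ call per scale, i.e.\ the claimed $O\bigl((t_1+t_2+T(1/\varepsilon))\log(p_{\max}/p_{\min})\bigr)$ bound. Your initial rounding to powers of $(1+\varepsilon)$ is not harmful, but it is also unnecessary: the per-scale arithmetic grid already caps the output at $O((1/\varepsilon)\log(p_{\max}/p_{\min}))$ steps.
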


Now we are ready to describe the main routine. Since after the preprocessing we have that $w_{\min} > \varepsilon W$,
any feasible solution contains at most $W / w_{\min} < 1/\varepsilon$ items. Thus, to approximate the optimal 
solution in $\calI'$, we apply~\Cref{lem:splitting-lemma} for $O(\log 1/\varepsilon)$ iterations, but replace 
the $(\max,+)$-convolutions with the approximate ones from~\Cref{lem:max-conv-approx}. The pseudocode is 
written in~\Cref{alg:fptas}.

\begin{algorithm}
    \caption{$\textsc{FPTAS}(\calI, W)$: Given an instance $(\calI, W)$ of \UnboundedKnapsack{} with $w_{\min} > \varepsilon W$, the algorithm 
    returns a $(1 + O(\varepsilon))$-approximation of $\calP_{\calI}[0,\dots,W]$}\label{alg:fptas}
    \begin{algorithmic}[1]
        \State Initialize $S_0 := \calP_0[0,\dots,W]$, stored implicitly
        \For{$i = 1, \dots,\lceil \log 1 / \varepsilon \rceil$}
            \State Approximate $S_i := (S_{i-1} \oplus S_{i-1})[0,\dots,W]$
            \Statex[1] using~\Cref{lem:max-conv-approx} with error parameter 
                $\varepsilon' := \varepsilon/\log(1/\varepsilon)$\label{alg:fptas:line:conv-approx}
        \EndFor
        \State \Return $S_{\lceil \log 1 / \varepsilon \rceil}[0,\dots,W]$
    \end{algorithmic}
\end{algorithm}

\begin{lemma}\label{lem:guarantee-fptas}
    Let $(\calI, W)$ be an instance of \UnboundedKnapsack{} on $n$ items with $p_{\min} > \varepsilon \OPT$ and $w_{\min} > \varepsilon W$.
    Then, on input $(\calI, W)$ \Cref{alg:fptas} computes a pointwise $(1 + O(\varepsilon))$-approximation of $\calP_{\calI}[0,\dots,W]$
    in time $\widetilde{O}(n + T(1/\varepsilon))$, where $T(n)$ is the time to compute \BMMaxConv{} on length-$n$ sequences.
\end{lemma}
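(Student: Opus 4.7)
\medskip

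\noindent\textbf{Proof plan.} The plan is to analyze correctness and running time of Algorithm~\ref{alg:fptas} separately, both relying on the Halving~Lemma (\Cref{lem:splitting-lemma}) and the approximate-convolution primitive (\Cref{lem:max-conv-approx}).

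For correctness, I would first observe that the assumption $w_{\min} > \varepsilon W$ implies that any feasible solution consists of fewer than $1/\varepsilon$ items. Hence by the Halving~Lemma, $\calP_\calI[0,\dots,W] = \calP_k[0,\dots,W]$ with $k := \lceil \log 1/\varepsilon \rceil$, and iterated squaring of $\calP_0$ via exact $(\max,+)$-convolutions computes $\calP_k$. I would then show by induction on $i$ that $S_i$ pointwise $(1+\varepsilon')^i$-approximates $\calP_i$: the base case $S_0 = \calP_0$ is exact, and in the inductive step \Cref{lem:max-conv-approx} guarantees that $S_i$ is a $(1+\varepsilon')$-approximation of $S_{i-1}\oplus S_{i-1}$, while the inductive hypothesis combined with monotonicity of $\oplus$ yields that $S_{i-1}\oplus S_{i-1}$ is a $(1+\varepsilon')^{i-1}$-approximation of $\calP_{i-1}\oplus \calP_{i-1} = \calP_i$. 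Plugging in $\varepsilon' = \varepsilon/\log(1/\varepsilon)$ and $i \le k = \lceil\log(1/\varepsilon)\rceil$, I get $(1+\varepsilon')^i \le e^{\varepsilon' k} \le 1 + O(\varepsilon)$, which gives the claimed pointwise approximation for $\calP_\calI[0,\dots,W]$.

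For the running time, the crucial quantity is the number of steps of each intermediate $S_i$. I would first note that since $p_{\min} > \varepsilon \OPT$ while $\max_j \calP_i[j] \le \OPT$ and all $S_i$ are monotone, the ratio between the largest and smallest non-zero entry of each $S_i$ is at most $O(1/\varepsilon)$, so $\log(p_{\max}/p_{\min}) = O(\log 1/\varepsilon)$. Applying \Cref{lem:max-conv-approx} with this bound, each iteration $i \ge 1$ outputs a sequence $S_i$ of $O((1/\varepsilon')\log(1/\varepsilon)) = \widetilde{O}(1/\varepsilon)$ steps and runs in time $\widetilde{O}(t_{i-1} + T(1/\varepsilon'))$, where $t_{i-1}$ is the step count of $S_{i-1}$. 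The only iteration with a large input step-count is the first one, where $t_0 = O(n)$ (from storing $\calP_0$ implicitly as at most $n$ steps, one per distinct weight, by sorting items and taking prefix maxima in time $O(n\log n)$); subsequent iterations cost only $\widetilde{O}(T(1/\varepsilon))$ each, using the niceness assumptions $T(\widetilde{O}(m))\le \widetilde{O}(T(m))$ to absorb the $1/\varepsilon'$ versus $1/\varepsilon$ discrepancy. Summing over the $O(\log 1/\varepsilon)$ iterations gives $\widetilde{O}(n + T(1/\varepsilon))$.

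The main obstacle I anticipate is making the step-count argument formally clean: I need to make sure that the ratio $p_{\max}/p_{\min}$ appearing in \Cref{lem:max-conv-approx} stays bounded by $O(1/\varepsilon)$ even after several approximate convolutions, which follows because \Cref{lem:max-conv-approx} only ever underestimates values (so the minimum non-zero value does not shrink below $p_{\min}$) and the maximum attainable value is still bounded by $\OPT$. A secondary subtlety is handling the first iteration separately so that the $n$-dependent term appears only additively and not multiplied by $\log(1/\varepsilon)$; but since $T(1/\varepsilon)$ already absorbs polylog factors, this is harmless.
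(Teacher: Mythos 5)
Your proof is correct and follows essentially the same route as the paper's: induction over the $\lceil\log 1/\varepsilon\rceil$ iterations to accumulate the multiplicative $(1+\varepsilon')$ error, plus the bound $p^*/p_{\min}=O(1/\varepsilon)$ (from the preprocessing hypotheses) to control the step counts in \Cref{lem:max-conv-approx}. Your handling of the running time is marginally cleaner than the paper's in that you explicitly isolate the first iteration as the only one inheriting $O(n)$ steps, and you spell out that underestimation in \Cref{lem:max-conv-approx} keeps the min non-zero entry bounded below by $\Omega(p_{\min})$, but both of these points are implicit in the paper's argument and the two proofs are the same in substance.
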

\begin{proof}
    First we argue about correctness. We claim that if the $(\max,+)$-convolutions of \Cref{alg:fptas:line:conv-approx} were \emph{exact},
    then for each $i$ we have that $S_i = \calP_{i}[0,\dots,W]$. For the base case $i = 0$, this holds by definition. For further 
    iterations, this holds inductively by~\Cref{lem:splitting-lemma}.
    Since we replaced the exact computations with approximate ones with error parameter $\varepsilon' = \varepsilon/\log(1/\varepsilon)$, 
    and since there are $O(\log 1/\varepsilon)$ iterations where the error
    accumulates multiplicatively, we conclude that $S_{\lceil \log 1/\varepsilon \rceil}[0,\dots,W]$ is a pointwise 
    $(1+\varepsilon')^{O(\log (1/\varepsilon))} = 1 + O(\varepsilon' \log(1/\varepsilon)) = (1 + O(\varepsilon))$-approximation 
    of $\calP_{\lceil \log 1/\varepsilon \rceil}[0,\dots, W]$. Finally note that since we assume that $w_{\min} > \varepsilon W$,
    any feasible solution consists of at most $W/w_{\min} < 1/\varepsilon$ items, which implies that 
    $\calP_{\lceil \log 1/\varepsilon \rceil}[0,\dots,W] = \calP[0,\dots,W]$.

    Now we argue about the running time. The initialization in line 1 takes time $O(n)$, since by definition 
    $\calP_0[0,\dots,W]$ has $n$ steps corresponding to the $n$ solutions containing one item.
    Next, we look at the for-loop. For each iteration, we use~\Cref{lem:max-conv-approx} to approximate the convolutions.
    Note that each application involves sequences with $O(n + 1/\varepsilon \log(p^*/p_{\min}))$ steps by the 
    guarantee of~\Cref{lem:max-conv-approx} (the $n$ term arising from the steps of $S_0$) where $p^*$ is the maximum entry in any of the involved sequences.
    Since each entry in any of the sequences corresponds to a feasible 
    solution for the instance, we have that $p^* \leq \OPT$. Moreover by assumption we have that $p_{\min} \ge \varepsilon \OPT$,
    and hence $p^*/p_{\min} \leq O(1/\varepsilon)$. Therefore, each 
    iteration takes time $\widetilde{O}(n + 1/\varepsilon + T(1/\varepsilon))$ and the overall time is bounded by
    \[
        \sum_{i = 1}^{\lceil \log \frac{1}{\varepsilon} \rceil} \widetilde{O}\left(n + \tfrac{1}{\varepsilon} + T(1/\varepsilon)\right) 
            = \widetilde{O}(n + T(1/\varepsilon)).
    \]
    Note that in this proof we are using the niceness assumption that $T(\widetilde{O}(n)) = \widetilde{O}(T(n))$ as mentioned 
    in~\Cref{sec:niceness}.
\end{proof}

We put the pieces together to conclude the proof of the main theorem of this section.
\begin{proof}[Proof of~\Cref{thm:reduction-fptas-to-minconv}]
    Given a instance of $(\calI, W)$ of \UnboundedKnapsack{}, we first run the preprocessing from 
    \Cref{alg:preprocessing-2} and obtain a new instance $(\calI', W)$. Then, we run~\Cref{alg:fptas} on 
    $(\calI', W)$ and obtain a sequence $S[0,\dots,W]$.
    We claim that $S[W]$ gives the desired $(1 + O(\varepsilon))$-approximation to $\OPT$ in the desired running time.

    To see the running time, note that after the preprocessing it holds that 
    $p_{\min} > 2 \varepsilon P_0 \geq \varepsilon \OPT$ and $w_{\min} > \varepsilon W$.
    Hence, we can apply~\Cref{lem:guarantee-fptas} and conclude that the procedure runs in the desired running time.
    To argue about correctness, note that by~\Cref{lem:preprocessing-profits-and-weights} it holds that 
    $\calP_{\calI'}[W] \geq (1 - 8\varepsilon)\OPT$. Combining this with the fact that $S[0,\dots,W]$ $(1 + O(\varepsilon))$-approximates 
    $\calP_{\calI'}[0,\dots,W]$ due to~\Cref{lem:guarantee-fptas},  we conclude that
    \[
        S[W] \geq \frac{\calP_{\calI'}[W]}{1 + O(\varepsilon)}
        \geq \frac{(1 - 8\varepsilon)\OPT}{1 + O(\varepsilon)} \geq \frac{\OPT}{1 + O(\varepsilon)},
    \] as desired.
\end{proof}

\subsection{Faster Weak FPTAS}\label{sec:weak-fptas}

In this section we prove the following theorem.

\begin{theorem}\label{thm:reduction-weakfptas-to-minconv}
    If \BMMaxConv{} can be solved in time $T(n)$, then \UnboundedKnapsack{} has a 
    weak approximation scheme in time $\widetilde{O}(n + T(1/\varepsilon))$.
\end{theorem}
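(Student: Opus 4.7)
The plan is to mirror the structure of the proof of \Cref{thm:reduction-fptas-to-minconv}, but to replace the strong approximate convolution subroutine (\Cref{lem:max-conv-approx}) by a weak variant which reduces to \BMMaxConv{}. The paper already foreshadows such a routine as \Cref{lem:max-conv-weak-approx}, and my plan builds directly on that.

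First, I would preprocess the input by running \Cref{alg:preprocessing-2}. By \Cref{lem:preprocessing-profits-and-weights}, this produces a new instance $(\calI',W)$ with $p_{\min} > \varepsilon \OPT$ and $w_{\min} > \varepsilon W$, losing at most an $O(\varepsilon) \cdot \OPT$ additive factor in the optimum. In particular, any feasible solution on $\calI'$ contains at most $W/w_{\min} < 1/\varepsilon$ items, so by the Halving Lemma (\Cref{lem:splitting-lemma}) we have $\calP_{\calI'}[W] = \calP_{\calI',k}[W]$ for $k = \lceil \log (1/\varepsilon) \rceil$, and $\calP_{\calI',k}$ can be obtained by $k$ iterated self-convolutions starting from $\calP_{\calI',0}$, which has $O(n)$ steps.

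Second, I would imitate \Cref{alg:fptas}, but replace line~\ref{alg:fptas:line:conv-approx} by a call to \Cref{lem:max-conv-weak-approx} with error parameter $\varepsilon' = \Theta(\varepsilon/\log(1/\varepsilon))$. The crucial difference with the strong case is that we may now round weights, not just profits: by quantizing weight indices to a grid of width $\Theta(\varepsilon' \cdot W \cdot 2^{i-k})$ at level $i$, and by quantizing profit values to a geometric grid as in \cite{Chan18a}, each convolution at level $i$ becomes a $(\max,+)$-convolution of two monotone non-decreasing sequences of length $O(1/\varepsilon')$ with entries in $[O(1/\varepsilon')] \cup \{-\infty\}$ (after appropriate shifting). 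This is precisely \BMMaxConv{} on sequences of length $\widetilde{O}(1/\varepsilon)$, solvable in time $T(\widetilde O(1/\varepsilon)) = \widetilde O(T(1/\varepsilon))$ by the niceness assumptions on $T$.

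Third, I would analyze error propagation. The $O(\log 1/\varepsilon)$ iterations compose multiplicatively on the profit side, so the $\varepsilon'$ profit slack telescopes to $(1+\varepsilon'\log(1/\varepsilon))^{O(1)} = 1+O(\varepsilon)$. On the weight side, each level of self-convolution can introduce an additive overshoot of $\Theta(\varepsilon' \cdot W \cdot 2^{i-k})$ due to rounding down the two summands; summing this geometric series over $i \le k$ gives a total overshoot of $O(\varepsilon' W \log(1/\varepsilon)) = O(\varepsilon W)$, which is exactly what weak approximation allows. The total running time is $O(\log(1/\varepsilon)) \cdot \widetilde O(T(1/\varepsilon)) + O(n) = \widetilde O(n + T(1/\varepsilon))$.

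The main obstacle I expect is the second point: verifying carefully that the weight rounding can be set up so that (a) both input sequences at each convolution are bounded and monotone non-decreasing, and (b) the induced weight errors accumulate only additively across the $O(\log 1/\varepsilon)$ levels rather than multiplicatively. The right choice is to use a weight grid whose step grows with the level~$i$ in proportion to the target weight $W \cdot 2^{i-k}$ at that level; this keeps each per-level sequence of length $\widetilde O(1/\varepsilon)$ while guaranteeing that the geometric sum of overshoots stays within $\varepsilon W$. Once this is set up correctly, the correctness and running-time calculations follow the same template as the proof of \Cref{lem:guarantee-fptas}.
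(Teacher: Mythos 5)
Your proposal follows the same algorithmic route as the paper: preprocess via \Cref{alg:preprocessing-2} to get $p_{\min} > \varepsilon\OPT$, $w_{\min} > \varepsilon W$, then run repeated squaring for $\lceil \log 1/\varepsilon \rceil$ iterations with the weak-approximate convolution of \Cref{lem:max-conv-weak-approx} replacing exact convolutions, using error parameter $\varepsilon' = \varepsilon/\log(1/\varepsilon)$. The only real difference is how the error accumulation is packaged: the paper introduces the transitive notion of weak sequence approximation (\Cref{defn:weak-approx-sequence}, \Cref{lem:error-propagation}) and treats weight overshoot multiplicatively, whereas you track an additive per-level weight overshoot — your ``geometric series'' explanation is slightly off (the additive errors from level $i{-}1$ double when passing to level $i$, so the sum is really $\sum_i \varepsilon' W = O(\varepsilon' W \log(1/\varepsilon))$ of equal terms, not a decaying geometric series), but the conclusion $O(\varepsilon W)$ is correct either way.
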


Note that~\Cref{thm:weak-fptas} follows as an immediate corollary by plugging in
Chi, Duan, Xie and Zhang's algorithm (\Cref{thm:convolution-bounded}).

\subsubsection{The Algorithm}

Our approach is similar to the FPTAS from the previous section. The main difference is that since now we can overshoot the 
weight constraint by an additive term of $\varepsilon W$, we can afford to round the weights of the items. By doing so, 
we can adapt the algorithm from~\Cref{lem:max-conv-approx} to use an algorithm for \BMMaxConv{}. 

We start by specifying the notion of approximation for monotone sequences we will be working with. 
In the following definition, it is helpful to think of the indices as weights and the entries as profits.
\begin{definition}[Weak Approximation of a Sequence]\label{defn:weak-approx-sequence}
    We say that a sequence $A[0,\dots,n']$ \emph{weakly} $(1+\varepsilon)$-approximates a sequence $B[0,\dots,n]$ if 
    the following two properties hold:
    \begin{itemize}
        \item[(i)] (Good approximation for every entry) For every $j \in [n]$, there exists $j' \in [n']$ with $A[j'] \geq B[j]/(1 + \varepsilon)$ and 
            $j' \leq (1 + \varepsilon)j$.
        \item[(ii)] (No approximate entry is better than a true entry) For every $j' \in [n']$, there exists $j \in [n]$ such that 
        $A[j'] \leq B[j]$ and $j' \geq j$.
    \end{itemize}
\end{definition}

The following lemma shows that our notion of weak approximations is transitive. In particular, it implies 
that a weak approximation of the $(\max,+)$-convolution of two weakly approximated sequences gives a (slightly worse)
weak approximation of the convolution of the original sequences.

\begin{lemma}\label{lem:error-propagation}
    If $A$ weakly $(1+\varepsilon)$-approximates $B$ and $B$ weakly $(1 + \varepsilon')$-approximates $C$, 
    then $A$ weakly $(1 + \varepsilon)(1 + \varepsilon')$-approximates $C$.
\end{lemma}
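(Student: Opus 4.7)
The plan is to verify the two clauses of Definition~\ref{defn:weak-approx-sequence} for the pair $(A,C)$ independently, in each case by chaining the corresponding clause through the intermediate sequence $B$. Both clauses are existential statements of the form ``for every index on one side, there exists a matching index on the other side satisfying a value inequality and an index inequality'', so chaining them through $B$ should compose the two multiplicative factors into $(1+\varepsilon)(1+\varepsilon')$.

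For clause~(i), I would start with an arbitrary valid index $k$ of $C$. Since $B$ weakly $(1+\varepsilon')$-approximates $C$, clause~(i) of that approximation supplies an index $j$ of $B$ with $B[j] \geq C[k]/(1+\varepsilon')$ and $j \leq (1+\varepsilon')\,k$. Then, since $A$ weakly $(1+\varepsilon)$-approximates $B$, clause~(i) of that approximation supplies an index $j'$ of $A$ with $A[j'] \geq B[j]/(1+\varepsilon)$ and $j' \leq (1+\varepsilon)\,j$. Multiplying the two value inequalities yields $A[j'] \geq C[k]/((1+\varepsilon)(1+\varepsilon'))$, and multiplying the two index inequalities gives $j' \leq (1+\varepsilon)(1+\varepsilon')\,k$, which is exactly clause~(i) for $A$ vs.\ $C$ with parameter $(1+\varepsilon)(1+\varepsilon')$.

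For clause~(ii), I would run the same argument in the reverse direction: start with an arbitrary valid index $j'$ of $A$, invoke clause~(ii) of ``$A$ approximates $B$'' to obtain an index $j$ of $B$ with $A[j'] \leq B[j]$ and $j' \geq j$, then invoke clause~(ii) of ``$B$ approximates $C$'' to obtain an index $k$ of $C$ with $B[j] \leq C[k]$ and $j \geq k$. Transitivity of $\leq$ on both chains then yields $A[j'] \leq C[k]$ and $j' \geq k$. Note that clause~(ii) does not involve any multiplicative slack, so no error bookkeeping is needed here.

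I do not foresee any real obstacle: the argument is a clean two-step composition, and the only thing to be careful about is applying the two approximation relations in the correct direction so that the intermediate index $j$ of $B$ really is a valid input to the next application of the definition. The multiplicative error parameter appears only in clause~(i), and its composition into $(1+\varepsilon)(1+\varepsilon')$ is immediate from multiplying the two corresponding inequalities.
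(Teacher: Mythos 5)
Your proof is correct and takes essentially the same approach as the paper: chain clause~(i) of Definition~\ref{defn:weak-approx-sequence} forward through $B$ to compose the two multiplicative factors, and chain clause~(ii) backward through $B$ using transitivity of $\leq$.
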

\begin{proof}
    First we check that $A$ and $C$ satisfy property (i) of \Cref{defn:weak-approx-sequence}. Fix an entry $C[k]$. Since $B$ 
    weakly $(1 + \varepsilon')$-approximates $C$, by property (i) there exists an index $j$ such that $B[j] \geq C[k]/(1+\varepsilon')$
    and $j \leq (1+\varepsilon')k$. Thus, since $A$ weakly $(1+\varepsilon)$-approximates $B$ we apply property (i) once more to 
    conclude that there is an index $i$ such that $A[i] \geq B[j]/(1+\varepsilon) \geq \tfrac{C[k]}{(1+\varepsilon)(1+\varepsilon')}$
    and $i \leq (1+\varepsilon)j \leq (1+\varepsilon)(1+\varepsilon') k$. 

    We can similarly check that $A$ and $C$ satisfy property (ii). Fix an entry $A[i]$ and apply property (ii) to obtain an 
    index $j$ such that $A[i] \leq B[j]$ and $i \geq j$. Applying property (ii) once more to $j$ yields an index $k$ such that 
    $B[j] \leq C[k]$ and $j \geq k$. Combining, we conclude that $A[i] \leq C[k]$ and $i \geq k$, as desired.
\end{proof}

We extend~\Cref{lem:max-conv-approx} to give a weak approximation of the $(\max,+)$-convolution of two sequences.
Given two pairs of integers $(w,p),(w',p')$,
we say that $(w,p)$ \emph{dominates} $(w',p')$ if $w \leq w'$ and $p > p'$. Given a list of pairs $D = [(w_1,p_1),\dots,(w_m, p_m)]$, 
we can remove all dominated pairs from $D$ in near-linear time for example by sorting.

\begin{lemma}\label{lem:max-conv-weak-approx}
    Let $A[0,\dots,N], B[0,\dots,N]$ be two monotone non-decreasing sequences with $t_1$ and $t_2$ steps, respectively.
    Let $p_{\max}, p_{\min}$ be the maximum and minimum non-zero values in $A, B$ and let 
    $w_{\min} = \min\{i \mid A[i] \neq 0 \text{ or } B[i] \neq 0\}$. Then, for any $\varepsilon \in (0,1)$ a monotone non-decreasing sequence $C$ 
    which is a weak $1 + O(\varepsilon)$-approximation of $(A \oplus B)[0,\dots,2N]$ and consists of 
    $O(1/\varepsilon \cdot \log(p_{\max}/p_{\min}) \cdot \log(N/w_{\min}))$ steps can be computed in time
    \[
        O((t_1 + t_2 + T(1/\varepsilon)) \cdot \log(p_{\max}/p_{\min}) \log(N/w_{\min}))
    \]
    where $T(n)$ is the time needed to compute \BMMaxConv{} on sequences of length $n$.
\end{lemma}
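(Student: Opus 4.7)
The plan is to adapt Chan's approximation framework (Lemma~\ref{lem:max-conv-approx}) by exploiting the additional $(1+\varepsilon)$-slack allowed on weights, which enables reduction to \BMMaxConv{} instances rather than general $(\max,+)$-convolutions.

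First, I partition the output weight axis into $O(\log(N/w_{\min}))$ geometric buckets $[2^\beta w_{\min}, 2^{\beta+1} w_{\min})$. Since $k = i+j$ with $k < 2^{\beta+1} w_{\min}$ forces $i, j < 2^{\beta+1} w_{\min}$, the output in bucket $\beta$ depends only on $A, B$ truncated at $2^{\beta+1} w_{\min}$. Within bucket $\beta$, I round weights to multiples of $\delta_\beta := \varepsilon \cdot 2^\beta w_{\min}$ by defining $\widetilde A^{(\beta)}[i] := A[\lfloor i\delta_\beta \rfloor]$ for $i \in \{0,\dots,\lceil 2/\varepsilon \rceil\}$, and analogously $\widetilde B^{(\beta)}$. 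These are monotone non-decreasing sequences of length $O(1/\varepsilon)$, and the rounding introduces at most $\delta_\beta \leq \varepsilon k$ weight error for any output weight $k$ in bucket $\beta$, which is well within the weak-approximation slack on weights. Extracting the rounded sequences from the step representations of $A, B$ costs $O(t_1 + t_2)$ per bucket.

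Next, I apply Chan's algorithm (Lemma~\ref{lem:max-conv-approx}) to $\widetilde A^{(\beta)} \oplus \widetilde B^{(\beta)}$, plugging in \BMMaxConv{} as the subroutine. The point is that Chan's reduction decomposes the input into profit buckets of ratio $(1+\varepsilon)$, so after appropriate scaling the convolution instances it produces have entries in $\{-\infty\} \cup [O(1/\varepsilon)]$ and are monotone non-decreasing (inherited from $\widetilde A^{(\beta)}, \widetilde B^{(\beta)}$); these are precisely \BMMaxConv{} instances of size $O(1/\varepsilon)$. This yields a strong $(1+\varepsilon)$-approximation $C^{(\beta)}$ of $\widetilde A^{(\beta)} \oplus \widetilde B^{(\beta)}$ with $O((1/\varepsilon)\log(p_{\max}/p_{\min}))$ steps in time $O(T(1/\varepsilon)\log(p_{\max}/p_{\min}))$. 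Concatenating the restrictions of the $C^{(\beta)}$'s to their respective output-weight ranges gives the final sequence $C$, of the claimed total size $O((1/\varepsilon)\log(p_{\max}/p_{\min})\log(N/w_{\min}))$ and running time $O((t_1+t_2+T(1/\varepsilon))\log(p_{\max}/p_{\min})\log(N/w_{\min}))$.

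For correctness, weight rounding and Chan's strong profit approximation each contribute a multiplicative error of $(1+\varepsilon)$; by Lemma~\ref{lem:error-propagation}, $C$ weakly $(1+O(\varepsilon))$-approximates $A \oplus B$, and rescaling $\varepsilon$ by a constant absorbs the blow-up. The main obstacle I anticipate is verifying both properties of Definition~\ref{defn:weak-approx-sequence} consistently across bucket boundaries: property (ii) requires every constructed step $(w', p')$ of $C$ to satisfy $p' \leq (A \oplus B)[w'']$ for some $w'' \leq w'$, which is ensured because $\widetilde A^{(\beta)}[i] = A[\lfloor i \delta_\beta \rfloor]$ is a valid profit for weight $i\delta_\beta$; property (i) requires, for every original index $k$, a witness in some $C^{(\beta)}$ at weight $\leq (1+\varepsilon)k$, which is arranged by picking the bucket containing $k$ and rounding the optimal $(i^*, j^*)$ up to the grid, costing at most $2\delta_\beta \leq 2\varepsilon k$ in additive weight. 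Careful bookkeeping at the interfaces, together with the composition guarantee of Lemma~\ref{lem:error-propagation}, gives the weak approximation globally.
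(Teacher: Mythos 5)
Your proposal takes essentially the same route as the paper: partition the weight axis into $O(\log(N/w_{\min}))$ geometric scales, round weights so the index axis has length $O(1/\varepsilon)$, round profits in geometric scales so the entries are bounded by $O(1/\varepsilon)$, and then invoke \BMMaxConv{}; the paper's proof is precisely this plan with the roundings carried out simultaneously, iterating over pairs $(p^*,w^*)$ and constructing $A',B'$ with indices in $\{\lceil w/(\varepsilon w^*)\rceil\}$ and values $\{\lfloor p/(\varepsilon p^*)\rfloor\}$. The one place your proposal is thinner than the paper is the black-box appeal to Lemma~\ref{lem:max-conv-approx}: as stated, that lemma is parametrized by general $(\min,+)$-convolution time, so to replace it by \BMMaxConv{} time you must open Chan's construction and confirm that, after your weight pre-rounding to an $O(1/\varepsilon)$-point grid, the (dual/thresholded) sequences it convolves do have both indices and entries in $[O(1/\varepsilon)]$ and remain monotone. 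This is true, but it is exactly what the paper's proof writes out explicitly rather than inheriting as a black-box guarantee; your proof should either expand Chan's construction to check this, or — more cleanly — do what the paper does and construct the $O(1/\varepsilon)\times O(1/\varepsilon)$ rounded step lists directly for each $(p^*,w^*)$, verifying properties (i) and (ii) of Definition~\ref{defn:weak-approx-sequence} via the two claims (a), (b) about dominating steps. Your rounding direction (floor on the weight axis via $\widetilde A^{(\beta)}[i]:=A[\lfloor i\delta_\beta\rfloor]$, and then rounding $(i^*,j^*)$ up to the grid) is fine and gives the same $\leq 2\delta_\beta \leq 2\varepsilon k$ slack; the error-propagation and running-time accounting are also correct.
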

\begin{proof}
    Let $C := A \oplus B$. For this proof it will be convenient to work directly with the steps of the monotone 
    sequences $A, B$ and $C$. So suppose we are given $A$ as a list of steps $A = [(w_1, p_1), \dots, (w_{t_1}, p_{t_1})]$
    where each step $(w, p)$ represents an entry $p = A[w]$ at which $A$ changes, i.e., $A[w-1] < A[w]$ (or $w = 0$).  
    Similarly, $B$ and $C$ are represented as lists of steps.

    Fix integers $p^*, w^* > 0$ and define a sequence of steps $A'$ by keeping every step $(w,p) \in A$ with $w \leq w^*$ and $p \leq p^*$
    and rounding it to $(\lceil w/(\varepsilon w^*) \rceil, \lfloor p / (\varepsilon p^*) \rfloor)$.
    Let $B'$ be defined analogously.
    Compute $C' := A' \oplus B'$, and scale each step $(w', p') \in C'$ back, i.e., 
    set $w := w' \cdot (\varepsilon w^*)$ and $p := p \cdot (\varepsilon p^*)$, obtaining a sequence of steps $C''$. We claim that:

    \begin{itemize}
        \item[(a)] For every step $(w, p) \in C$ for which $w^*/2 \leq w \leq w^*$ and $p^*/2 \leq p \leq p^*$,
        there is a step $(w'', p'') \in C''$ such that $w'' \leq (1 +4\varepsilon)w$ and 
        $p'' \geq p/(1 + 8\varepsilon)$.
        \item[(b)] For every step $(w'', p'') \in C''$ there is a step 
        $(w, p) \in C$ such that $w'' \geq w$ and $p'' \leq p$.
    \end{itemize}

    To see (a), fix one such step $(w,p) \in C$ and let $(w_a, p_a) \in A, (w_b, p_b) \in B$ be the corresponding steps 
    such that $w = w_a + w_b$ and $p = p_a + p_b$. Note that $(w_a, p_a)$ has a corresponding rounded step 
    $(w_a', p_a') \in A'$ since $w \leq w^*$ and $p \leq p^*$ imply that 
    $(w_a, p_a)$ were included (after rounding) to $A'$. Similarly, $(w_b, p_b)$ has a corresponding step $(w_b', p_b') \in B'$. Hence $C'[w_a' + w_b'] \geq p_a' + p_b'$,
    and therefore there is a step $(w_c'', p_c'') \in C''$ such that 
    \[
        w_c'' \leq (w_a' + w_b') \cdot \varepsilon w^* \leq (w/(\varepsilon  w^*) + 2) \cdot \varepsilon w^* \leq (1 + 4\varepsilon)w,    
    \]
    where we used the definition of $w'_a, w'_b$ in the second inequality, and $w^*/2 \leq w$ in the last one.
    Similarly, we have that
    \[
        p_c'' 
            \geq (p_a' + p_b') \cdot \varepsilon p^* 
            \geq (p/(\varepsilon p^*) - 2) \cdot \varepsilon p^* 
            \geq (1 - 4\varepsilon)p
            \geq p / (1 + 8\varepsilon),
    \]
    where in the last step we assumed $\varepsilon \le \frac 18$, which is without loss of generality. This proves~(a). 
    
    Part (b) follows because each step $(w'', p'') \in C''$ is a (scaled) sum of steps $(w_a', p_a') \in A'$ and 
    $(w_b', p_b') \in B'$ which by construction are rounded steps $(w_a, p_a) \in A, (w_b, p_b) \in B$. Thus,
    \begin{align*}
        w'' &= (w_a' + w_b') \cdot \varepsilon w^* 
            = (\lceil w_a/(\varepsilon w^*) \rceil + \lceil w_b/(\varepsilon w^*) \rceil) \cdot \varepsilon w^*
            \geq w_a + w_b, \\
        p'' &= (p_a' + p_b') \cdot \varepsilon p^* 
        = (\lfloor p_a/(\varepsilon p^*) \rfloor + \lfloor p_b/(\varepsilon p^*) \rfloor) \cdot \varepsilon p^*
        \leq p_a + p_b,
    \end{align*}
    which implies that there is a step $(w, p) \in C$ such that $w \leq w_a + w_b \leq w''$ and 
    $p \geq p_a + p_b \geq p''$.

	\smallskip
    The claim implies that if we apply the described procedure for every $w_{\min} \leq w^* \leq 2N$ and $p_{\min} \leq p^* \leq p_{\max}$
    which are powers of 2 and combine the results by keeping the set of non-dominated steps, we obtain a weak 
    $(1 + O(\varepsilon))$-approximation of $C$.
    Indeed, part (a) implies property (i) of \Cref{defn:weak-approx-sequence} and part (b)
    implies property (ii).
    The overall procedure is summarized in~\Cref{alg:weak-approx-convolution}.

    Now we analyze the running time. For each $p^*$ and $w^*$, we spend $O(t_1 + t_2)$ time to construct $A'$ and $B'$ in lines 6 and 7. 
    The key step is the computation of $C'$ in line 9. Note that the steps constructed in line 6 for $A'$ define a monotone 
    sequence $A'[1,\dots,\lfloor 1/\varepsilon \rfloor]$ where $A'[i] := \max\{p' \colon (w', p') \in A', w' \leq i\}$, 
    and $B'[1,\dots,\lfloor 1/\varepsilon \rfloor]$ is defined similarly. Thus, $A', B'$ are monotone non-decreasing 
    sequences of length $O(1/\varepsilon)$, and due to the rounding their entries are bounded by $O(1/\varepsilon)$. Thus, we can compute 
    $C'$ in time $T(1/\varepsilon)$ and the overall running time of the algorithm is
    $O((t_1 + t_2 + T(1/\varepsilon)) \cdot \log(p_{\max}/{p_{\min}})\log(N/w_{\min}))$.
\end{proof}

\begin{algorithm}
    \caption{Given monotone non-decreasing sequences $A, B$ represented as lists of steps, the algorithm 
    returns the steps of a weak $(1 + O(\varepsilon))$-approximation of $A \oplus B$.}\label{alg:weak-approx-convolution}
    \begin{algorithmic}[1]
        \State $S := [ \;]$
        \For{$i = 0, 1, 2, \dots, \lceil \log (p_{\max} / p_{\min}) \rceil$}
            \State $p^* := 2^i \cdot p_{\min}$
            \For{$j = 0, 1, 2, \dots, \lceil \log (N/w_{\min}) \rceil$}
                \State $w^* := 2^j \cdot w_{\min}$
                \State $A' := \big[
                    (\lceil \tfrac{w}{\varepsilon \cdot w^*} \rceil, 
                    \lfloor \tfrac{p}{\varepsilon \cdot p^*} \rfloor)
                    \mid (w,p) \in A, \; w \leq w^*, p \leq p^*\big]$
                \State $B' := \big[
                    (\lceil \tfrac{w}{\varepsilon \cdot w^*} \rceil,
                    \lfloor \tfrac{p}{\varepsilon \cdot p^*} \rfloor) 
                    \mid (w,p) \in B, \; w \leq w^*, p \leq p^*\big]$
                \State Remove dominated pairs from $A'$ and $B'$
            \State Compute $C' := A' \oplus B'$ using~\Cref{thm:convolution-bounded}
                \State $C'' := [(w \cdot \varepsilon w^*, p \cdot \varepsilon p^*) \mid (w,p) \in C']$
                \State Append $C''$ to $S$
            \EndFor
        \EndFor
        \State Remove dominated pairs in $S$ and sort the remaining points
        \State \Return $S$
    \end{algorithmic}
\end{algorithm}

\bigskip

Now we are ready to give the main algorithm. The idea is the same as in the FPTAS in~\Cref{alg:fptas},
except that we replace the \MaxConv{} computations, i.e., we use~\Cref{lem:max-conv-weak-approx} instead of~\Cref{lem:max-conv-approx}.
We will use the following notation. Given a sequence $A[0,\dots,n]$, and an integer $i \geq 1$ we denote by 
$(A[0,\dots,n])^i$ the result of applying $2^i$ $(\max,+)$-convolutions of $A$ with itself, i.e.
\[
    (A[0,\dots,n])^i := \underbrace{A[0,\dots,n] \oplus \dots \oplus A[0,\dots,n]}_{2^i\text{ times}}.
\]
If $i = 0$, we set $(A[0,\dots,n])^0 := A[0,\dots,n]$. Note that the length of the resulting sequence is $2^i \cdot n$.

\begin{algorithm}
    \caption{$\textsc{Weak-FPTAS}(\calI, W)$: Given an instance $(\calI, W)$ of \UnboundedKnapsack{}, 
    the algorithm returns a weak $(1 + O(\varepsilon))$-approximation of $(\calP_{\calI,0}[0,\dots,W])^{\lceil \log 1/\varepsilon \rceil}$}\label{alg:weak-fptas}
    \begin{algorithmic}[1]
        \State Initialize $S_0 := \calP_0[0,\dots,W]$, stored implicitly
        \For{$i = 1, \dots,\lceil \log 1/\varepsilon \rceil$}
            \State Approximate 
            $S_i := S_{i-1} \oplus S_{i-1}$ using~\Cref{lem:max-conv-weak-approx} with error parameter $\varepsilon' := \varepsilon/\log(1/\varepsilon)$
        \EndFor
        \State \Return $S_{\lceil \log 1 / \varepsilon \rceil}$
    \end{algorithmic}
\end{algorithm}

\begin{lemma}\label{lem:guarantee-weak-fptas}
    Let $(\calI, W)$ be an instance of \UnboundedKnapsack{} on $n$ items with $p_{\min} > \varepsilon \OPT$ and $w_{\min} > \varepsilon W$.
    Then, on input $(\calI, W)$~\Cref{alg:weak-fptas} computes a weak
    $(1 + O(\varepsilon))$-approximation of $(\calP_{0}[0,\dots,W])^{\lceil \log 1/\varepsilon \rceil}$ in time $\widetilde{O}(n + T(1/\varepsilon))$, where $T(n)$ is the time needed to compute \BMMaxConv{} on length-$n$ sequences.
\end{lemma}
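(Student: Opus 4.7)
The plan is to mirror the structure of the proof of \Cref{lem:guarantee-fptas}, but using the weak approximation machinery (\Cref{defn:weak-approx-sequence}, \Cref{lem:error-propagation}, \Cref{lem:max-conv-weak-approx}) in place of the strong one. Correctness will follow by induction on $i$, and the running time from tracking how many steps the intermediate sequences $S_i$ can have and what range their weights and profits occupy.

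For correctness, I first need an auxiliary observation: if $A$ weakly $(1+\delta)$-approximates $B$, then $A\oplus A$ weakly $(1+\delta)$-approximates $B\oplus B$. Both parts of \Cref{defn:weak-approx-sequence} follow by picking the optimal pair for each output index. For property (i), given $(B\oplus B)[k]=B[i^*]+B[j^*]$, we use property (i) twice to get indices $i',j'$ with $A[i']\ge B[i^*]/(1+\delta)$, $i'\le (1+\delta)i^*$, and similarly for $j'$; then $(A\oplus A)[i'+j']$ is large enough and $i'+j'\le (1+\delta)k$. Property (ii) is symmetric, using the decomposition realizing $(A\oplus A)[k']$.

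With this in hand, I prove by induction on $i$ that $S_i$ weakly $(1+\varepsilon')^i$-approximates $(\calP_0[0,\dots,W])^i$. The base case $i=0$ is immediate. For the step: by the auxiliary observation, $S_{i-1}\oplus S_{i-1}$ weakly $(1+\varepsilon')^{i-1}$-approximates $(\calP_0)^i$; and \Cref{lem:max-conv-weak-approx} guarantees $S_i$ weakly $(1+\varepsilon')$-approximates $S_{i-1}\oplus S_{i-1}$; then \Cref{lem:error-propagation} composes these into a $(1+\varepsilon')^i$ factor. At $i=\lceil \log(1/\varepsilon)\rceil$ with $\varepsilon'=\varepsilon/\log(1/\varepsilon)$, the factor is $(1+\varepsilon')^{O(\log 1/\varepsilon)}=1+O(\varepsilon)$, as required.

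For the running time, I need to bound the three parameters controlling the cost of each call to \Cref{lem:max-conv-weak-approx}: the number of steps $t_1,t_2$ of $S_{i-1}$, the range $p_{\max}/p_{\min}$ of its nonzero values, and the range $N/w_{\min}$ of the support. Since any feasible solution uses at most $W/w_{\min}<1/\varepsilon$ items, by $p_{\min}>\varepsilon\OPT$ and $w_{\min}>\varepsilon W$ the nonzero profits in $S_i$ lie in $[p_{\min},\,2^i p_{\max}]$ and the nonzero supports in $[w_{\min},\,2^i W]$, so both logs are $O(\log(1/\varepsilon))$. By \Cref{lem:max-conv-weak-approx}, $S_i$ has $\widetilde O(1/\varepsilon)$ steps for every $i\ge 1$, so from iteration $2$ onwards $t_1,t_2=\widetilde O(1/\varepsilon)$, and each iteration costs $\widetilde O(T(1/\varepsilon))$; the first iteration additionally contributes $\widetilde O(n)$ because $S_0=\calP_0$ has $n$ steps (computable implicitly in time $O(n)$ by one sweep over $\calI$). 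Summing over $O(\log(1/\varepsilon))$ iterations and using the niceness assumption $T(\widetilde O(n))=\widetilde O(T(n))$ yields the claimed $\widetilde O(n+T(1/\varepsilon))$.

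The main obstacle I anticipate is the auxiliary preservation-under-convolution lemma for weak approximation: weak approximation is more subtle than strong, since approximating indices are allowed to shift, so one has to carefully verify both properties of \Cref{defn:weak-approx-sequence} rather than relying on a pointwise bound. Everything else is bookkeeping: confirming that the ranges of profits and weights at every iteration remain within a $\poly(1/\varepsilon)$ factor so that the $\log(p_{\max}/p_{\min})\log(N/w_{\min})$ overhead of \Cref{lem:max-conv-weak-approx} stays polylogarithmic in $1/\varepsilon$.
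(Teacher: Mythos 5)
Your proof is correct and follows essentially the same approach as the paper: induction on $i$ with the base case trivial and the inductive step combining \Cref{lem:max-conv-weak-approx} with \Cref{lem:error-propagation}, plus the same running-time bookkeeping. One thing worth noting: you explicitly state and verify the auxiliary fact that if $A$ weakly $(1+\delta)$-approximates $B$ then $A\oplus A$ weakly $(1+\delta)$-approximates $B\oplus B$, whereas the paper's inductive step invokes only \Cref{lem:error-propagation} (which is mere transitivity) and leaves this convolution-preservation step implicit. Since \Cref{lem:error-propagation} alone does not justify going from ``$S_{i-1}$ approximates $(\calP_0)^{i-1}$'' to ``$S_{i-1}\oplus S_{i-1}$ approximates $(\calP_0)^{i-1}\oplus(\calP_0)^{i-1}$'', your version is in fact a bit more careful than the paper's, and your argument for both properties of \Cref{defn:weak-approx-sequence} under self-convolution is the right one.
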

\begin{proof}
    First we argue correctness. We will show by induction that for every $i \geq 0$ it holds that $S_i$ is a weak 
    $(1 + \varepsilon')^i$-approximation of $(\calP_0[0\dots,W])^i$. For the base case $i = 0$, this holds by definition of $S_0$. 
    For the inductive step, take $i > 0$ and assume 
    the claim holds for $i - 1$. By the computation in line 3 of \Cref{alg:weak-fptas}, 
    $S_i$ weakly $(1 + \varepsilon')$-approximates $S_{i-1} \oplus S_{i-1}$. And by the inductive 
    hypothesis, $S_{i-1}$ is a weak $(1 + \varepsilon')^{i-1}$-approximation of $(\calP_0[0,\dots,W])^{i-1}$.
    Therefore, by~\Cref{lem:error-propagation} it follows that $S_i$ weakly $(1+\varepsilon')^i$-approximates 
    $(\calP_0[0,\dots,W])^{i-1} \oplus (\calP_0[0,\dots,W])^{i-1} = (\calP_0[0,\dots,W])^{i}$.
    Since $\varepsilon' = \varepsilon/\log(1/\varepsilon)$, we conclude that $S_{\lceil \log 1/\varepsilon \rceil}$
    is a weak $(1 + \varepsilon')^i = (1 + O(\varepsilon))$-approximation of $(\calP_{0}[0,\dots,W])^{\lceil \log 1/\varepsilon \rceil}$, as desired.

    Now we analyze the running time. The initialization in line 1 takes time $O(n)$. 
    Then, in each iteration of the for-loop we apply~\Cref{lem:max-conv-weak-approx} to approximate the 
    $(\max,+)$-convolutions. When $i = 1$ the input sequences have $O(n)$ steps, and we obtain a sequence 
    with $O(1/\varepsilon \log(p_{\max}/p_{\min}) \log(W/w_{\min}))$ steps which approximates $\calP_0[0,\dots,W] \oplus \calP[0,\dots,W]$.
    Continuing inductively, in the $i$-th iteration we approximate the $(\max, +)$-convolution of sequences with
    $O(n + 1/\varepsilon \log(2^i p_{\max}/p_{\min}) \log(2^i W/w_{\min}))$ steps. 
    Since $i \leq O(\log 1 /\varepsilon)$, and by assumption $p_{\min} > \varepsilon \OPT$ and $w_{\min} > \varepsilon W$
    it follows that the number of steps of the input sequences in every iteration are bounded by $\widetilde{O}(n + 1/\varepsilon)$.
    Therefore, each iteration takes time $\widetilde{O}(n + T(1/\varepsilon))$ and thus we can bound the overall running time by 
    \[
        \sum_{i = 1}^{\lceil \log \frac{1}{\varepsilon} \rceil} \widetilde{O}\left(n + \tfrac{1}{\varepsilon} + T(1/\varepsilon)\right) 
            = \widetilde{O}(n + T(1/\varepsilon)).
    \]
    Note that here we are using the niceness assumption that $T(\widetilde{O}(n)) = \widetilde{O}(T(n))$ as mentioned 
    in~\Cref{sec:niceness}.
\end{proof}

Now we can put things together to prove the main theorem.

\begin{proof}[Proof of~\Cref{thm:reduction-weakfptas-to-minconv}]
    Given a instance of $(\calI, W)$ of \UnboundedKnapsack{}, we first preprocess it with 
    \Cref{alg:preprocessing-2} and obtain a new instance $(\calI', W)$. Then, we run~\Cref{alg:weak-fptas} on 
    $(\calI', W)$ and obtain a sequence $S$. We claim that $\tilde{p} := S[\ell]$ where $\ell := (1 + O(\varepsilon)) W$ 
    (for a sufficiently large hidden constant given by the guarantee of~\Cref{lem:guarantee-weak-fptas}) 
    is the profit of a feasible solution with profit $\tilde{p} \geq \OPT/(1+O(\varepsilon))$ and weight at most $(1 + O(\varepsilon))W$,
    and that we can compute it in the desired running time.

    To see the claim, first note that after the preprocessing, it holds that 
    $p_{\min} > 2 \varepsilon P_0 \geq \varepsilon \OPT$ and $w_{\min} > \varepsilon W$.
    Hence, we can apply~\Cref{lem:guarantee-weak-fptas} and conclude that we can compute $S$ in the desired running time.
    To argue about correctness, let $x^*$ be the optimal solution for the original instance $(\calI, W)$, i.e., 
    $\calP_{\calI}[W] = p(x^*) = \OPT$.
    By~\Cref{lem:preprocessing-profits-and-weights}, there is a solution $\tilde{x}$ from $\calI'$ which satisfies 
    $p(\tilde{x}) \geq (1 - 8\varepsilon)p(x^*) = (1-8\varepsilon)\OPT$ and $w(\tilde{x}) \leq w(x^*) \leq W$. Thus,
    it suffices to argue that $S[\ell]$ corresponds to a solution from $\calI'$ which weakly-approximates $\tilde{x}$.
    
    Since the maximum number of items in any feasible solution for $(\calI', W)$ is at most $W/w_{\min} < 1/\varepsilon$, 
    we can apply \Cref{lem:splitting-lemma} inductively to obtain that
    \begin{equation} \label{eqn:squaring-guarantee}
        (\calP_{\calI',0}[0,\dots,W])^{\lceil \log 1/\varepsilon \rceil}[0,\dots,W] = \calP_{\calI'}[0,\dots,W].
    \end{equation}
    Now, \Cref{lem:guarantee-weak-fptas} guarantees that the output sequence $S$ is a 
    weak $(1+O(\varepsilon))$-approximation of $(\calP_{\calI',0}[0,\dots,W])^{\lceil \log 1 / \varepsilon \rceil}$.
    Therefore by combining \eqref{eqn:squaring-guarantee}, property (i) of~\Cref{defn:weak-approx-sequence} and monotonicity
    we have that $S_{\lceil \log 1/\varepsilon \rceil}[\ell] \geq \OPT/(1+O(\varepsilon))$. Moreover, by 
    property (ii) of~\Cref{defn:weak-approx-sequence} we know that this indeed corresponds to a solution from 
    $\calI'$ of weight at most $(1 + O(\varepsilon)) W$.
\end{proof}

\subsection{Solution Reconstruction}

For both our strong and weak approximation schemes presented in \Cref{sec:approx}, we only described how to obtain an approximation of the
\emph{value} of the optimal solution. We can also reconstruct the solution itself by computing the witnesses of each 
$(\max,+)$-convolution (see \Cref{sec:witnesses}), as we show in the following lemma.

\begin{lemma}
    A solution attaining the value given by~\Cref{alg:weak-fptas} can be found in time $\widetilde{O}(n + T(1/\varepsilon))$,
    where $T(n)$ is the time to compute \BMMaxConv{} on length-$n$ sequences.
\end{lemma}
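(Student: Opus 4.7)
The plan is to adapt the witness-tracing strategy of the exact \UnboundedKnapsack{} reconstruction (from \Cref{sec:exact-unboundedknapsack}) to the approximate setting by augmenting \Cref{alg:weak-approx-convolution} with witness information, tracing backward through the computation tree of \Cref{alg:weak-fptas} to recover a solution from $\calI'$, and then translating it back to the original item set $\calI$ by undoing the preprocessing.

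First I would augment \Cref{alg:weak-approx-convolution}. Whenever we invoke \Cref{thm:convolution-bounded} to compute $C' = A' \oplus B'$, I would also call \Cref{lem:witness-finding} to obtain a witness array, which adds only a polylogarithmic overhead. Since each step of $A'$ and $B'$ was obtained by rounding a unique step of $A$ and $B$, each step of $C''$ can be tagged with a pair of \emph{parent} steps in the original sequences, and these tags survive the union and dominance-removal operations. Propagating these tags through \Cref{alg:weak-fptas}, every step of $S_i$ receives two parent-step pointers into $S_{i-1}$, and every step of $S_0 = \calP_0[0,\dots,W]$ corresponds to a specific item of $\calI'$.

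Next I would perform the traceback. Let $\tilde p = S_{\lceil \log 1/\varepsilon \rceil}[\ell]$ be the value returned by the algorithm. Starting from this step and following parent pointers recursively through the $\lceil \log 1/\varepsilon \rceil$ levels of the computation tree, I collect a multiset $x$ of items of $\calI'$. The tree has branching factor two and depth $\lceil \log 1/\varepsilon \rceil$, hence at most $O(1/\varepsilon)$ leaves, and the traceback runs in time $\widetilde O(1/\varepsilon)$. Finally I translate $x$ back to a solution for $\calI$ by replacing each \emph{bundle} item introduced in \Cref{alg:preprocessing-1} or \Cref{alg:preprocessing-2} by $r$ copies of the underlying cheap/light item; by \Cref{lem:preprocessing-profits-and-weights} this translation preserves both profit and weight.

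The main obstacle is to verify that witness information can indeed be maintained consistently through all the transformations inside \Cref{alg:weak-approx-convolution} (rounding, truncation, union over all choices of $p^*, w^*$, and dominance removal) and across the $O(\log 1/\varepsilon)$ outer iterations of \Cref{alg:weak-fptas}, without breaking the $\widetilde O(n + T(1/\varepsilon))$ budget. The rounding stage is benign because the map sending each step of $A$ to its rounded image in $A'$ (and likewise for $B$) is essentially injective, so witnesses in $C'$ translate back to unique parents in $A$ and $B$; dominance removal only discards steps whose witnesses are no longer needed; and the total witness-finding cost is dominated, up to polylogarithmic factors, by the cost of the \BMMaxConv{} calls themselves. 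Adding the $\widetilde O(1/\varepsilon)$ cost of the traceback and the $O(n)$ cost of the final translation yields the claimed $\widetilde O(n + T(1/\varepsilon))$ runtime.
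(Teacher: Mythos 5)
Your proposal matches the paper's proof: both augment each \BMMaxConv{} call inside \Cref{lem:max-conv-weak-approx} with witness information via \Cref{lem:witness-finding}, propagate parent pointers through the levels of \Cref{alg:weak-fptas}, and trace back through the depth-$\lceil \log 1/\varepsilon \rceil$ binary computation tree in $O(1/\varepsilon)$ time. You are slightly more explicit about the bookkeeping under rounding/dominance-removal and about undoing the preprocessing to map the solution from $\calI'$ back to $\calI$, but the underlying argument is the same.
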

\begin{proof}[Proof Sketch]
    We run~\Cref{alg:weak-fptas}, but additionally compute witnesses for each convolution. More precisely,
    consider the sequence $S_i$ obtained at each iteration $i$ of~\Cref{alg:weak-fptas}. 
    Note that via~\Cref{lem:witness-finding} we can obtain the witness array for every step in $S_i$
    by computing the witnesses of each $(\max,+)$-convolution computed inside~\Cref{lem:max-conv-weak-approx},
    which only adds a polylogarithmic overhead to the running time.
    After doing this, for every step in $S_i$ we can obtain the corresponding two steps in $S_{i-1}$
    that define it in constant time (by doing a lookup in the witness arrays). 
    Thus, to reconstruct the optimal solution we proceed by obtaining the steps which 
    define the solution $S_{\lceil \log 1/\varepsilon \rceil}[(1 + O(\varepsilon))W]$, and recursively
    find the steps which define them in the previous level. Proceeding in this way, we eventually hit the entries 
    of $\calP_0$, which correspond to the items from $\calI'$ which correspond to the solution found.
    The correctness of this procedure follows simply because we trace back the computation which led to the output value
    and store the items found in the first level. Since we can lookup the witnesses in constant time 
    and at the $i$-th level of recursion we have $2^i$ subproblems, the running time is
    $O(\sum_{i \leq \log 1/\varepsilon}2^i) = O(1/\varepsilon)$.
\end{proof}

With an analogous proof we can show the same for the strong FPTAS:
\begin{lemma}
    A solution attaining the value given by~\Cref{alg:fptas} can be found in time $\widetilde{O}(n + T(1/\varepsilon))$,
    where $T(n)$ is the time to compute \MaxConv{} on length-$n$ sequences.
\end{lemma}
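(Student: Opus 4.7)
The plan is to mirror the reconstruction procedure already sketched for the weak FPTAS almost verbatim, swapping in witness-finding for general \MaxConv{} in place of the \BMMaxConv{} version. The first step is to augment Algorithm~\ref{alg:fptas} so that every invocation of Lemma~\ref{lem:max-conv-approx} additionally records, for each step of the output sequence, a pair of witness steps from the two input sequences. Standard witness-finding techniques (Alon--Galil--Margalit--Naor, Seidel) reduce witness finding for \MaxConv{} to \MaxConv{} itself with only a $\polylog$ overhead, so this augmentation preserves the running time $\widetilde{O}(n + T(1/\varepsilon))$.

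After running this augmented version of Algorithm~\ref{alg:fptas}, I locate the step of $S_{\lceil \log 1/\varepsilon \rceil}$ that realizes the returned value and walk the computation tree backwards: via the witness arrays, each step of $S_i$ points in $O(1)$ time to the two steps of $S_{i-1}$ whose sum produced it. Recursing yields a binary tree of depth at most $\lceil \log 1/\varepsilon \rceil$ and therefore $O(1/\varepsilon)$ nodes in total. The leaves are steps of $S_0 = \calP_{\calI', 0}$; since each such step is induced by a single item of $\calI'$ (it corresponds to the most profitable item of weight at most the step's index), the multiset read off the leaves is exactly the desired solution, and its profit equals $S_{\lceil \log 1/\varepsilon \rceil}[W]$.

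The main subtlety is ensuring that witness information survives the internal structure of Lemma~\ref{lem:max-conv-approx}, which rescales inputs and combines partial convolutions over a grid of thresholds $(p^*, w^*)$. This is handled by attaching witness arrays to each of the $O(\log(p_{\max}/p_{\min}))$ inner convolutions and, for every step of the merged output sequence, tagging which inner convolution produced it; inverting the rescaling is trivial and all overhead is polylogarithmic. Adding the $\widetilde{O}(n + T(1/\varepsilon))$ cost of the augmented algorithm to the $O(1/\varepsilon)$ cost of the backward traversal yields the claimed bound. Note that the items corresponding to the leaves live in $\calI'$, but by the guarantee of Lemma~\ref{lem:preprocessing-profits-and-weights} each item of $\calI'$ unpacks in $O(1)$ time into an equivalent bundle of original items in $\calI$, so the output is a valid solution for the input instance without affecting the running time.
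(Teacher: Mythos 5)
Your proposal is correct and takes essentially the same route as the paper, which itself proves this lemma only by pointing to the "analogous" argument given for the weak FPTAS: run the algorithm while computing witness arrays for each inner convolution, then trace the $O(1/\varepsilon)$-node computation tree backwards from the output step to the steps of $\calP_0$, which correspond to single items. Your additional remarks about tagging which inner convolution of Lemma~\ref{lem:max-conv-approx} produced a given output step, and about unbundling the aggregated items of $\calI'$ back into items of $\calI$ via Lemma~\ref{lem:preprocessing-profits-and-weights}, are sensible elaborations of details the paper leaves implicit.
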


\section{Hardness Results}\label{sec:hardness}
In this section we establish a reduction from \BMMaxConv{} to \UnboundedKnapsack{}. By combining this with 
our algorithms in previous sections, we establish an equivalence between \BMMaxConv{} and various \Knapsack{} problems.

\subsection{Retracing known reductions} \label{sec:hardnessretracing}

We follow the reduction from \MaxConv{} to \UnboundedKnapsack{} of~\cite{CyganMWW19} and~\cite{KunnemannPS17},
but starting from \BMMaxConv{} instead of general instances. In particular, we closely follow the structure 
of Cygan et al.'s proof. Their reduction proceeds in three steps using the following intermediate problems.
Note that we define the bounded monotone (BM) versions of their problems.

\defproblem
    {\BMMaxConvUpperBound{}}
    {Monotone non-decreasing sequences $A[0,\dots,n], B[0,\dots,n], C[0,\dots,2n]$ with entries in $[O(n)]$}
    {Decide whether for all $k \in [2n]$ we have $C[k] \geq \max_{i+j=k}A[i] + B[j]$}

\defproblem
    {\BMSuperAdditivity{}}
    {A monotone non-decreasing sequence $A[0,\dots,n]$ where each $A[i] \in [O(n)]$ for all $i \in [n]$} 
    {Decide whether for all $k \in [n]$ we have $A[k] \geq \max_{i+j=k}A[i]+A[j]$}

The first step is to reduce \BMMaxConv{} to its decision version \BMMaxConvUpperBound{}. The technique used in this 
step traces back to a reduction from $(\min,+)$-matrix product to negative weight triangle detection in graphs 
due to Vassilevska Williams and Williams~\cite{WilliamsW18}. Our proof is very similar to~\cite[Theorem 5.5]{CyganMWW19}, 
but we need some extra care to ensure that the constructed instances have bounded entries.

\begin{proposition}\label{prop:decision-to-search}
    Given an algorithm for \BMMaxConvUpperBound{} in time $T(n)$, we can \emph{find} a violated constraint
    $C[i + j] < A[i] + B[j]$ if it exists in time $O(T(n) \cdot \log n)$.
\end{proposition}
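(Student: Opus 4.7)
The plan is to perform a standard search-to-decision reduction via divide-and-conquer on the grid of index pairs $(i, j)$: at each round, split the instance into four sub-instances of roughly half the size, invoke the decision algorithm on each to locate one still containing a violation, and recurse into it.

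In more detail, set $m := \lfloor n/2 \rfloor$ and define the restricted sequences $A_L[i] := A[i]$ for $i \in [0, m]$ and $A_R[i] := A[i + m]$ for $i \in [0, n - m]$, with $B_L, B_R$ analogous. For each pair $(X, Y) \in \{L, R\}^2$, the sums $i + j$ from $A_X \oplus B_Y$ land in a shifted contiguous range of $C$, with shifts $0$ for $(L, L)$, $m$ for $(L, R)$ and $(R, L)$, and $2m$ for $(R, R)$. Letting $C_{XY}$ denote the corresponding shifted subarray of $C$, I obtain four \BMMaxConvUpperBound{} sub-instances $(A_X, B_Y, C_{XY})$ of size roughly $n/2$. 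Any violating pair $(i^*, j^*)$ of the original falls into exactly one sub-instance, selected by the two bits ``$i^* < m$?'' and ``$j^* < m$?''; running the decision algorithm on each sub-instance thus identifies one containing a violation, and I recurse into it. The base case at size~$1$ yields a single index pair, which I translate back to original coordinates via the accumulated shifts.

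The running time follows since the $O(\log n)$ recursion levels each perform four calls to the decision algorithm on an instance of length at most $n$, for a total of $O(T(n) \log n)$.

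The main point to verify, which is more a matter of bookkeeping than difficulty, is that each sub-instance really is a valid \BMMaxConvUpperBound{} instance. Monotonicity of $A_X$, $B_Y$ and $C_{XY}$ is automatic since these are restrictions of monotone non-decreasing sequences to contiguous index ranges, and the entry bound $[O(n)]$ survives (the constant hidden in the $O(\cdot)$ is simply inherited). Minor nuisances, such as $A_X$ and $B_Y$ having slightly unequal lengths when $n$ is odd, or $C_{XY}$ having length that does not exactly match $2$ times that of $A_X, B_Y$, can be addressed by padding on the right with the rightmost value, which preserves both monotonicity and the entry bound without introducing spurious violations.
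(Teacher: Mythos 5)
Your proposal is correct, but it takes a genuinely different route from the paper's. The paper observes that if $k^*$ is the smallest index of a violated constraint, then for every $k < k^*$ the prefix sub-instance $A[0,\dots,k], B[0,\dots,k], C[0,\dots,k]$ is a YES instance, so the set of prefix lengths with a violation is an up-set and a single binary search over prefix lengths locates $k^*$ with one oracle call per step. You instead perform a four-way divide-and-conquer on the grid of index pairs $(i,j)$, restricting $A$ and $B$ to halves and taking the corresponding shifted window of $C$; at each level you spend four oracle calls to locate a quadrant still containing a violation and recurse. Both give $O(T(n)\log n)$, but the paper's prefix-search is simpler: it never restricts $B$ separately from $A$, never shifts $C$, and needs only one call per level.

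One point in your write-up deserves more care. You claim the entry bound ``$[O(n)]$ survives'' for the sub-instances, but after a few recursion levels a sub-instance has length roughly $n/2^i$ while its entries may still be as large as $\Theta(n)$; relative to its \emph{own} length it is then no longer a bounded-monotone instance in the sense of the problem definition. Your time analysis implicitly handles this correctly — you charge each call at $T(n)$, not $T(n/2^i)$, which amounts to treating every padded sub-instance as a size-$n$ instance — but the phrasing suggests you believed the sub-instances shrink in the relevant parameter, which they do not. If one wanted each call to genuinely cost $T(n/2^i)$, one would have to additionally shift the entries (subtract the minimum of the restricted $A$ from $A$ and $C$, the minimum of the restricted $B$ from $B$ and $C$, and clip), exactly as the paper does inside the proof of Lemma~\ref{lem:maxconv-to-decision}. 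As stated, though, the $O(T(n)\log n)$ bound you claim is valid.
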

\begin{proof}
    Suppose $A, B, C$ form a NO-instance of \BMMaxConvUpperBound{} and let $k^* = i^* + j^*$ be the smallest index
    for which we have a violated constraint $C[k^*] < A[i^*] + B[j^*]$. Note that for any $k < k^*$ the prefixes 
    $A[0,\dots,k], B[0,\dots,k], C[0,\dots,k]$ form a YES instance of \BMMaxConvUpperBound{}. Thus, we can do binary 
    search over prefixes to find $k^*$.
\end{proof}

\begin{lemma}[$\BMMaxConv{} \rightarrow \BMMaxConvUpperBound{}$]\label{lem:maxconv-to-decision}
    If \styleprob{BMMaxConv UpperBound} can be solved in time $T(n)$, then 
    \BMMaxConv{} can be solved in time $\widetilde{O}(n \cdot T(\sqrt{n}))$.
\end{lemma}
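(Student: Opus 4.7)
The plan is to follow the reduction of Cygan et al.~\cite[Theorem 5.5]{CyganMWW19} from \MaxConv{} to \MaxConvUpperBound{}, but with care to ensure that every instance we hand to the decision oracle is itself bounded and monotone. The overall strategy splits into three parts: a tile decomposition of $A$ and $B$, a sub-instance solver that invokes the \BMMaxConvUpperBound{} oracle at size $\sqrt n$, and a final merge.

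For the decomposition, I would use the monotonicity and bounded entries of $A,B$ to carve each of them into $O(\sqrt n)$ contiguous \emph{tiles} such that every tile has both index-length and value-span at most $\sqrt n$. Since the ``staircase graph'' of a monotone function from $[n]$ to $[O(n)]$ can be covered by $O(\sqrt n)$ axis-aligned $\sqrt n\times \sqrt n$ boxes, this is achieved by placing a boundary whenever either the index or the value has advanced by $\sqrt n$ from the previous boundary. This produces $O(n)$ tile pairs $(A_i, B_j)$, each of which, after shifting its values by the tile's base value, becomes a \BMMaxConv{} instance of size $O(\sqrt n)$ with entries in $[O(\sqrt n)]$. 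The contribution of the pair $(A_i,B_j)$ to $C = A \oplus B$ lands in a known output window of length $O(\sqrt n)$, so the final merge (taking pointwise maxima over the $O(\sqrt n)$ pairs covering each output position) costs $O(n\sqrt n)$, well within the $\tilde O(n\,T(\sqrt n))$ budget.

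The heart of the proof is computing each sub-convolution $A_i \oplus B_j$ using only \BMMaxConvUpperBound{} queries at size $\sqrt n$, within amortized cost $\tilde O(T(\sqrt n))$ per pair. Following Cygan et al., I would start with a trivial upper-bound candidate $C_{ij}$ and repeatedly use Proposition~\ref{prop:decision-to-search} to locate a violated constraint, patch it, and re-verify. Crucially, before each decision call I would re-monotonize and re-bound the candidate (values are trivially in $[O(\sqrt n)]$ after shifting, and taking a running maximum of the patched candidate keeps it monotone without decreasing correctness of the upper-bound test). To avoid paying $\Omega(\sqrt n)$ decision calls per sub-instance, I would batch witness-findings via monotonicity: with $O(\log n)$ binary-search-like decision calls on carefully chosen monotone step candidates, one can extract all witnesses of an entire ``level set'' of the output at once, similar to the VV--Williams min-plus/triangle reduction that the paper cites.

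The step I expect to be the main obstacle is exactly this batched extraction of witnesses while preserving bounded-monotone form of the candidates. The naive approach of calling Proposition~\ref{prop:decision-to-search} once per output position costs $\tilde O(\sqrt n\,T(\sqrt n))$ per pair, which over the $O(n)$ pairs blows up to $\tilde O(n^{1.5} T(\sqrt n))$; saving the $\sqrt n$ factor requires genuinely exploiting the monotone and bounded structure of each sub-instance. Assuming this batching works (as it does in~\cite{CyganMWW19} for general \MaxConv{}), summing the $O(n)$ sub-instance costs and the $O(n\sqrt n)$ merge gives total randomized running time $\tilde O(n\,T(\sqrt n))$, as claimed.
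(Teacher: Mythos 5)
Your tile decomposition matches the paper's: mark a boundary every $\lceil\sqrt n\rceil$ indices and every $\lceil\sqrt\Delta\rceil$ in value, giving $O(\sqrt n)$ pieces of $A$ and $B$ each with index-length and value-span $O(\sqrt n)$, hence $O(n)$ tile pairs. That part is fine.

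The gap is exactly the one you flag and then assume away: you propose to compute each sub-convolution $A_{I_x}\oplus B_{J_y}$ to completion and then merge by pointwise maxima, but a single tile pair can have $\Theta(\sqrt n)$ distinct output values, so fully resolving it seems to require $\Theta(\sqrt n)$ decision queries, and no per-pair ``batching'' brings this down to $\widetilde O(1)$ queries. The reason Cygan et al. (and the paper) avoid this is that they do \emph{not} compute per-tile convolutions at all. Instead, they maintain a single global candidate sequence $C[0,\dots,2n]$ and a global marking array $M[0,\dots,2n]$, and do one outer simultaneous binary search on $C$. Within a binary-search round, they iterate over tile pairs and repeatedly call \Cref{prop:decision-to-search} on a shifted, re-monotonized local view of $C$ (built with a ``next unmarked index'' pointer so already-marked positions are skipped); each call either certifies the current pair has no new violation or marks one more output index. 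The amortization is \emph{global across tile pairs}: each output index $k\in[2n]$ is marked at most once per round, so the total number of decision queries per round is $O(n)$ (one certifying call per pair plus one per newly marked index), not $O(\sqrt n)$ per pair. This is precisely what saves the $\sqrt n$ factor. In your framing, the per-pair sub-convolutions would recompute overlapping information $O(\sqrt n)$ times (each output position is covered by $O(\sqrt n)$ pairs, as you note), and there is no way to amortize that away while treating pairs independently. So ``assume the batching works as in Cygan et al.'' is not a safe assumption: Cygan et al.'s batching is the global marking scheme, which is structurally incompatible with independently computing each $A_{I_x}\oplus B_{J_y}$. To fix the proof, replace the per-pair-compute-then-merge plan with the global binary search over $C$, the global marked array $M$, the $\mathrm{next}(\cdot)$ pointer to build a monotone local candidate $C'$ with values shifted into $[O(\sqrt n)]$, and the amortization argument above.
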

\begin{proof}
    Let $A[0,\dots,n], B[0,\dots,n]$ be an input instance of \BMMaxConv{}. We will describe a procedure which given a sequence $C$ of 
    length $2n+1$, outputs for each $k \in [2n]$ whether $C[k] \geq \max_{i+j=k} A[i] + B[j]$. Given this procedure, 
    we can determine all entries of $A \oplus B$ by a simultaneous binary search using $C$ in $O(\log n)$ calls.
    Since $A, B$ are monotone non-decreasing, the sequence of guessed values $C$ will remain monotone in all iterations.

    We split the input sequences of \BMMaxConv{} as follows. Let $\Delta = O(n)$ be the largest value in $A$ and $B$. Given an interval $I \subseteq [n]$, we denote by 
    $A_I$ the contiguous subsequence of $A$ indexed by $I$.
    Among the indices $i \in [n]$, we mark every multiple of $\lceil \sqrt{n} \rceil$. We also mark the smallest index $i$ with $A[i] \ge j \cdot \lceil \sqrt{\Delta} \rceil$, for every integer $1 \le j \le \sqrt{\Delta}$. Then we split $A$ at every marked index, obtaining subsequences $A_{I_0},\ldots,A_{I_{a}}$ for $a \le n' := \lceil \sqrt{n} \rceil + \lceil \sqrt{\Delta} \rceil$. 
    We analogously construct intervals $J_0,\dots,J_b$ with $b \leq n'$ which partition $B$.
    
    Let $C[0,\dots,2n]$ be the sequence which we use to binary search the values of $A \oplus B$. Recall that our goal 
    is to determine for each $k$ whether $C[k] \geq \max_{i+j=k} A[i] + B[j]$. We will describe an iterative procedure which 
    returns a binary array $M[0,\dots,2n]$ where $M[k] = 1$ if we have determined that $C[k] < \max_{i+j=k} A[i] + B[j]$, and $M[k] = 0$
    otherwise.
    
    Initialize $M[k] = 0$ for all $k \in [2n]$. Iterate over each $(x,y) \in [a] \times [b]$. We now describe how to check if 
    $A[i] + B[j] \leq C[i + j]$ holds for all $i \in I_x$ and $j \in I_y$, or otherwise find a violated constraint using the oracle
    for \BMMaxConvUpperBound{}. Let $L := \min A_{I_x} + \min B_{J_y}$ and $U := \max A_{I_x} + \max B_{J_y}$. 
    We proceed in the following two steps:
    
    \begin{enumerate}
        \item Identify all indices $k \in I_x + I_y$ for which $C[k] < L$. Since $A[i] + B[j] \geq L$ 
        for every $i \in I_x, j \in J_y$, we conclude that $C[k] < \max_{i+j=k} A[i] + B[j]$, and thus set $M[k] = 1$
        for every such index.
        \item Add extra dummy entries $C[2n+1] := \infty$, and $M[2n+1] = 0$. Let $\mathrm{next}(k) := \min\{j \mid j \geq k, M[j] = 0\}$. 
        We construct a new sequence $C'$ by setting for every index $k \in I_x + I_y$:
        \[
            C'[k] := \min\{C[\mathrm{next}(k)], U + 1\}.
        \]
        The purpose of the dummy entries is to set $C'[k] := U + 1$ if there is no $k \leq j \leq 2n$ for which $M[j] = 0$.
        Due to the monotonicity of $C$, it follows that 
        $C'$ is also monotone. Further, it holds that $L \leq C'[k] \leq U + 1$ for every entry $k$.
        Intuitively, the purpose of $\mathrm{next}(k)$ is to ignore the entries which already have been marked 
        $M[k] = 1$ in step 1, or in previous iterations.

        Now we want to find a violating index $C'[i + j] < A_{I_x}[i] + B_{J_y}[j]$ if it exists, using 
        the \BMMaxConvUpperBound{} oracle. In order to do so, we shift the values of $A_{I_x}, B_{J_y}$ and 
        $C'$ appropriately. More precisely, let $L_A := \min A_{I_x}$ and $L_B := \min B_{J_y}$ be the lowest numbers in $A_{I_x}, B_{J_y}$ respectively.
        We substract $L_A$ from every element in $A_{I_x}$, $L_B$ from 
        every element in $B_{J_y}$ and $L_A + L_B$ from every element in $C'$. This makes all entries bounded by 
        $O(n') = O(\sqrt{n})$, and a violating index in the resulting instance is a violating index before shifting since 
        we substract the same quantity from both sides of the inequality $C'[i + j] < A_{I_x}[i] + B_{J_y}[j]$.
        Thus, we can use \Cref{prop:decision-to-search} to find a violating index $k = i + j$ if it exists,
        and if so, set $M[\mathrm{next}(k)] := 1$.
    \end{enumerate}
    
    We repeat step 2 until we find no more violating indices (recomputing the sequence $C'$ in each iteration). 
    Then we repeat the process with the next pair $(x,y)$.

    We claim that in this way, we correctly compute the array of violated indices $M$. To see correctness,
    note that step 1 is trivially correct due to the lower bound on any pair of sums. For step 2, note that if 
    we find a violating index $k=i+j$ and $\mathrm{next}(k) = k$, then setting $M[k] = 1$ is clearly correct. 
    If $\mathrm{next}(k) \neq k$, it means that we had marked the index $M[k] = 1$ in a previous iteration or in step 1. 
    Due to the monotonicity of $A, B$ and $C$ and the definition of $z := \mathrm{next}(k)$, we have that 
    \[
        C[z] = C'[k] = C'[i+j] < A[i] + B[j] \leq \max_{i'+j'=k}A[i']+B[j'] \leq \max_{i' + j' = z} A[i'] + B[j'],
    \]
    so marking $M[z] = 1$ is correct. Conversely, for any index $k=i+j$ such that 
    $C[k] < A[i]+B[j]$ at some iteration we will consider the subsequences where $i \in I_x$ and $j \in I_y$ holds.
    Since we repeat step 2 until no more violating indices are found, the index $k$ will be marked during some iteration.

    Finally we analyze the running time of the reduction. We consider $O(n)$ pairs $x,y \in [a] \times [b]$. 
    For each such pair, step 1 takes time $O(n') = O(\sqrt{n})$. We execute step 2 at least once for each pair $(x,y)$, and in total 
    once for every index that we mark as violated. Since every 
    index $k$ gets marked at most once, this amounts to $O(n)$ calls to \Cref{prop:decision-to-search}, which results in $O(n \log n)$ calls to the oracle of \BMMaxConvUpperBound{}. Each such call takes time $O(T(\sqrt{n}))$. Since we do a simultaneous binary search over $C$, the overall time of the reduction 
    is $O(n \cdot T(\sqrt{n}) \log^2 n)$.
\end{proof}

The next step is to reduce \BMMaxConvUpperBound{} to \BMSuperAdditivity{}. The proof is exactly the same as 
\cite[Theorem 5.4]{CyganMWW19}; we include it for completeness.

\begin{lemma}[$\BMMaxConvUpperBound{} \rightarrow \BMSuperAdditivity{}$]
    If \BMSuperAdditivity{} can be solved in time $T(n)$, then \BMMaxConvUpperBound{} can be solved in time $O(T(n))$.
\end{lemma}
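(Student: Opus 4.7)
The plan is to adapt the reduction from \MaxConv{} upper bound to superadditivity from \cite[Theorem 5.4]{CyganMWW19} verbatim, and verify that when the input to \BMMaxConvUpperBound{} consists of bounded, monotone non-decreasing sequences, the constructed \BMSuperAdditivity{} instance inherits both of these properties.

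Concretely, given $A, B, C$ of length $O(n)$ with entries in $[O(n)]$, I would produce a single sequence $D$ of length $N = O(n)$ in which three disjoint index blocks carry shifted copies of $A$, $B$, and $C$. The blocks are placed at positions with shifts $\alpha$, $\beta$, $\gamma$ chosen so that $\alpha + \beta = \gamma$, ensuring that whenever $i$ lies in the $A$-block and $j$ in the $B$-block, the sum $i + j$ lands in the $C$-block. The intervening positions of $D$ are filled with a carefully calibrated monotone ``ramp'' whose growth between consecutive positions is large enough to make every superadditivity constraint $D[i+j] \geq D[i] + D[j]$ that does \emph{not} pair an $A$-block index with a $B$-block index trivially satisfied. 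For the remaining ``interesting'' triples, the ramp contributions cancel because $\alpha + \beta = \gamma$ and the inequality collapses to $C[i+j-\gamma] \geq A[i-\alpha] + B[j-\beta]$, which is exactly the \BMMaxConvUpperBound{} constraint we wanted to check. The equivalence of YES instances then follows verbatim from Cygan et al.

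Once the construction is in place, the two BM properties need to be verified. Monotonicity follows because each of $A$, $B$, $C$ is monotone non-decreasing by assumption, the ramp between blocks is monotone by construction, and the shifts between adjacent blocks are calibrated to make $D$ monotone across block boundaries. Boundedness of the entries in $[O(N)] = [O(n)]$ follows because every entry of $D$ is the sum of an $O(n)$-sized value from $A$, $B$ or $C$ and an $O(n)$-sized shift contribution. Given all this, we can invoke the assumed \BMSuperAdditivity{} algorithm on $D$ in time $T(N) = O(T(n))$ via the niceness assumption from \Cref{sec:niceness}.

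The point I expect to require the most care is ensuring that the ramp and block shifts simultaneously (i) dominate all $O(n)$-sized variation of the $A$, $B$, $C$ values so that the non-interesting superadditivity constraints hold with room to spare, and (ii) keep every entry of $D$ within $[O(n)]$. This balance is precisely possible in the bounded monotone regime because the total length of the input and its entry range are of the same order; verifying it reduces to checking the constants in Cygan et al.'s construction and confirming that no step multiplies entries by more than a constant factor. No new ideas beyond this bookkeeping should be necessary, which matches the paper's remark that the proof is ``exactly the same'' as in the unstructured case.
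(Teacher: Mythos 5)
Your proposal captures the right idea and the same overall route as the paper: reuse Cygan et al.'s encoding of $A,B,C$ into disjoint blocks of a single sequence shifted by offsets $\alpha,\beta,\gamma$ with $\alpha+\beta=\gamma$, and observe that because the entry range is $O(n)$, these offsets can be taken $O(n)$ as well, so the result is still a \BMSuperAdditivity{} instance. One small mismatch with the actual construction: there is no increasing ``ramp'' filling gaps between blocks. The paper's $E$ consists of a block of zeros on $[0,n]$ followed immediately by the $A$-, $B$-, $C$-blocks shifted by $\Delta$, $4\Delta$, $5\Delta$ (with $\Delta$ the maximum entry); the uninteresting superadditivity constraints are handled not by ramp growth but by (a) the zero block plus monotonicity when $i\le n$, and (b) the fact that when both $i,j$ lie in the $A$-block, $E[i]+E[j]\le 4\Delta \le E[i+j]$ because of the gap between the shift $\Delta$ and the shift $4\Delta$. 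So the ``carefully calibrated ramp'' you anticipate isn't needed, and trying to build one could over-complicate the argument; otherwise your key observation — that in the bounded monotone regime the shifts stay $O(n)$, which is exactly what makes this a verification rather than a new construction — is correct.
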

\begin{proof}
    Let $\Delta = O(n)$ be the maximum number in the input sequences $A,B,C$ of \BMMaxConvUpperBound{}. We construct  
    an equivalent instance $E[0,\dots,4n+3]$ of \BMSuperAdditivity{} as follows: for each $i \in [n]$ set $E[i] := 0$,
    $E[n+1 + i] := \Delta + A[i]$, $E[2n+2 + i] := 4\Delta + B[i]$ and $E[3n+3 + i] := 5\Delta + C[i]$. Note that 
    $|E| = O(n)$ and all values are bounded by $O(\Delta) = O(n)$.

    If there are indices $i,j \in [n]$ such that $A[i] + B[j] > C[i + j]$, then $E[n + 1 + i] + E[2n + 2 + j] > E[3n + 3 + i + j]$,
    so $E$ is a NO instance. Otherwise, let $i \leq j$ with $i + j \leq 4n+3$ be any pair of indices. 
    If $i \leq n$, then since $E[i] = 0$ we have $E[i] + E[j] \leq E[i+j]$. So assume $i > n$. If $j \in [n+1, 2n+1]$ then
    $E[i] + E[j] \leq 4\Delta \leq E[i + j]$. Hence, since $i \leq j$ and $i + j \leq 4n+3$ we have that 
    $i \in [n+1, 2n+1]$ and $j \in [2n+2, 3n+2]$. For these ranges, the super-additivity of $E$ corresponds
    exactly to the \BMMaxConvUpperBound{} condition of $A,B,C$.
\end{proof}

Finally, we reduce \BMSuperAdditivity{} to \UnboundedKnapsack{}. The idea is essentially the same 
as~\cite[Theorem 5.3]{CyganMWW19}, but in their proof they construct instances of \UnboundedKnapsack{} where the maximum profit could be as large as $p_{\max} = O(n^2)$; we improve this to $p_{\max} = O(n)$. (More precisely,
the variable $D$ is defined as $D := \sum_i A[i]$ in their proof, while we show that it is enough to set $D = O(A[n])$.)

\begin{lemma}[$\BMSuperAdditivity{} \rightarrow \UnboundedKnapsack{}$]
    There is a linear-time reduction from \BMSuperAdditivity{} on sequences of length $n$ to an instance of 
    \UnboundedKnapsack{} on $n$ items with $W, \OPT = O(n)$.
\end{lemma}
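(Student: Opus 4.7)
The plan is to adapt the reduction of Cygan et al.~\cite[Theorem 5.3]{CyganMWW19} from $\BMSuperAdditivity{}$ to $\UnboundedKnapsack{}$. Their reduction takes a monotone non-decreasing sequence $A[0,\dots,n]$ with entries in $[O(n)]$ and builds a knapsack instance whose $n$ items have weights and profits built from $i$, $A[i]$, and a large offset parameter $D$, with the capacity $W$ chosen together with $D$ so that $\OPT$ is forced into a structured form whose value encodes whether all superadditivity constraints of $A$ hold. A single comparison of $\OPT$ against a precomputed threshold then decides the instance. I would write out this construction essentially verbatim from~\cite{CyganMWW19}.

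The improvement over their proof is a tighter choice of $D$. Their argument sets $D := \sum_i A[i]$ as a convenient but loose upper bound; with entries in $[O(n)]$ this gives $D = O(n^2)$ and hence $p_{\max} = O(n^2)$. The key observation I would exploit is that $D$ only needs to dominate the profit of the particular bounded-cardinality ``wrong-shape'' solutions that the construction permits, and by monotonicity of $A$ each individual item contributes profit at most $A[n]$. Therefore $D := c\cdot A[n]$ for a small constant $c$ already suffices, giving $D = O(A[n]) = O(n)$. With this choice the resulting knapsack instance has $n$ items with weights and profits in $[O(n)]$ and capacity $W = O(n)$; moreover, since any feasible solution combines only a bounded number of items each of profit $O(n)$, we also have $\OPT = O(n)$.

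The main obstacle will be re-verifying the case analysis of Cygan et al.\ with the smaller $D$: each place where they invoke $D \geq \sum_i A[i]$ to rule out an off-structure solution has to be replaced by an argument that uses $D \geq c\cdot A[n]$ together with monotonicity of $A$ to absorb the slack. Once all the tightened inequalities go through, the reduction is still linear-time, since it reads $A$ once and writes out $n$ items plus a capacity, and meets all the stated parameter bounds, completing the proof.
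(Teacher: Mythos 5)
Your proposal matches the paper's approach exactly: both adapt the Cygan et al.\ construction and replace $D := \sum_i A[i]$ with $D = O(A[n])$, the key point being that after merging light items via superadditivity any optimal solution contains $O(1)$ items of profit at most $A[n]$ each. You flag ``re-verifying the case analysis'' as the remaining obstacle but do not carry it out; for the record, the paper does so by setting $D := 5A[n]$, $W := 2n+1$, with light items $(A[i],i)$ and heavy items $(D-A[i],W-i)$, and observes that heavy items have weight $> n$ so at most one fits, while the light items in an optimal solution can be assumed to have pairwise index-sums $> n$ (else merge two into one by superadditivity), bounding their number by a constant, so $\OPT = D$ for a YES instance and $\OPT \ge D+1$ for a NO instance via the triple $(A[i],i),(A[j],j),(D-A[i+j],W-i-j)$.
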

\begin{proof}
    Let $A[0,\dots,n]$ be an instance of \BMSuperAdditivity{}. We construct an equivalent instance of 
    \UnboundedKnapsack{} as follows. Set $D := 5 \cdot A[n]$ and $W := 2n + 1$. For every $i \in [n]$ we create 
    a \emph{light item} $(A[i], i)$ and a \emph{heavy item} $(D - A[i], W - i)$, where the first entry of each item
    is its profit and the second is its weight. We set the weight constraint to $W$.

    Suppose $A[0,\dots,n]$ is a NO instance. Then, there exist $i, j \in [n]$ such that $A[i] + A[j] > A[i + j]$.
    Hence, $(A[i], i), (A[j], j)$ and $(D - A[i + j], W - i - j)$ form a feasible solution for \UnboundedKnapsack{} with weight $W$ 
    and total profit $A[i] + A[j] + D - A[i+j] \geq D + 1$.

    On the other hand, suppose we start from a YES instance. Consider any feasible solution to the 
    \UnboundedKnapsack{} instance. We make two observations. First, if the solution contains any two light 
    items $(A[i], i)$ and $(A[j], j)$ with $i + j \leq n$, then since $A$ is a YES instance, we can replace both 
    items by $(A[i + j], i + j)$ and the profit is at least as good. The second observation is that any feasible 
    solution can contain at most one heavy item, since every such item has weight at least $n + 1$ and the weight
    constraint is $W = 2n + 1$. Now, suppose that $\OPT$ contains one heavy item of the form $(D - A[k], W - k)$. By 
    the first observation, the most profitable way of packing the remaining capacity is to include the item 
    $(A[k], k)$. Thus, in this case, we have that the value of $\OPT$ is $D - A[k] + A[k] = D$. If $\OPT$ does not
    contain any heavy item, note that by the first observation it consists of at most 4 light items. Therefore, its 
    value is at most $4 A[n] < D = 5 A[n]$. 
\end{proof}

Putting together the previous lemmas, we obtain the following theorem.

\begin{theorem}\label{thm:reduction-maxconv-to-knapsack}
    If \UnboundedKnapsack{} on instances with $n$ items with $W, \OPT = O(n)$
    can be solved in time $T(n)$, then \BMMaxConv{} on sequences of length $n$ can be solved in time 
    $\widetilde{O}(n\cdot T(\sqrt{n}))$.
\end{theorem}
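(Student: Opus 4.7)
The plan is to chain together the three reductions just established in this section: \BMMaxConv{} to \BMMaxConvUpperBound{} (\Cref{lem:maxconv-to-decision}), \BMMaxConvUpperBound{} to \BMSuperAdditivity{}, and \BMSuperAdditivity{} to \UnboundedKnapsack{}. Given a \BMMaxConv{} instance of length $n$, the first step produces $\widetilde{O}(n)$ queries to a \BMMaxConvUpperBound{} oracle on sequences of length $O(\sqrt{n})$, with $\widetilde{O}(n)$ additional bookkeeping. Each such query is transformed in linear time into a \BMSuperAdditivity{} instance of length $O(\sqrt{n})$, which in turn becomes an \UnboundedKnapsack{} instance with $O(\sqrt{n})$ items and $W, \OPT = O(\sqrt{n})$. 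Feeding each of the $\widetilde{O}(n)$ resulting \UnboundedKnapsack{} instances to the assumed $T$-time algorithm and invoking the niceness assumption $T(\widetilde{O}(m)) \le \widetilde{O}(T(m))$ yields the overall running time $\widetilde{O}(n \cdot T(\sqrt{n}))$.

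The only step requiring any attention is verifying that the instance sizes really do remain $O(\sqrt{n})$ throughout the chain. Inspecting the two linear-time reductions, the \BMSuperAdditivity{} instance obtained from a \BMMaxConvUpperBound{} instance of length $m$ with entries in $[O(m)]$ has length $4m+3$ with entries still in $[O(m)]$, and the \UnboundedKnapsack{} instance derived from a \BMSuperAdditivity{} instance of length $m$ has $2m$ items, capacity $W = 2m+1$, and $\OPT$ at most a constant multiple of the largest entry of the input sequence. Hence every intermediate parameter stays within a constant factor of $\sqrt{n}$, so the composition goes through and the stated running time follows. I do not anticipate any serious obstacle: essentially all of the work has been done in the three preceding lemmas, and this theorem is just their conjunction.
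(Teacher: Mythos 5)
Your proposal is correct and matches the paper's intended argument exactly: the theorem is stated immediately after the three reduction lemmas precisely because it follows by chaining them, and your size-tracking through the chain (length $O(\sqrt{n})$ for the \BMMaxConvUpperBound{} queries, length $O(\sqrt{n})$ with entries $[O(\sqrt{n})]$ for the \BMSuperAdditivity{} instances, and $O(\sqrt{n})$ items with $W,\OPT = O(\sqrt{n})$ for the final \UnboundedKnapsack{} instances) is the only point requiring verification. One minor quibble: the bookkeeping overhead in \Cref{lem:maxconv-to-decision} is actually $\widetilde{O}(n^{1.5})$ rather than $\widetilde{O}(n)$ (step 1 costs $O(\sqrt{n})$ per pair over $O(n)$ pairs), but since $T(\sqrt{n}) = \Omega(\sqrt{n})$ this is still dominated by the $\widetilde{O}(n \cdot T(\sqrt{n}))$ term, so the conclusion is unaffected.
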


We will also need the following reduction from \UnboundedKnapsack{} to \Knapsack{} from~\cite[Theorem 5.1]{CyganMWW19}; we include the proof for completeness.

\begin{lemma}[$\UnboundedKnapsack{} \rightarrow \Knapsack{}$]\label{lem:0-1knapsack-to-unbounded}
    If \Knapsack{} on instances with $n$ items and $W, \OPT = O(n)$ can be solved in 
    time $T(n)$, then \UnboundedKnapsack{} on instances with $n$ items and $W, \OPT = O(n)$ can be solved in time $\widetilde{O}(T(n))$.
\end{lemma}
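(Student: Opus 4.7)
The plan is to apply the standard binary-encoding reduction from \UnboundedKnapsack{} to \Knapsack{}. Given an \UnboundedKnapsack{} instance $(\calI, W)$ with $n$ items and $W, \OPT = O(n)$, I would replace each item $(p_i, w_i) \in \calI$ by $\lceil \log W \rceil + 1$ new items of the form $(2^k p_i, 2^k w_i)$ for $k = 0, 1, \ldots, \lceil \log W \rceil$, keeping the same weight budget $W$. The resulting \Knapsack{} instance then has $n' = O(n \log W) = O(n \log n)$ items.

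For correctness, I would verify that the optimal values of the two instances coincide by a bijection-style argument. In one direction, any multiset solution $x \in \N^n$ with $w(x) \le W$ satisfies $x_i \le W$ for each $i$ (since weights are at least $1$), hence admits a binary expansion $x_i = \sum_k b_{i,k} 2^k$ with $b_{i,k} \in \{0,1\}$ and only powers $2^k \le W$, and picking the corresponding new items yields a feasible 0-1 solution of the same profit and weight. In the other direction, any 0-1 solution on the new items collapses to a multiset solution for the original instance by setting $x_i$ to the sum of multiplicities selected for item $i$, again preserving profit and weight. Hence the two optima coincide.

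For the running time, the reduced instance satisfies $W' = W = O(n) = O(n')$ and $\OPT' = \OPT = O(n')$, so the hypothesis of the \Knapsack{} algorithm applies and yields running time $T(n') = T(O(n \log n)) = \widetilde{O}(T(n))$ by the niceness assumption $T(\widetilde{O}(n)) \le \widetilde{O}(T(n))$ from \Cref{sec:niceness}. There is essentially no obstacle beyond bookkeeping; the only subtle point is ensuring that the parameter constraint $W', \OPT' = O(n')$ is preserved after the blow-up in the number of items, which is immediate since $n'$ only grows while $W$ and $\OPT$ are unchanged.
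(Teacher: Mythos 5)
Your proof is correct and follows essentially the same route as the paper: the standard binary-encoding reduction, replacing each item by $O(\log W)$ scaled copies, with the same bijection-style correctness argument and the same observation that $W$ and $\OPT$ are preserved while $n$ grows only by a $\log$ factor, handled by the niceness assumption $T(\widetilde{O}(n)) \le \widetilde{O}(T(n))$.
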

\begin{proof}
    Let $\calI = \{(p_i, w_i)\}_{i \in [n]}$ with capacity $W$ be an instance of \UnboundedKnapsack{}.
    We construct an equivalent instance of \Knapsack{} with the item set 
    \[
        \calI' := \{(2^j p_i, 2^j w_i) \mid (p_i, w_i) \in \calI, \; 0 \leq j \leq \log W \},
    \]
    and the same capacity $W$. Let  $x \in \N^n$ be a solution of the \UnboundedKnapsack{} instance~$\cal I$, where 
    $x_i$ is the multiplicity of item $(p_i, w_i)$. We can construct an equivalent solution $x' \in \{0,1\}^{|\calI'|}$ 
    of the \Knapsack{} instance $\calI'$ by expressing each $x_i \leq W$ in binary and adding to $x'$ the items from $\calI'$ corresponding 
    to the non-zero coefficients. In this way, $w_{\calI}(x) = w_{\calI'}(x')$ and $p_{\calI}(x) = p_{\calI'}(x')$. 
    It is easy to see that this mapping can be inverted, which establishes the equivalence between the instances. 
    
    Note that this reduction does not change $W$ and $\OPT$, and $n$ increases to $n \log W = O(n \log n)$. Thus for \UnboundedKnapsack{} we obtain running time $O(T(n \log n)) = \widetilde{O}(T(n))$. Here we used the niceness assumption on $T(n)$.
\end{proof}

\subsection{Consequences}

In this section we combine the reductions from~\Cref{sec:hardnessretracing}, as well as our algorithms from \Cref{sec:approx}, \Cref{sec:exact-unboundedknapsack}, and \Cref{sec:exactzerooneknapsack}, to prove~\Cref{thm:equivalence}.

We start by showing how a weak approximation for \UnboundedKnapsack{} gives an exact algorithm for \BMMaxConv{}.

\begin{lemma}[$\BMMaxConv{} \rightarrow \text{Approximate }\UnboundedKnapsack{}$]\label{lem:reduction-maxconv-to-approxknapsack}
    If \UnboundedKnapsack{} has a weak approximation scheme running in time $T(n, \varepsilon)$,
    then \BMMaxConv{} can be solved in time $O(n \cdot T(\sqrt{n}, 1/\sqrt{n}))$.
\end{lemma}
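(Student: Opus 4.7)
The plan is to simply combine \Cref{thm:reduction-maxconv-to-knapsack} with the observation that for \UnboundedKnapsack{} instances in which all relevant quantities are bounded, a sufficiently fine weak approximation is automatically exact.

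First I would recall that \Cref{thm:reduction-maxconv-to-knapsack} (together with the chain of reductions in \Cref{sec:hardnessretracing}) shows that \BMMaxConv{} on length-$n$ sequences reduces to $\widetilde{O}(n)$ instances of \UnboundedKnapsack{}, each on $m = O(\sqrt{n})$ items with $W, \OPT = O(\sqrt{n})$. So it suffices to show that any such \UnboundedKnapsack{} instance can be solved \emph{exactly} via one call to the weak approximation scheme with $\varepsilon = \Theta(1/\sqrt{n})$.

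The key observation is the following. Let $(\calI, W)$ be an instance of \UnboundedKnapsack{} with $W, \OPT \le C \cdot m$ for some absolute constant $C$, and let $\varepsilon := 1/(2Cm + 1)$. Running the weak approximation scheme produces a solution $x$ with $p(x) \ge (1-\varepsilon)\OPT$ and $w(x) \le (1+\varepsilon)W$. But now $\varepsilon \cdot \OPT < 1$ and $\varepsilon \cdot W < 1$; since $p(x), w(x), \OPT, W$ are all integers, this forces $p(x) \ge \OPT$ and $w(x) \le W$. Thus $x$ is a feasible exact optimum, computed in time $T(m, \Theta(1/m))$.

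Plugging this into the statement of \Cref{thm:reduction-maxconv-to-knapsack}, exact \UnboundedKnapsack{} on $m$ items with $W, \OPT = O(m)$ can be simulated in time $T(m, \Theta(1/m))$; hence \BMMaxConv{} on length-$n$ sequences can be solved in time $\widetilde O(n \cdot T(\sqrt n, \Theta(1/\sqrt n)))$, absorbing the constant in $\varepsilon$ into the niceness assumptions on $T$. There is no real obstacle: the only thing to be careful about is to set the constant in $\varepsilon$ small enough relative to the hidden constant bounding $W, \OPT$ in the reduction of \Cref{thm:reduction-maxconv-to-knapsack}, so that the rounding argument above goes through.
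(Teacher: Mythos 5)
Your proposal is correct and matches the paper's approach: the paper's proof is a two-sentence observation that a weak approximation scheme with $\varepsilon = \Theta(1/(W+\OPT))$ solves \UnboundedKnapsack{} exactly, which is precisely the rounding argument you spell out ($\varepsilon\cdot\OPT<1$ and $\varepsilon\cdot W<1$ force $p(x)\ge\OPT$ and $w(x)\le W$ by integrality), and then plugs this into \Cref{thm:reduction-maxconv-to-knapsack} with $m=O(\sqrt n)$. You merely make explicit the elementary integrality step and the constant-tuning caveat that the paper leaves implicit.
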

\begin{proof}
    Note that a weak approximation scheme for \UnboundedKnapsack{} allows us to solve instances \emph{exactly} 
    by setting $\varepsilon := \Theta(1/(W + \OPT))$. Since the reduction from~\Cref{thm:reduction-maxconv-to-knapsack}
    produces $O(n)$ instances of \UnboundedKnapsack{} with $\sqrt{n}$ items and $W, \OPT = O(\sqrt{n})$, we obtain an algorithm 
    for \BMMaxConv{} by setting $\varepsilon = \Theta(1/\sqrt{n})$. The overall running time becomes $O(n \cdot T(\sqrt{n}, 1/\sqrt{n}))$.
\end{proof}

Finally, we put the pieces together to prove~\Cref{thm:equivalence}, which we restate for convenience:

\equivalence*

\begin{proof}
    In the following, we write $A \rightarrow B$ to denote a reduction from problem $A$ to problem $B$. For every reduction stated in the following
    list, we obtain the stronger guarantee that if $B$ can be solved in time 
    $\widetilde{O}(n^{2-\delta})$, then $A$ can be solved in time $\widetilde{O}(n^{2-\delta})$, i.e. without any loss in the exponent:
    \begin{itemize}
        \item $(2) \rightarrow (1)$: Follows from~\Cref{thm:reduction-unboundedknapsack-to-conv}.
        \item $(2) \rightarrow (3)$: Follows from~\Cref{lem:0-1knapsack-to-unbounded}.
        \item $(3) \rightarrow (1)$: Follows from~\Cref{lem:reduction-0-1knapsack-to-maxconv}.
        \item $(4) \rightarrow (1)$: Follows from~\Cref{thm:reduction-weakfptas-to-minconv}.
    \end{itemize}

    For the remaining two reductions below, we obtain the weaker guarantee as stated in the theorem. Namely, if $B$ can be solved 
    in time $\widetilde{O}(n^{2-\delta})$, then $A$ can be solved in time $\widetilde{O}(n^{2-\delta/2})$:

    \begin{itemize}
        \item $(1) \rightarrow (2)$: Follows from~\Cref{thm:reduction-maxconv-to-knapsack}.
        \item $(1) \rightarrow (4)$: Follows from~\Cref{lem:reduction-maxconv-to-approxknapsack}.
    \end{itemize}

    Finally, note that for any pair of problems $A$ and $B$ in the theorem statement, we can reduce $A$ to $B$ by chaining 
    the reductions written in the previous two lists and use at most one reduction which reduces the saving to $\delta/2$.
    This guarantees that if $B$ can be solved in time $\widetilde{O}(n^{2-\delta})$, then $A$ can be solved in time 
    $\widetilde{O}(n^{2-\delta/2})$, as desired.
\end{proof}

\bibliographystyle{alpha}
\bibliography{refs}

\appendix
\section{Missing Proofs from \Cref{sec:preliminaries}}\label{sec:proofs-preliminaries}

\begin{proof}[Proof of~\Cref{prop:maxconv-range}]
    By shifting the indices, we can assume that $A[I]$ and $B[J]$ are sequences $A'[0,\dots,|I|-1]$ and 
    $B'[0,\dots,|J|-1]$. Compute $C' := A' \oplus B'$ in time $T(|A| + |B|)$. By shifting the indices 
    back, we can infer the values of the entries $C[I + J] = A[I] \oplus B[J]$. Thus, we can simply read off the 
    entries in $C[K]$ from the array $C'$.
\end{proof}

\paragraph*{Equivalence between variants of \BMMaxConv{}}\label{sec:appendix-equivalence}

As stated in~\Cref{sec:preliminaries}, the algorithm of Chi, Duan, Xie and Zhang~\cite{ChiDXZ22} computes the 
$(\min,+)$-convolution of sequences which are monotone increasing and have values in $[O(n)]$. The following proposition
shows that this is equivalent to solving \MaxConv{} on monotone non-decreasing sequences with values 
in $[O(n)] \cup \{-\infty\}$, which justifies~\Cref{thm:convolution-bounded}.

\begin{proposition}
    \MaxConv{} on monotone non-decreasing sequences of length $n$ and values in $[O(n)] \cup \{-\infty\}$ is 
    equivalent to \MinConv{} on monotone increasing sequences of length $n$ and values in $[O(n)]$, in the 
    sense that if one can be solved in time $T(n)$ then the other can be solved in time $O(T(n))$.
\end{proposition}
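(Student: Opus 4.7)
\medskip

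\noindent\textbf{Proof plan.} The plan is to prove both directions by the same \emph{flip-and-negate} trick, together with a small prefix-trimming step to handle the $-\infty$ entries on the $\max$ side.

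For the direction from $(\min,+)$-convolution on monotone non-decreasing sequences $A,B$ of length $n$ with values in $[V]$, $V = O(n)$, to $(\max,+)$-convolution, I would set $A'[i] := V - A[n-i]$ and $B'[j] := V - B[n-j]$. Since $A,B$ are non-decreasing, both $A'$ and $B'$ are non-decreasing and take values in $[0,V] \subseteq [O(n)]$, so in particular no $-\infty$ entries are introduced. A one-line calculation with the substitution $i' = n-i$, $j' = n-j$ gives
\[
    (A' \oplus_{\max} B')[k] \;=\; 2V - (A \oplus_{\min} B)[2n - k],
\]
so a $\widetilde{O}(T(n))$ algorithm for \BMMaxConv{} yields one for the monotone increasing bounded $(\min,+)$-variant in the same time, up to an $O(n)$ post-processing pass.

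For the reverse direction, the input sequences $A,B$ for $(\max,+)$-convolution may contain $-\infty$. By monotonicity, the $-\infty$ entries form a prefix, say of length $k_A$ in $A$ and $k_B$ in $B$. First I would trim these prefixes, working with $\tilde A[i] := A[i + k_A]$ and $\tilde B[j] := B[j + k_B]$, which have length $\le n$ and finite values in $[0,O(n)]$; the output of $A \oplus_{\max} B$ is then reconstructed by shifting the indices of $\tilde A \oplus_{\max} \tilde B$ by $k_A + k_B$ and filling the initial entries with $-\infty$. With no $-\infty$ entries remaining, I apply the same flip-and-negate: let $m_A$, $m_B$ be the lengths of $\tilde A, \tilde B$, and set $A'[i] := V - \tilde A[m_A - i]$, $B'[j] := V - \tilde B[m_B - j]$, which are non-decreasing with values in $[0,V]$. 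Again by substitution,
\[
    (A' \oplus_{\min} B')[k] \;=\; 2V - (\tilde A \oplus_{\max} \tilde B)[m_A + m_B - k],
\]
which reduces \BMMaxConv{} to $(\min,+)$-convolution on monotone non-decreasing sequences with bounded values, in the same time up to an $O(n)$ overhead.

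The only real subtlety I anticipate is whether Chi--Duan--Xie--Zhang's phrasing requires \emph{strictly} increasing sequences as opposed to non-decreasing ones. If so, after the flip-and-negate step I would apply the standard perturbation $i \mapsto A'[i] + i$, noting that this preserves the order, keeps entries in $[O(n)]$ (since both $A'[i]$ and $i$ are $O(n)$), and lets us recover the original convolution by subtracting $k$ from entry $k$ of the output. Aside from this bookkeeping, the two directions are symmetric and each reduction runs in the claimed $O(T(n))$ time.
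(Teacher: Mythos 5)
Your proof is correct in spirit and uses the same flip-and-negate core as the paper, but the two arguments diverge in how they dispose of the $-\infty$ entries, and there is a small gap there. The paper handles $-\infty$ by a value shift: replace $-\infty$ by $0$ and add $2\Delta$ to every finite entry, so the sequences stay the same length, remain non-decreasing, and the finite output entries can be detected by whether the convolution value exceeds $3\Delta$. You instead trim the $-\infty$ prefixes, which is also natural, but it produces $\tilde A$ and $\tilde B$ of lengths $m_A+1$ and $m_B+1$ that are in general different from each other and shorter than $n+1$. To feed these into a $(\min,+)$-oracle stated for length-$n$ sequences you need to pad them back up, and padding a non-decreasing sequence before a $(\min,+)$-convolution is not innocuous: repeating the last value introduces spurious small sums for indices $k \le m_A+m_B$, so you must pad with a value strictly larger than $2V$ (which is still $O(n)$, so the bound survives, but this needs to be said). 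The paper's shift avoids this altogether. A second, smaller point: you treat the $+i$ perturbation as a contingency in case Chi--Duan--Xie--Zhang require strictly increasing inputs, but the proposition you are proving already says ``monotone increasing,'' so that step is not optional --- it is required in the \BMMaxConv{}-to-$(\min,+)$ direction, exactly as the paper does with its $A'[i] := A[n-i]+i$ transformation. Once these two points are made explicit your argument is complete and matches the paper's conclusion.
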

\begin{proof}
    We first describe how to reduce \MaxConv{} on monotone non-decreasing sequences and values in $[O(n)] \cup \{-\infty\}$
    to \MinConv{} on monotone increasing sequences and values in $[O(n)]$ via a simple chain of reductions:
    \begin{itemize}
    \item \emph{Removing $-\infty$:}
    Let $A[0,\dots,n], B[0,\dots,n]$ be an instance of \MaxConv{} where $A, B$ are monotone non-decreasing and 
    $A[i], B[i] \in [O(n)] \cup \{-\infty\}$. We start by reducing it to an equivalent instance of \MaxConv{}  
    on monotone non-decreasing sequences and values in $[O(n)]$ (i.e. we remove the $-\infty$ entries).
    Let $\Delta$ be the maximum entry of $A$ and $B$. Construct a new sequence $A'[0,\dots,n]$
    where $A'[i] := 0$ if $A[i] = -\infty$, and $A'[i] := A[i] + 2\Delta$ otherwise. Construct $B'[0,\dots,n]$
    from $B$ in the same way. Note that $A'$ and $B'$ are monotone non-decreasing and have values in $[O(n)]$.
    Moreover, we can infer the values of any entry $(A \oplus B)[k]$ from $C'$: if $C'[k] \leq 3\Delta$
    then $(A \oplus B)[k] = -\infty$ and otherwise $(A \oplus B)[k] = C'[k] - 4\Delta$.

    \item \emph{Reducing to \MaxConv{} on non-increasing sequences:}
    Now we reduce an instance $A[0,\dots,n]$, $B[0,\dots,n]$ of \MaxConv{} on monotone non-decreasing
    sequences and values in $[O(n)]$ to an instance of \MinConv{} on monotone non-increasing sequences and values in $[O(n)]$.
    Let $\Delta$ be the maximum entry of $A$ and $B$. Construct two new sequences $A'$ and $B'$ by setting $A'[i] := \Delta - A[i]$ and $B'[i] := \Delta - B[i]$. Then 
    $A'$ and $B'$ are monotone non-increasing and given their $(\min,+)$-convolution we can easily infer the 
    $(\max,+)$-convolution of $A$ and $B$. 

    \item \emph{Reducing to \MinConv{} on increasing sequences:}    
    Next, we reduce an instance $A[0,\dots,n]$, $B[0,\dots,n]$ of \MinConv{} on monotone non-increasing sequences and 
    values in $[O(n)]$ to an instance of \MinConv{} on increasing sequences and values in $[O(n)]$. Construct two new sequences 
    $A'$ and $B'$ by reversing and adding a linear function to $A$ and $B$, i.e., set $A'[i] := A[n - i] + i$ and
    $B'[i] := B[n - i] + i$ for every $i \in [n]$. Note that $A'$ and $B'$ are monotone increasing sequences, 
    and given their $(\min,+)$-convolution we can infer the $(\min,+)$-convolution of $A$ and $B$.
    \end{itemize}

    Combining the reductions above, we conclude that \MaxConv{} on monotone non-decreasing sequences with values in $[O(n)] \cup \{-\infty\}$
    can be reduced in linear time to \MinConv{} on monotone increasing sequences with values in $[O(n)]$.
    To show the reduction in the other direction, we can apply the same ideas:
    we first negate the entries and shift them to make them non-negative, then reverse the resulting sequences. We omit the details.
\end{proof}

\section{Witnesses for \BMMaxConv{}}\label{sec:witnesses}

In this section we give the proof of~\Cref{lem:witness-finding}.
For the remainder of this section, fix an instance $A[0,\dots,n], B[0,\dots,n]$ of \BMMaxConv{} 
and let $C := A \oplus B$. To compute the witness array $M[0,\dots,2n]$, we first show that if an entry 
$C[k]$ has a unique witness then we can easily find it. Then we reduce to the unique witness case with 
randomization. This approach is well known~\cite{Seidel95,AlonGMN92}, but some extra care is needed to 
ensure that the instances remain monotone and bounded. The standard derandomization for this
approach~\cite{AlonN96} also works in our setting.

\begin{lemma}\label{lem:unique-witness}
    If \BMMaxConv{} on length-$n$ sequences can be computed in time $T(n)$, then in time $\widetilde{O}(T(n))$ we 
    can compute an array $U[0,\dots,2n]$ such that for every $k \in [2n]$ if 
    $C[k]$ has a unique witness $M[k]$ then $U[k] = M[k]$. The output $U[k]$ is undefined otherwise.
\end{lemma}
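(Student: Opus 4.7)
The plan is to adapt the classical ``unique witness via shift'' trick to the bounded monotone setting, where care is needed to preserve both monotonicity and the $[O(n)]$ bound on entries. The core idea is to extract the witness $i^*$ bit-by-bit by running $O(\log n)$ auxiliary \BMMaxConv{} computations.

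First, I would get rid of ties in slope by defining $A^*[i] := A[i] + i$ and $B^*[i] := B[i] + i$. Both sequences are strictly increasing and their entries still lie in $[O(n)]$, so they form a legitimate \BMMaxConv{} instance. A direct calculation gives $(A^* \oplus B^*)[k] = C[k] + k$, and the set of witnesses at index $k$ is unchanged; in particular $C[k]$ has a unique witness iff $C^*[k] := (A^* \oplus B^*)[k]$ does. I compute $C^*$ once using \BMMaxConv{} in time $T(n)$.

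Next, for each bit position $b = 0, 1, \dots, \lceil \log n \rceil$, I construct $A_b'[i] := 2 A^*[i] + i_b$ and $B'[i] := 2 B^*[i]$, where $i_b \in \{0,1\}$ is the $b$-th bit of $i$. Since $A^*$ is strictly increasing, $A_b'[i+1]-A_b'[i] \ge 2 - 1 = 1$, so $A_b'$ is monotone non-decreasing; and its entries still lie in $[O(n)]$. Hence I can compute $D_b := A_b' \oplus B'$ using \BMMaxConv{} in time $T(n)$. The key observation is that, because $A^*[i] + B^*[k-i]$ is integer-valued, a suboptimal $i$ contributes at most $2(C^*[k]-1) + 1 < 2 C^*[k]$ to $D_b[k]$, whereas the unique optimal witness $i^*$ contributes exactly $2 C^*[k] + i^*_b$. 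Thus, whenever $C[k]$ has a unique witness, $D_b[k] - 2 C^*[k] \in \{0,1\}$ recovers the $b$-th bit of $i^*$, and I can assemble $U[k] := \sum_b \bigl(D_b[k] - 2 C^*[k]\bigr) \cdot 2^b$. If $C[k]$ has no unique witness, $U[k]$ may take any value, which matches the ``undefined'' guarantee.

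The total cost is $O(\log n)$ \BMMaxConv{} calls of length $O(n)$ plus $O(n \log n)$ bookkeeping, which by the niceness assumption $T(\widetilde O(n)) \le \widetilde O(T(n))$ amounts to $\widetilde O(T(n))$. The one subtle point, and the step most deserving of care in writing the full proof, is verifying that the dominance argument ruling out ``bit flips'' at suboptimal indices uses only the strict separation by at least $1$ (afforded by the integer-valuedness of $A^*, B^*$) and the multiplier $2$ on the leading term; everything else is routine.
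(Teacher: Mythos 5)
Your proposal is correct and takes essentially the same route as the paper: extract the unique witness bit-by-bit by running $O(\log n)$ auxiliary \BMMaxConv{} computations on sequences of the form ``$2 \cdot (\text{original}) + (\text{linear shift}) + i_b$,'' with the linear shift preserving monotonicity and the doubling ensuring that the $i_b$ perturbation cannot change the maximizer. The only (minor) differences are cosmetic — you compute $C^*$ explicitly and read off $i^*_b = D_b[k] - 2C^*[k]$, whereas the paper reads the bit from $C_b[k] \bmod 2$ (which, strictly speaking, requires first correcting by $k \bmod 2$, a detail the paper elides); your variant sidesteps this cleanly.
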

\begin{proof}
    We will describe how to compute the unique witnesses bit by bit. For each bit position 
    $b \in [\lceil \log n \rceil]$ let $i_b \in \{0,1\}$ be the $b$-th bit of $i$.
    Construct sequences $A_b, B_b$ defined as $A_b[i] := 2 A[i] + i + i_b$ and $B_b[i] := 2 B[i] + i$ for $i \in [n]$. 
    Note that $A_b, B_b$ are still monotone non-decreasing and have entries bounded by $O(n)$.
    Compute $C_b := A_b \oplus B_b$. For each $k \in [2n]$, we set the $b$-th bit of 
    $U[k]$ to  $C_b[k] \bmod 2$. It is not hard to see that for those entries $C[k]$ which have unique witnesses $U[k]$,
    this procedure indeed gives the $b$-th bit of $U[k]$. Indeed, note that because we double every entry in $A_b, B_b$ 
    and add a linear function, adding $i_b$ does not 
    change the maximizer. Therefore, if $C[k]$ has a unique witness then $C'[k]$ has a unique witness whose $b$-th bit can be read 
    from the least significant bit of $C'[k]$. Thus, by repeating this over all bit positions $b \in [\lceil \log n \rceil]$, we compute 
    the entire array of unique witnesses $U$ using $O(\log n)$ invocations to \BMMaxConv{}, as desired.
\end{proof}

Now we prove the main lemma of this section which we restate for convenience.
\witnesses*
\begin{proof}[Proof of~\Cref{lem:witness-finding}]
    Fix a set $S \subseteq [n]$. We say that an entry $C[k]$ gets \emph{isolated} by $S$ if the number of witnesses 
    of $C[k]$ in $S$ is exactly one. We will now describe how to find the witnesses of all entries isolated by $S$ 
    (the idea and argument is similar as in the proof of~\Cref{lem:unique-witness}).
    Construct sequences $A', B'$ where for each $i \in [n]$ we set 
    \[
        A'[i] := \begin{cases}
                    2 A[i] + i + 1 &\text{if } i \in S \\
                    2 A[i] + i &\text{otherwise} 
        \end{cases}
    \]
    and $B'[i] := 2B[i] + i$. Note that these sequences are monotone non-decreasing and have entries bounded by $O(n)$.
    Let $C' := A' \oplus B'$. We claim that if an entry $C[k]$ is isolated by $S$, then $C'[k]$ has a unique witness.
    To see this, note that if no witness of $C[k]$ gets included in $S$, then we have that $C'[k] = 2C[k] + k$. If at least
    one witness gets included in $S$, then $C'[k] = 2C[k] + k + 1$. In particular, if a witness gets isolated then $C'[k]$
    will have a unique witness, as claimed. Thus, by~\Cref{lem:unique-witness} we can compute in time 
    $\widetilde{O}(T(n))$ an array $U[0,\dots,2n]$
    which contains the witnesses of all entries that are isolated  by $S$. Note that some entries of $U$
    might be undefined, but we can simply check in time $O(n)$ which entries of $U$ are true witnesses, by iterating over $k \in [2n]$ and checking whether the equality $C[k] = A[U[k]] + B[k - U[k]]$ holds.

    Now we show how to select appropriate sets $S$. Fix an entry $C[k]$ and denote by $R \in [n]$ its number of witnesses.
    We sample $S \subseteq [n]$ by including each element $i \in [n]$ independently with probability $p := 2^{-\alpha}$ where 
    $\alpha \in \N$ is chosen such that $2^{\alpha-2} \leq R \leq 2^{\alpha-1}$. 
    Let $X$ be the random variable counting the number of witnesses of 
    $C[k]$ that get sampled in $S$. By keeping the first two terms in the inclusion-exclusion formula we have that 
    $\Pr[X \geq 1] \geq p \cdot R - {R \choose 2}p^2$, and by a union bound $\Pr[X \geq 2] \leq {R \choose 2}p^2$. Thus, 
    \[
        \Pr[X = 1] = \Pr[X \geq 1] - \Pr[X \geq 2] \geq p\cdot R(1 - p\cdot R) \geq 1/8
    \]
    where the last inequality holds because $1/8 \leq p\cdot R \leq 1/4$ due to the choice of $\alpha$. In particular, 
    $S$ isolates $C[k]$ with probability at least $1/8$.

    We now put the pieces together. Iterate over the $O(\log n)$ possible values for $\alpha$. Sample a set $S$
    and find all witnesses of entries isolated by $S$ as described earlier in time $\widetilde{O}(T(n))$.
    As we argued above, if $C[k]$ has $R$ witnesses and $2^{\alpha-2} \leq R \leq 2^{\alpha - 1}$, then $C[k]$ gets isolated 
    with constant probability. Thus, by repeating this step with the same $\alpha$ for $O(\log n)$ freshly 
    sampled sets $S$ we find a witness for all such entries $C[k]$ with probability at least $1 - 1/\poly(n)$. Combining
    the results across iterations we obtain the array of witnesses $M[0,\dots,2n]$ in time $\widetilde{O}(T(n))$, as desired.

    Finally, we note that this procedure can be derandomized with standard techniques~\cite{AlonN96}.
\end{proof}

\end{document}